\newcommand{\secref}[1]{Section~\ref{#1}}
\newcommand{\figref}[1]{Figure~\ref{#1}}
\newcommand{\tabref}[1]{Table~\ref{#1}}
\newtheorem{theorem}{Theorem}
\newtheorem{lemma}{Lemma}
\newtheorem{proposition}{Proposition}
\newtheorem{corollary}{Corollary}
\newcommand{\remove}[1]{}
\def\vv{\boldsymbol{v}}
\def\uu{\boldsymbol{u}}
\def\DD{\boldsymbol{D}}
\def\BB{\boldsymbol{B}}
\def\II{\boldsymbol{I}}
\def\LL{\boldsymbol{\mathcal{L}}}
\def\TT{\boldsymbol{T}}
\def\RR{\boldsymbol{R}}
\def\VV{\boldsymbol{V}}
\def\WW{\boldsymbol{W}}
\def\AA{\boldsymbol{A}}
\def\ppi{\boldsymbol{\pi}}
\def\ttheta{\boldsymbol{\theta}}
\def\KK{\boldsymbol{K}}
\def\vol{{\rm vol}}
\def\ttheta{\boldsymbol{\theta}}
\def\KK{\boldsymbol{K}}
\def\cut{{\rm cut}}
\def\bvec#1{{\mbox{\boldmath $#1$}}}
\def\sbvec#1{{\mbox{\scriptsize \boldmath $#1$}}}
\def\calL{\boldsymbol{\mathcal{L}}}
\def\expec#1#2{\mbox{\bf E}_{#1}\left[ #2 \right]}
\title[Generalized Laplacian]{The Interplay Between Dynamics and Networks: Centrality, Communities, and Cheeger Inequality}
\author[Ghosh, et al.]
{RUMI GHOSH\\Robert Bosch, LLC.\\
 KRISTINA LERMAN\\University of Southern California, Information Sciences Institute\\
 SHANG-HUA TENG\\University of Southern California, Computer Science\\
 and XIAORAN YAN\footnote{Authors are ordered alphabetically.}\\University of Southern California, Information Sciences Institute
}
\begin{document}
\maketitle

\begin{abstract}
We study the interplay between a \emph{dynamic process} and the \emph{structure} of the network on which it unfolds. Specifically, we examine the impact of this interaction on the quality-measure of network clusters and centrality. This enables us to effectively identify important vertices participating in the dynamics and communities in the network.  We introduce a mathematical framework for defining and characterizing an ensemble of dynamic processes on a network that generalizes the traditional {\em Laplacian framework}. 
For each dynamic process in our framework, we define a generalized centrality measures that captures a vertex's participation in the dynamic process on a given network and also define a function that measures the quality of every subset of vertices as a potential cluster (or community) with respect to this process. We show that the subset-quality function generalizes the traditional {\em conductance measure} for graph partitioning. We partially justify our choice of the quality function by showing that the classic Cheeger's inequality, which relates the conductance of the best cluster in a  network with a spectral quantity of its Laplacian matrix, can be extended to the generalized Laplacian. The generalized Laplacian framework brings under the same umbrella a surprising variety of dynamical processes and allows us to systematically compare the different perspectives they create on network structure.

\end{abstract}

\section{Introduction}
As flexible representations of complex systems, networks model entities and relations between them as vertex and links. In a social network for example, vertices are people, and the links between them represent friendships. As another example, world wide web is a collection of web pages with hyperlinks between them. An unprecedented amount of such relational data is now available. While discovery and fortune await, the challenge is to extract useful information from these large and complex data.

\emph{Centrality} and \emph{community} detection have emerged as two fundamental tasks in network analysis. The goal of centrality identification is to find the important vertex that controls the processes taking place on the network. Page Rank~\cite{Page99thepagerank} was such a measure developed by Google to rank web pages.  Other centrality measures, such as degree centrality, Katz score and eigenvector centrality~\cite{katz1953, Bonacich01,ghosh_rethinking_2012}, are used in communication networks for effective routing of information. Methods used to maximize influence~\cite{Kempe03} or limit the spread of a disease, depend on identifying central vertices.

The objective of community detection is to discover subsets of well-interacting vertices in a given network. Discovering such communities allows us to follow the classic reductionist approach, dividing the vertices into categories, each of which can then be analyzed separately. For example, US-based political networks usually exhibit a bipolar structure, representing the democrat/republican divisions~\cite{adamic2005political}. Communities within on-line social networks like Facebook might correspond to real social groups which can be targeted with various advertisements. However, just like with the different notions of centrality, there are an assortment of community detection algorithms, each leading to a different community structure on the same network (see~\cite{Fortunato10,porter} for reviews).

With so many choices for both centrality and community detection, practitioners often face a difficult decision of which measures to use. In this paper, instead of looking for the ``best" centrality or community measures, we propose an umbrella framework that unifies some of the well known measures under a single model, connecting the ideas of centrality, communities and dynamic processes on networks.
In this dynamics-oriented view, a vertex's centrality describes its  participation in the dynamic process taking place on the network~\cite{Borgatti05,lambiotte_flow_2011}. Similarly, communities are groups of vertices that interact more frequently with each other (according to the rules of the dynamic process) than with vertices from other communities~\cite{Lerman12pre}. In fact, this view of modeling is not new: when choosing \emph{conductance} as a measure of community quality, one implicitly assumes that \emph{unbiased random walk} is taking place on the network~\cite{KananVampelaVetta,SpielmanTengLinear,Chung1997Spectral,delvenne_stability_2008}. Under the random walk assumption, \emph{heat kernel page rank}~\cite{Chung07pnas} could be a measure of centrality.
Other dynamic processes, such as the spread of information, ideas, or epidemics, arise from different interactions than the unbiased random walk. An epidemic is a stochastic process that, unlike a random walk, attempts to transition to (i.e., infect) every neighbor of a vertex. Epidemic dynamics may be specified by the \emph{replicator} operator~\cite{Lerman12pre}, whose stationary distribution defines \emph{eigenvector centrality}~\cite{Bonacich01,Ghosh11physrev}. It is natural, then, that centrality of a vertex depends on the specifics of the dynamic process, which together with network topology influences its activity level. As a result, vertices that are visited most frequently by a random walk (specified by the heat kernel page rank) are different from the vertices that are infected most often during an epidemic (specified by eigenvector centrality).


In this paper, we study the interplay between a dynamic process and the underlying network on which it unfolds. We focus on the impact of this interaction on the emergence of central vertices and the formation of communities in the network, and on the design of efficient algorithms for their identification. Our paper makes the following contributions.

{ \bf{Generalized Laplacian:}} We present an umbrella framework for describing dynamic processes on a network that generalizes the traditional Laplacian framework for diffusion and random walks. Recall that a random walk on a network is a stochastic dynamic process that transitions from a vertex to a random neighbor of that vertex. It defines a Markov chain that can be specified by the normalized Laplacian of the network. Our framework  attempts to capture a class of dynamic processes that evolve in time according to the rules that generalize the normalized Laplacian, which allows arbitrary bias and delay during the process (Section~\ref{sec:interplay-dynamics}).

{\bf{Formal analysis of interaction dynamics: }} Our framework defines a class of dynamic processes with relatively simple parameterized transformations of the normalized Laplacian matrix, which enables rigorous analysis of the impact of these parameters on the measures of network centrality and communities. Its inclusion of diffusion and random walks allows us to build upon the insights from previous work, including stationary distribution as centrality measures, conductance as community-quality measures. With the generalized Laplacian framework, we are able to extend these ideas to existing and new processes whose properties may offer useful insights with their corresponding centrality and community measures. 

{\bf{Generalized centrality:}} Based on the connection between centrality measures and the stationary distribution of a random walk~\cite{Page99thepagerank, Chung07pnas}, we generalize the definition of centrality to all dynamic processes that are modeled by the generalized Laplacian. Some well known centrality measures are identified as special cases under this unified definition, which allows us to systematically compare them using linear transformations. In particular, we show that seemingly different formulations of dynamics are in fact the same after a change of basis. Generalized centrality also leads to the discovery of a conservation law and generalized volume measures for all dynamical processes in our framework (Section~\ref{sec:centrality}).

{\bf{Generalized conductance:}} We extend conductance to a general quality-measure for communities that reflects the dynamics on the network. For all dynamical processes under the framework, the corresponding generalized conductance is defined in terms of its parameterizations. This quantity measures the quality of every subset (of vertices) as a potential community with respect to this process on the given network. Recall that, the {\em conductance} balances between minimizing the cross-community interactions and the volume of each community. Generalized conductance is defined in the exact same formulation but with the generalized notions of interaction as well as volume. As with centrality, some existing community measures turn out to be special cases. The generalized Laplacian framework enables systematical comparison among them and new community measures, as they are now unified and connected by linear transformations (Section~\ref{sec:community}).

{\bf{Efficient algorithms:}}
Recall that the classical Cheeger's inequality relates a spectral quantity of the normalized Laplacian to conductance. We show that the same relation holds for the generalized conductance and the extreme eigenvalue of the corresponding generalized Laplacian operator. By proving this generalized versions of Cheeger's inequality and other related theorems, we are able to adapt existing spectral algorithms (~\cite{SpielmanTengLinear,AndersenChungLang,AndersenPeres}) for detecting provably good communities to the generalized Laplacian framework (Section~\ref{sec:algorithms}).

{\bf{Empirical evaluation on real-world networks:}}
We apply our formalism to study the structure of several real-world networks. We contrast the central vertices and communities identified by different dynamic processes and provide an intuitive explanation for these differences (Section~\ref{sec:experiments}).

While the generalized Laplacian framework described in this paper cannot model every dynamic process of interest, it is still flexible enough to include a surprising variety of dynamical processes which are seemingly unrelated. It allows us to systematically study and compare these processes under a unified framework. We hope this study will help lead to better approaches for defining and understanding the general interaction between dynamics and networks.


\section{Background and Related Work}
\label{sec:background}
Before introducing our framework, we  briefly review some closely related existing models. We will later show that these models are captured by our framework. The intuitions from these well-known systems can be used to understand our framework.

We represent a network as a weighted, undirected graph $G = (V,E,\AA)$ with $n$ vertices, where for  $i,j\in V$,  $a_{ij}$ assigns an non-negative weight (affinity) to each edge $\in E$. We follow the tradition that $a_{ij} = 0$ if and only if $(i,j)\not\in E$;
i.e., $\AA$ is the weighted adjacency matrix. By  convention we assume it is symmetric and $a_{ii}=0$ for all $i\in V$.
In the discussion below, the {\em (weighted) degree} of vertex $i \in V$ is defined as the total weight of edges incident on it, that is, $d_i= \sum_j a_{ij}$.
A dynamic process describes a state variable $\theta_i(t)$ associated with each vertex $i$. This variable changes its value based on interactions with the vertex's neighbors according to the rules of the dynamic process.

In this paper, since we view dynamics as operators on the vector composed of vertex state variables, we adopt the linear algebra convention, i.e., using column vertex state vectors and left-multiply them by linear operators. We summarize the terms and notation in the glossary table:
\begin{table}
\caption{Glossary of terms and notations.}
\begin{minipage}{\textwidth}\begin{center}
\setlength{\tabcolsep}{.3em}
\bgroup\def\arraystretch{1.5}
	\begin{tabular}{clcl}
	\hline\noalign{\vspace {-.5cm}}\hline
	Term & Description & Term & Description\\
	\hline
	$\AA$ &Weighted adjacency matrix		&$a_{ij}$ & Entry $i,j$ of $\AA$ \\
	$\WW$ &Interaction matrix	&$w_{ij}$ & Entry $i,j$ of $\WW$\\
	$\ttheta(t)$ &Vertex state vector at time $t$	&$\theta_i(t)$ & Entry $i$ of $\ttheta(0)$\\
	$\DD_{\AA}$ &Diagonal degree matrix of $\AA$	&$d_i$ & Degree of vertex $i$ in $\AA$\\	
	$\DD_{\WW}$ &Diagonal degree matrix of $\WW$	&${d_{\WW}}_i$ & Degree of vertex $i$ in $\WW$	\\
	$\TT$ &Diagonal delay matrix	&$\tau_i$ & Delay factor of vertex $i$\\
	$\LL$ &Generalized Laplacian Operator	&$P_{ij}$ & Random walk probability from $j$ to $i$\\
	$\vv_{\AA}$ & Dominant eigenvector of $\AA$ &${\vv_{\AA}}_i$ & Entry $i$ of $\vv_{\AA}$ 	\\
	$\VV_{\AA}$ & Diagonal matrix with $\vv_{\AA}$ entries &$\vv_i$ & $i$th eigenvector of $\LL$	\\
	$c_i$ &Centrality of vertex $i$	&$S$ & Subset of $V$, defines a community \\
	\hline\noalign{\vspace {-.5cm}}\hline
	\end{tabular}\egroup
\end{center}\end{minipage}
\label{tab:glossary}
\end{table}

\subsection{Unbiased Random Walks}
One of the best known of dynamic processes on networks is the random walk. The simplest is the discrete time \emph{unbiased random walk} (URW), where a walker located at the vertex $i$ follows one of the edges with a probability proportional to the weight of the edge \cite{lambiotte_flow_2011}. In this case, the state vector $\theta$ forms a distribution, whose expected value follows the following update equation
\begin{equation*}
 \theta_i (t+1) = \sum_j P_{ij}\theta_j (t).
\end{equation*}
Here $P$ is a stochastic matrix whose entry $P_{ij}$ is the transition probability for a walker to go from the vertex $j$ to $i$, $P_{ij} = a_{ij}/d_j$.

The update equation of an unbiased random walk leads to the difference equation
\begin{equation*}
 \Delta\theta_i  = \theta_i (t+1) - \theta_i (t) = \sum_j P_{ij}\theta_j(t) - \theta_i(t) =  - \sum_j L^{RW}_{ij}\theta_j(t),
\end{equation*}
where $L^{RW}$ is the normalized \emph{random walk Laplacian matrix} with $L^{RW} =  \II- \AA\DD_{\AA}^{-1}$.

To go from a discrete time synchronous random walk to a continuous time dynamics, we introduce a waiting time function for the asynchronous jumps performed by the walk \cite{lambiotte_flow_2011}. Assuming a simple Poisson process where the waiting times between jumps are exponentially distributed as the PDF $f(t,\tau) = \frac{1}{\tau_i} e^{- \frac{t}{\tau_i}}$, we can rewrite the above difference equations as differential equations,
\begin{equation*}
\frac{d{\theta_i}}{dt}=- \sum_j \frac{L^{RW}_{ij}}{\tau_j}\theta_j \;.
\end{equation*}

The solution to the above differential equations gives the state vector of the random walk at any time $t$:
\begin{equation*}
\ttheta(t) =\ttheta(0)\cdot e^{ -L^{RW} \TT^{-1} t} \;,
\end{equation*}
where $\TT$ is the $n\times n$ diagonal matrix with the mean waiting time $\tau_i$ as entries.
If the dynamic process converges, then regardless of its initial value $\ttheta(0)$, the stationary distribution
$\pi_i$ has the following density:
\begin{equation}
\label{eq:equilibrium_rw}
\pi_i = lim_{t \rightarrow \infty}\theta_i(t) = \frac{d_i\tau_i}{\sum_j d_j\tau_j}\;.
\end{equation}
Intuitively, the stationary distribution is proportional to the product of vertex degree and the mean waiting time.

\subsection{Biased Random Walks}
A natural extension to the simple random walks is to allow biases towards certain destinations, making it a \emph{biased random walk} (BRW). According to \cite{lambiotte_flow_2011}, any biased random walk defined with the transition probability $P_{ij} \propto b_i a_{ij}$ (where $b_i$ is the bias towards vertex $i$) can be reduced to a URW on a re-weighted ``interaction network" with the adjacency matrix
\begin{equation*}
\WW = \BB \AA \BB\;,
\end{equation*}
where $\BB$ is a diagonal matrix with $\BB_{ii}=b_i$.
The above symmetric re-weighting ensures that
$$P_{ij} = \frac{b_i a_{ij} b_j}{\sum_i b_i a_{ij} b_j} \propto b_i a_{ij},\quad\quad
P_{ji} =\frac{b_j a_{ji} b_i}{\sum_j b_j a_{ji} b_i} \propto b_j a_{ji}\;. $$

Previously studied in network communications \cite{ling_effects_2013,fronczak_biased_2009,gomez-gardenes_entropy_2008}, one class of BRWs is where the bias $b_i$ has a power-law dependence on degree: $P_{ij} \propto d_i^{\beta} a_{ij}$. The exponent $\beta$ controls the amount of bias. The URW is recovered with $\beta = 0$; When $\beta >0$, biases toward high degree vertices are introduced, and when $\beta <0$, the random walk is more likely to jump to a lower degree neighbor.

The stationary distribution for this class of BRWs in general is
\begin{equation*}
 \pi_i = \dfrac{\sum_i d_i^{\beta} a_{ij} d_j^{\beta}}{\sum_{ij} d_i^{\beta} a_{ij} d_j^{\beta}} \;.
\end{equation*}

Another BRW is the maximum-entropy random walk \cite{burda_localization_2009,lambiotte_flow_2011}, defined as
\begin{equation*}
 \theta_i (t+1) = \sum_j \dfrac{{v_{\AA}}_ia_{ij}}{\lambda_{max}{v_{\AA}}_j}\theta_j(t) \;,
\end{equation*}
where $v_{\AA}$ is the eigenvector of $\AA$ associated with its largest eigenvalue $\lambda_{max}$: $\AA v_{\AA}=\lambda_{max}v_{\AA}$.
Again, an unbiased random walk on the interaction network $\WW = \VV_{\AA}\AA\VV_{\AA}$ is equivalent to biased random walk on the original network $\AA$ (the entries of diagonal matrix $\VV $ is the components of the eigenvector $\boldsymbol{V}$). In particular, the stationary distributions of each can be written as
\begin{equation*}
 \pi_i = \dfrac{{v_{\AA}}_i^2}{\sum_i {v_{\AA}}_i^2} \;.
\end{equation*}

\subsection{Consensus and Opinion Dynamics}
Another closely related class of discrete time dynamic processes is the so-called the ``consensus process'' \cite{lambiotte_flow_2011, olfati-saber_consensus_2007, krause_compromise_2008}. Consensus process models coordination across a network where each vertex updates its ``belief" based on the average ``beliefs" of its neighbors. Unlike random walks, which conserves total state value throughout the network (since the state vector is always a distribution), the consensus process follows the following update equation
\begin{equation*}
 \theta_i (t+1) = \dfrac{1}{d_i}\sum_j a_{ij}\theta_j(t)\;.
\end{equation*}
This leads to the difference equation
\begin{equation*}
 \Delta\theta_i  = \theta_i ^{t+1} - \theta_i ^{t} = - \sum_j L^{CON}_{ij}\theta_j(t)
\end{equation*}
where $L^{CON}$ is the \emph{consensus Laplacian matrix} with $L^{CON} = \II - \DD_{\AA}^{-1}\AA$. For an undirected graph with a symmetric $\AA$, $L^{CON} = [L^{RW}]^T$.

Consensus can also be turned into asynchronous continuous time dynamics. Again, assuming a Poisson process where the update interval at each vertex is exponentially distributed as $\uptau_i(t) = \frac{1}{\tau_i} e^{- \frac{t}{\tau_i}}$, we can rewrite the above difference equations as differential equations,
\begin{equation*}
\frac{d{\theta_i}}{dt}=- \sum_j \dfrac{L^{CON}_{ij}}{\tau_i}\theta_j \;.
\end{equation*}

The solution to the above differential equations gives the dynamic states:
\begin{equation*}
\ttheta(t) =\ttheta_{0}\cdot e^{ -\TT^{-1} L^{CON} t}
\end{equation*}

\noindent
The consensus process always converge to a uniform ``belief" state with the value,
\begin{equation}
\label{eq:equilibrium_con}
\pi_i = \dfrac{1}{\sum_j d_j \tau_j}\sum_i \theta_i(0) d_i \tau_i \;.
\end{equation}

Just like the URW, unbiased consensus can also be generalized by introducing a weight when averaging over neighbors' values. Similarly, any biased consensus defined with the update weights $\AA^{\prime}_{ij} \propto b_i a_{ij}$ (where $b_i$ is the bias towards vertex $i$) can be reduced to a unbiased consensus on a re-weighted ``interaction network"
\begin{equation*}
 w_{ij} = b_i a_{ij} b_j \;.
\end{equation*}

This opens the door to consensus dynamics such as opinion dynamics \cite{krause_compromise_2008}, and linearized approach to synchronization of different variants of the Kuramoto model \cite{Lerman12pre,Motter2005Sync,Arenas2006Sync}.

\subsection{Communities and Conductance}
In network clustering and community detection, previous work has focused on identifying subsets of vertices $S\subseteq V$ that interacts more frequently to vertices in the same community than to vertices in other subsets \cite{Fortunato10,porter}. A standard approach to clustering involves defining an objective function that measures the \emph{quality} of a cluster.
For a subset $S\subseteq V$, let $\bar{S}= V\setminus S$ to denote the complement of $S$, which consists of vertices that are not in $S$. Let $\cut(S,\bar{S})=\sum_{i \in S, j \in \bar{S}} a_{i,j}$ denote   the total interaction strength of all edges used by  $S$ to connect with the outside world. Let $\vol(S)=\sum_{i\in} d_i = \sum_{i\in S,j\in V} a_{i,j}$  denote the volume of weighted "importance" for all vertices in $S$.

One popular heuristic to measure the quality of a subset $S$ as a  potential good cluster (or a community) \cite{KananVampelaVetta,SpielmanTengLinear,Chung1997Spectral} is to use the ratio of these two quantities:
\begin{equation}
\label{eq:conductance}
   \phi(S)  =  \frac{\cut(S,\bar{S})}{{\min(\vol(S),\vol(\bar{S}))}}
\end{equation}
\noindent
For example, a subset that (approximately) minimizes this quantity --- the {\em conductance} of $S$ ---  is a desirable cluster, as it maximizes the fraction of  affinities within the subset.
If interactions among vertices are proportional to their affinity weights, then a set with small conductance also means that its members interact significantly more with each other than with members not in the subset.
Other well-known quality functions are normalized cut \cite{ShiMalik00} and ratio-cut, given by
$$
\frac{\cut(S,\bar{S})}{\vol(S)} + \frac{\cut(S,\bar{S})}{\vol(\bar S)}
\quad \mbox{and} \quad
\frac{\cut(S,\bar{S})}{{\min(|S|,|\bar S|)}},
$$
respectively. The smallest achievable such ratio is known as the {\em isoperimetric number}.

Algorithmically, once a quality function is selected, one can then perform a partitioning-based algorithm or mathematical programming-based method to find a cluster or clusters that optimizes the conductance. The optimization, however, is usually a combinatorial problem. To address this problem on large networks, various efficient approximate solutions have been developed, such as Spielman-Teng \cite{SpielmanTengLinear}, Andersen-Chung-Lang \cite{AndersenChungLang}, and Andersen-Peres \cite{AndersenPeres}.

While most community detection algorithms does not explicitly model the dynamic process that defines the interactions between vertices, the connection between conductance and unbiased random walks is quite well studied \cite{KananVampelaVetta,SpielmanTengLinear,Chung1997Spectral}. In particular, Chung's work on heat kernel page rank and Cheeger inequality, where a dynamical system is built using the normalized Laplacian, provides a theoretical framework for provably good approximations to the isoperimetric number \cite{Chung07pnas}. Intuitively, the relationship between clustering and dynamics can be captured as: a community is a cluster of vertices that ``trap'' a random walk for a long period of time before it jumps to other communities \cite{LovaszS,ShiMalik00,Rosvall08,SpielmanTengLinear}. Therefore, the presence of a good cluster implies that it will take a random walk a long time to reach its stationary distribution.

\section{Generalized Laplacian Framework}
\label{sec:interplay-dynamics}

Consider a linear dynamic process of the following form:
\begin{equation}
\label{eq:dynamics}
\frac{d{\ttheta}}{dt}=- \LL\ttheta,
\end{equation}
where $\ttheta$ is a column vector of size $n$ containing the values of the dynamic variable for all vertices, and $\LL$ is a positive semi-definite matrix, the \emph{spreading operator}, which defines the dynamic process.

As discussed in the introduction, we focus on dynamic processes that generalize the traditional normalized Laplacian for diffusion and random walks. Recall that the {\em symmetric normalized Laplacian matrix} of a weighted graph $G=(V,E,\AA)$ is defined as
$$\DD_{\AA}^{-1/2}(\DD_{\AA} - \AA)\DD_{\AA}^{-1/2},$$
where $\DD_{\AA}$ is the diagonal matrix defined by $(d_1,...,d_n)$. We study the properties of a dynamic process that can be further parameterized as:
\begin{equation}
\label{eq:spreading-operator}
\LL<\rho,\TT,\WW> =(\TT\DD_{\WW})^{-1/2-\rho}(\DD_{\WW}-\WW)(\DD_{\WW}\TT)^{-1/2+\rho}.
\end{equation}
We name this operator with the parameters $<\rho,\TT,\WW>$ \emph{generalized Laplacian}, and we shall represent it using $\LL$ in the rest of the paper. Here $\TT$  is the $n\times n$ diagonal matrix of \emph{vertex delay factors}. Its $i$th element $\tau_{ii}$ represents the average delay of vertex $i$. We assume that the operator is {\em properly scaled}: specifically, $\tau_{ii} = \tau_i \geq 1$, for all $i\in V$.
Another generalization from the traditional Laplacian is the use of the \emph{interaction matrix} $\WW$ instead of the adjacency matrix $\AA$. In theory, $\WW$ can be any $n\times n$ symmetric positive-definite matrix; however, we restrict our attention to scaling transformations of the adjacency matrix $\AA$. Note that the degree matrix $\DD_{\WW}$ is now also defined in terms of the interaction matrix, that is ${d_{\WW}}_i= \sum_j w_{i,j}$. While the $\rho$ parameter can technically be any real number, in this work we limit ourselves to three special cases: $\rho = 1/2, 0, -1/2$. These cases correspond to three equivalent linear operators with ``consensus", ``symmetric" and ``random walk" interpretations respectively.

We show that by transforming the generalized Laplacian in different ways we can express a number of different dynamic processes. We focus on the three simplest cases: (a) the ``similarity transformation'', which corresponds to the parameter $\rho$ in parameters in Eq.~\eqref{eq:spreading-operator}, (b) the ``scaling transformation'', which correspond to the parameter $\TT$, and (c) the ``reweighing transformation", which corresponds to the parameter $\WW$.

\subsection{Similarity Transformations}
\label{sec:similarity-transformations}
Changing $\rho$ in Equation~\eqref{eq:spreading-operator} leads to different representations of the same linear operator, unifying seemingly unrelated dynamics, such as ``consensus'' and ``random walk''. To see this, we refer to the idea of matrix similarity.

In linear algebra, similarity is an equivalence relation on the space of square matrices.  Two $n\times n$ matrices $X$ and $Y$ are similar if
\begin{equation}
    X = Q Y Q^{-1}\;,
\end{equation}
where the invertible $n\times n$ matrix $Q$ is called the change of basis matrix.
Similar matrices share many key properties, including:
\begin{itemize}
  \item Matrix rank
  \item Determinant
  \item Eigenvalues, and their multiplicities
  \item Eigenvectors are transforms of each other under the change of basis matrix $Q$
\end{itemize}
Matrix similarity provides a direct intuition about a new operator $B$ if we already understand $C$. Both represent the same linear operator up to a change of basis.

Recall that under our  framework, the symmetric version of the generalized Laplacian matrix is
\begin{align*}
\LL^{SYM}=\TT^{-1/2}\DD_{\WW}^{-1/2}(\DD_{\WW}-\WW)\DD_{\WW}^{-1/2}\TT^{-1/2}\;.
\end{align*}
We can rewrite the operator describing random walk dynamics as:
\begin{equation}
\LL^{RW} = (\DD_{\WW}-\WW) (\DD_{\WW}\TT)^{-1} = (\DD_{\WW}\TT)^{1/2} \LL^{SYM} (\DD_{\WW}\TT)^{-1/2}
\end{equation}
Thus, continuous time random walk with delay factors $\TT$ is similar to the symmetric normalized Laplacian.
Similarly, we can rewrite the continuous time consensus dynamics under our framework as
\begin{equation}
\LL^{CON} = (\DD_{\WW}\TT)^{-1} (\DD_{\WW}-\WW)  = (\DD_{\WW}\TT)^{-1/2} \LL^{SYM} (\DD_{\WW}\TT)^{1/2} = {\LL^{RW}}^T
\end{equation}
The fact that``consensus", ``symmetric" and ``random walk" operators are similar means that they model the same dynamics on a network, provided that we observe them in a consistent basis.

The random walk Laplacian matrix provides a physical intuition for our framework. An unbiased random walk on the interaction graph $\WW$ is equivalent to a biased random walk on the original adjacency matrix $\AA$ \cite{lambiotte_flow_2011}.  $\tau_i$ specifies the mean delay time of the random walk on vertex $i$ before a transition, assuming a simple Poisson process. This interpretation naturally extends to the other orthogonal parameters: namely $\WW$ controls the distribution of walk trajectories and $\TT$ controls the delay time of vertex transitions along each trajectory.

While we use symmetric operators for mathematical convenience in definitions and proofs and abuse the notation $\LL = \LL^{SYM}$, it is often more intuitive to think from the random walk or consensus perspective. In the following subsections, we will use the random walk formulation ($\rho=-1/2$) as examples, but all results apply to arbitrary $\rho$ values under a simple change of basis. More discussion about the similarity transformation follows after we introduce a few properties of the generalized Laplacian.

\subsection{Scaling transformations}
\label{sec:scaling-transformations}
Next, we investigate the effect of changing the time delay matrix $\TT$ while holding the other parameters fixed.

\paragraph{Uniform scaling}
One of the simplest transformations is uniform scaling, which is given by the diagonal matrix $\TT$ with identical entries:
\begin{equation}
    X = Y Q = \gamma Y  \;,
\end{equation}
where the scalar matrix $Q$ can be rewritten as $\gamma \II$, where $\gamma$ is a scalar. Uniform scaling preserves almost all matrix properties, including the eigenvalue and eigenvector pairs associated with the operator.

\begin{figure}
    \includegraphics[width=0.32\textwidth]{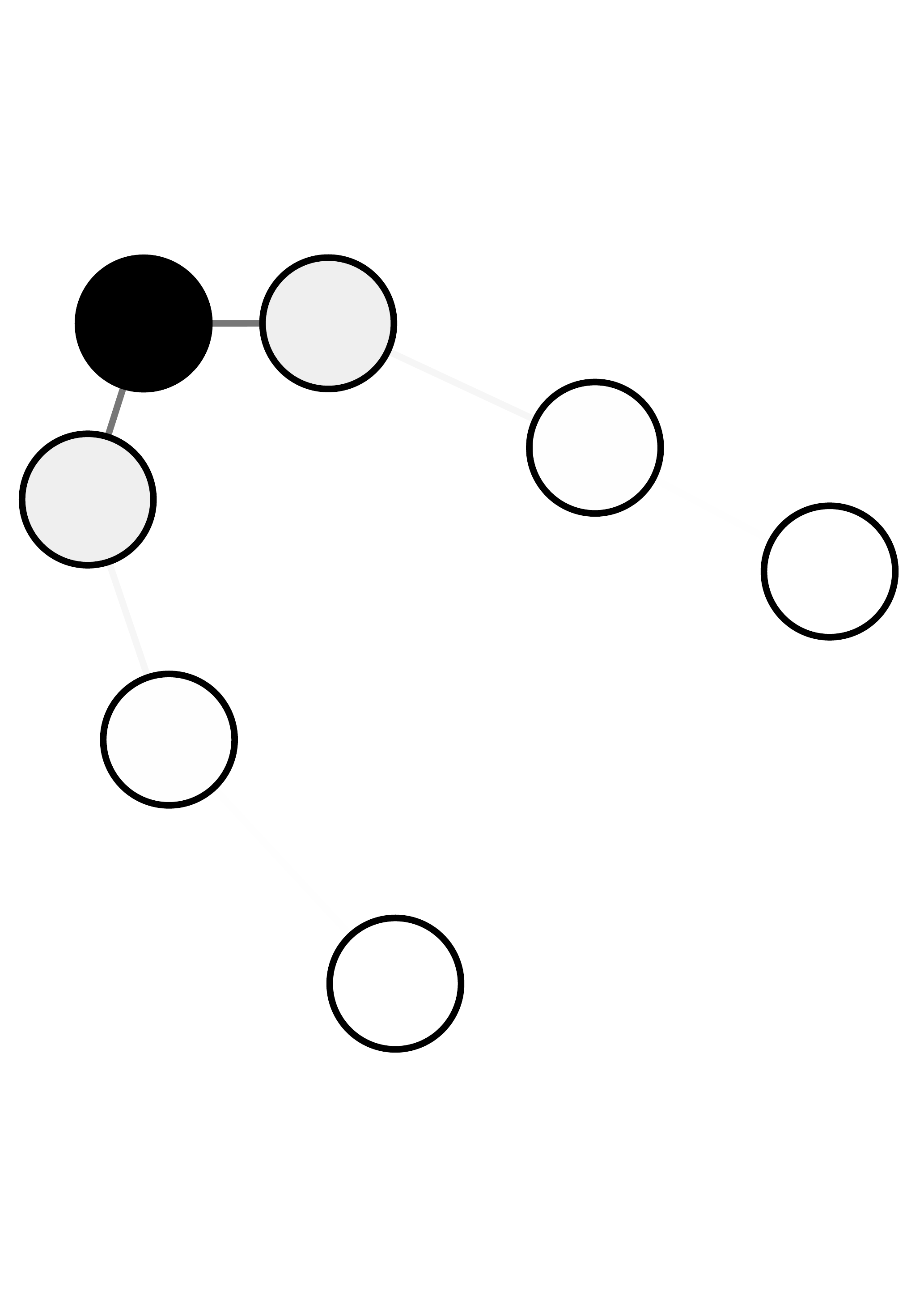}
    \includegraphics[width=0.32\textwidth]{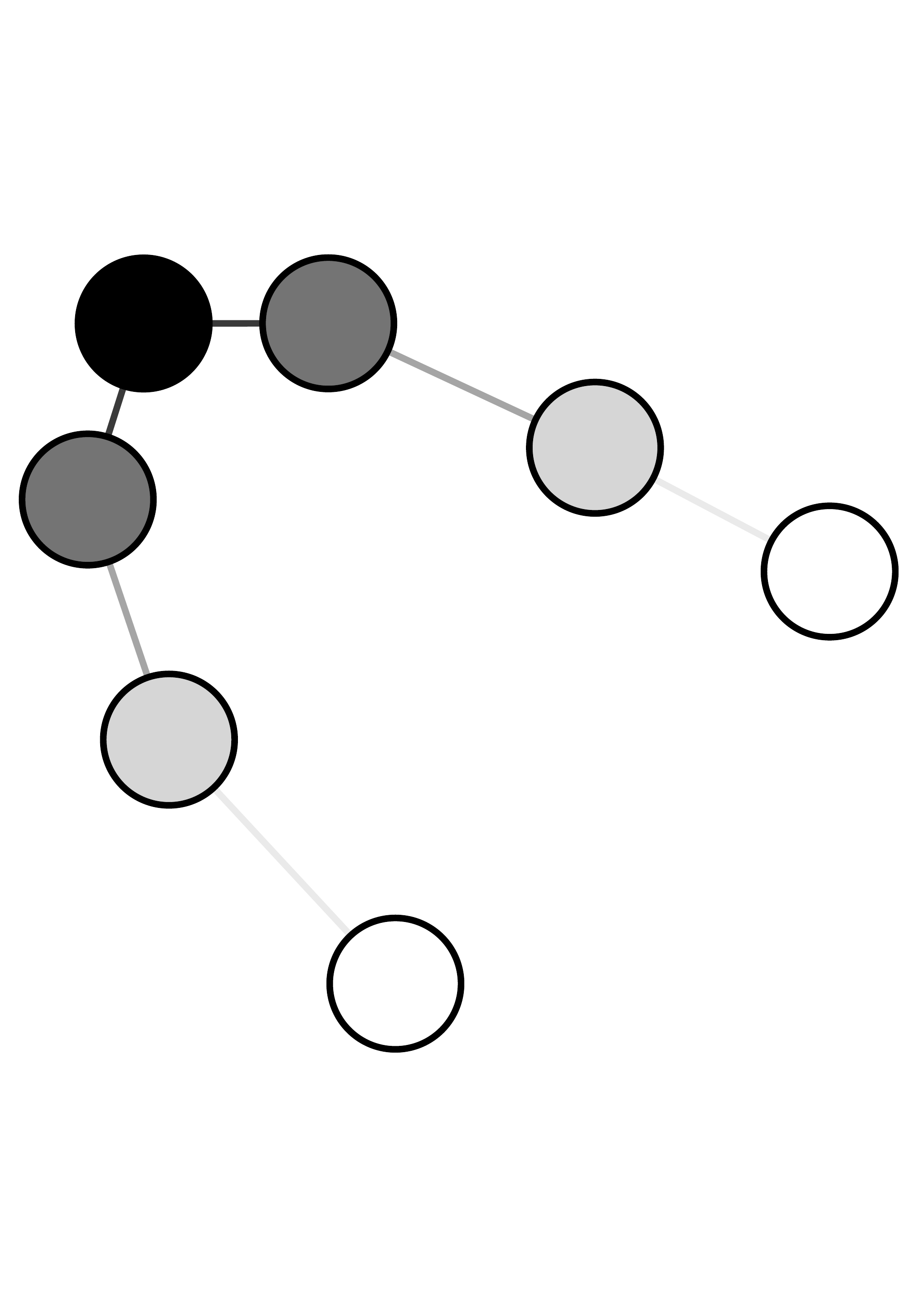}
    \includegraphics[width=0.32\textwidth]{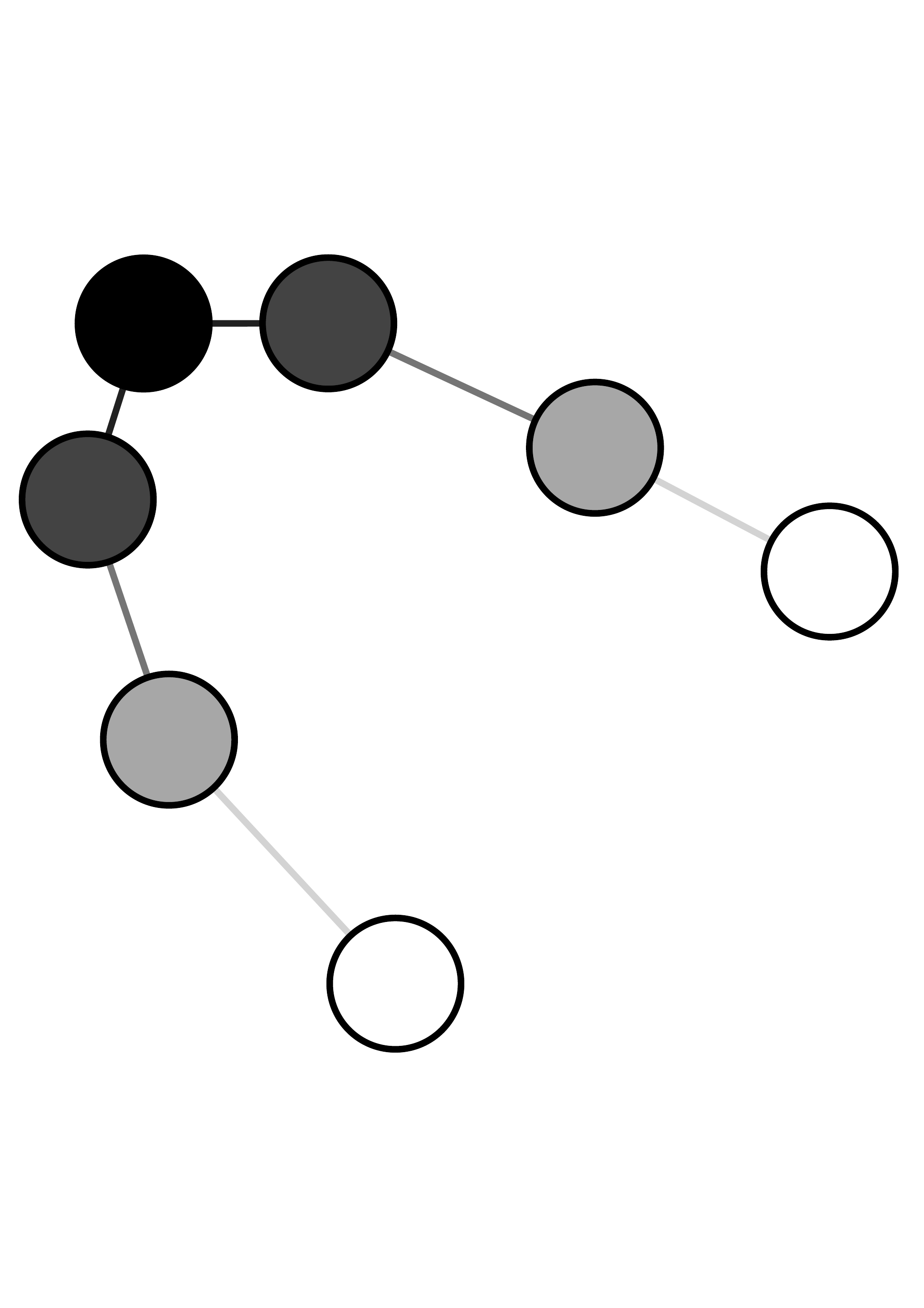}
  \caption{Random walk dynamics on a line of vertices under different uniform scalings $\TT=\gamma \II$. Random walk starts with the state vector concentrated on a single vertex. Vertex shading indicates the relative frequency it is visited by the random walk before convergence. Large value of $\gamma$ (left) leads to slower spreading of the probability density than smaller $\gamma$ (right).
}
  \label{fig:UniformScale}
\end{figure}

Intuitively, uniform scaling can be understood as rescaling time by $1/\gamma$. Under this scaling ($\TT = \gamma \II$), the solution of random walk dynamics becomes:
\begin{eqnarray*}
\ttheta(t) & = & \ttheta(0)\cdot e^{(\DD_{\WW}-\WW) (\DD_{\WW}\TT)^{-1} t} \\
& = & \ttheta(0)\cdot e^{(\DD_{\WW}-\WW) (\DD_{\WW})^{-1} \frac{t}{\gamma}}
\end{eqnarray*}

\figref{fig:UniformScale} illustrates how uniform scaling  affects the state vector of the random walk. It shows a simple network composed of a line of vertices, with the state vector of the random walk initially concentrated at the middle vertex. As the dynamic process evolves in time, the random walk visits neighboring vertices. The shading of a vertex corresponds to the visit frequency of the random walk (darker is higher). The figure shows uniform scaling of the form $\TT=\gamma \II$ with larger $\gamma$ to the left, and smaller $\gamma$ to the right. The larger the value of $\gamma$, the slower the process evolves. In the left-most figure, the state vector has non-negligible density only for the immediate neighbors of the middle vertex where the random walk starts. For smaller values of $\gamma$, the random walk has a chance to reach farther vertices in the same period of time.
In other words, a bigger ``time delay'' slows down the random walk.

Uniform scaling is a useful transformation that enables the generalized Laplacian to include arbitrary time delay factors $\TT^{\prime}$. The trick is to rescale $\TT$ to meet the condition  $\tau_i \geq 1$ by making $\TT=\frac{\TT^{\prime}}{\max_i{\tau_i}}$ without affecting any other matrix properties, as we will later see from some special operators under the framework. 

\paragraph{Non-uniform scaling}
The non-uniform scaling enables us to use the $\TT$ parameter to control the time delay at each vertex. Non-uniform scaling is written as
\begin{equation}
    X = Y Q \;,
\end{equation}
but the diagonal matrix $Q$ can have different entries. Unlike uniform scaling, this scaling does not preserve many matrix properties. We will discuss this in more detail after introducing a few properties of the generalized Laplacian.

\begin{figure}
    \includegraphics[width=0.32\textwidth]{figures/tau1}
    \includegraphics[width=0.32\textwidth]{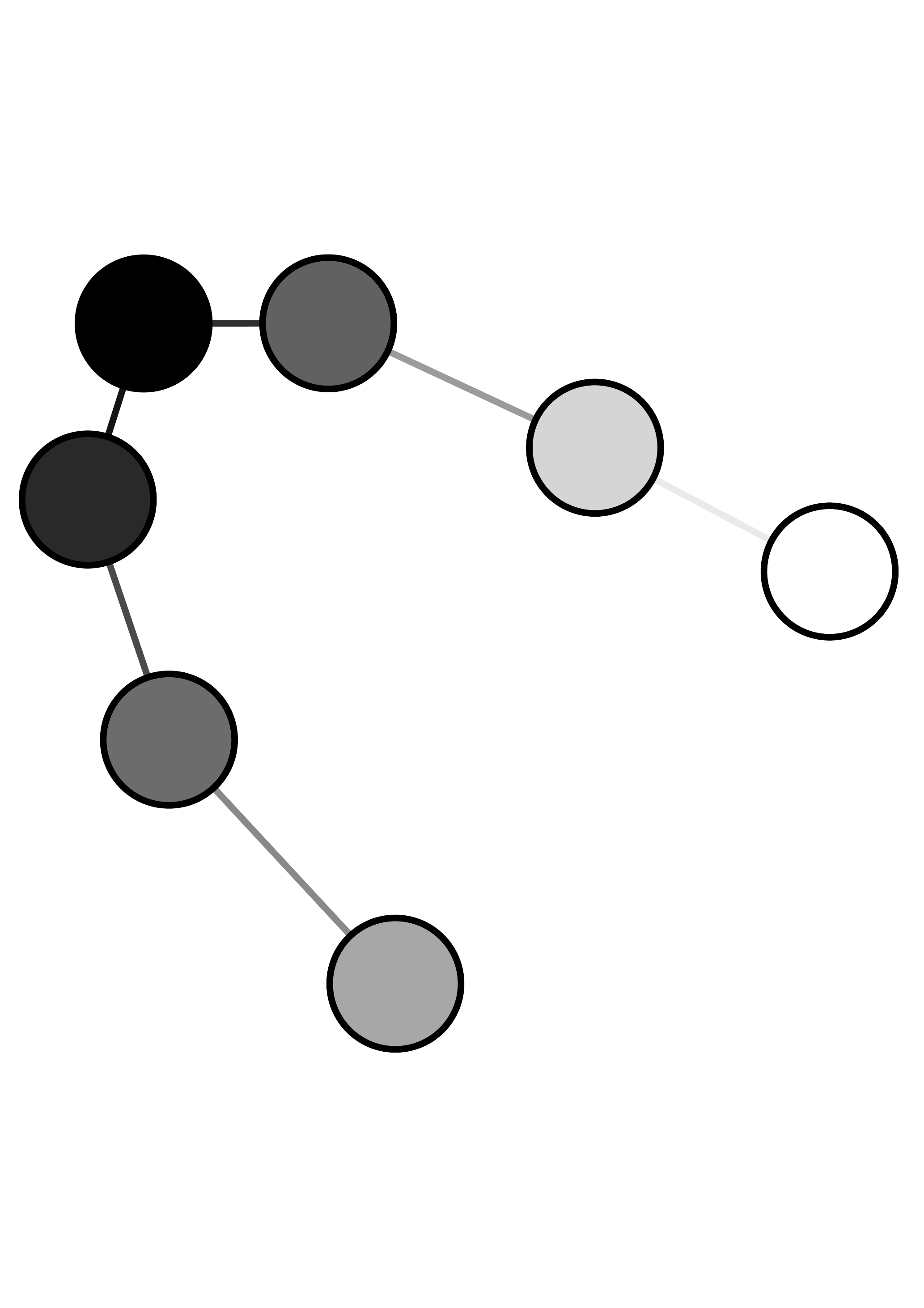}
    \includegraphics[width=0.32\textwidth]{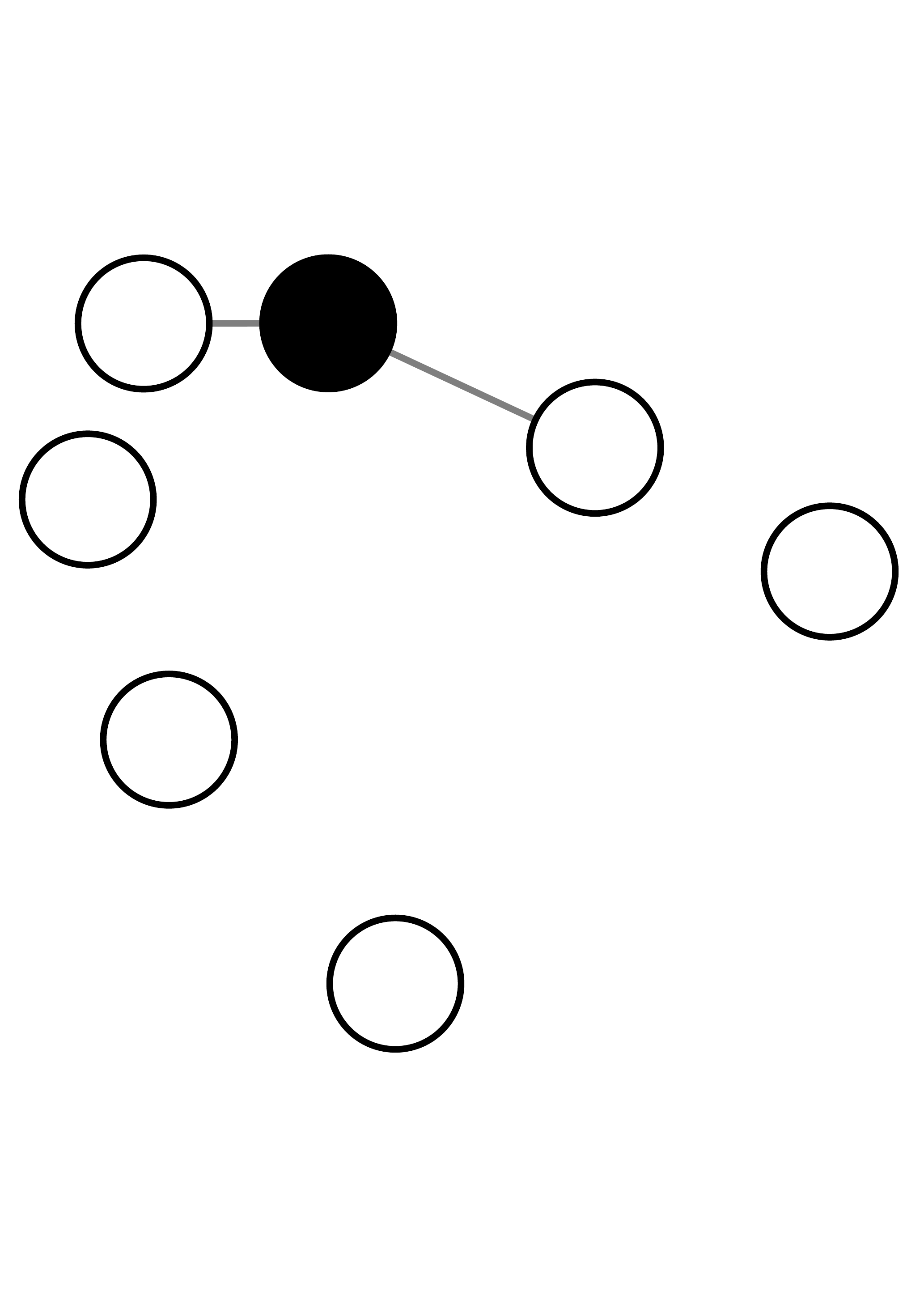}
  \caption{The same dynamics after different scaling at a single vertex $v$ (top right vertex). We start from  a state vector concentrated in the left top vertex. The gray scale of each vertex indicates the relative frequency visited by the random walk after certain time (before convergence). The left is the same dynamics from the middle of \figref{fig:UniformScale}. Notice here how bigger $\tau_v$ (middle) leads to slower propagation of probability density on the right branch. With a even bigger $\tau_v$ (right), however, the random walk gets trapped at vertex $v$.}
  \label{fig:nonUniformScale}
\end{figure}

Under the generalized Laplacian framework, non-uniform scaling can be understood as rescaling the mean waiting time at each vertex $i$ by $\tau_i$. Different $\TT$ lead to different dynamics, with state vectors specified by:
\begin{equation}
\ttheta(t) =\ttheta(0)\cdot e^{- (\DD_{\WW}-\WW)\DD_{\WW}^{-1} \TT^{-1} t} \;.
\end{equation}
Non-uniform scaling does not affect the trajectory of the random walk, i.e., the sequence of vertices, $i_0 , i_1, ...., i_t$,  visited by the random walk during some time interval does not depend on $\TT$. What changes with $\TT$ is only the waiting time at each vertex, i.e., the time the walk spends on a vertex before a transition. As \figref{fig:nonUniformScale} shows, non-uniform scaling can produce rather complicated dynamics.

\subsection{Reweighing Transformations}
\label{sec:rewighting-transformations}
The last parameterization of the generalized Laplacian we explore is one that transforms the adjacency matrix of a graph $\AA$ to the interaction matrix $\WW$. Given an adjacency matrix of a graph $\AA$, the choice of $\WW$ is a rather flexible design option. In fact, we can manipulate the adjacency matrix in any way (we can even come up completely new matrices) as long as the result is still an positive semi-definite and symmetric matrix, for any perceived ``interaction graph" of dynamics.


In this paper, we limit our attention to scaling transformations of the original adjacency matrix $\AA$, as defined in the previous subsection. Notice that this scaling is casted on $\AA$ inside of the generalized Laplacian, which will also change the degree matrix into $\DD_{\WW}$. To differentiate this transformation from the scaling of the generalized Laplacian as a whole, we call it the \emph{reweighing transformation}. Whereas the scaling transformation changes the delay time at each vertex, the reweighing transformation changes the trajectory of a dynamic process.



As described in Section~\ref{sec:background}, a biased random walk with transition probability $P_{ij} \propto b_i a_{ij}$ is equivalent to an unbiased random walk on an ``interaction graph", represented by a reweighed adjacency matrix:
\begin{equation}
 w_{ij} = b_i a_{ij} b_j \;.
\end{equation}
For intuitions, we impose the constrains that $b_i > 0$ . This transformation allows the generalized Laplacian to model many different types of dynamic processes by transforming them into a unbiased  walk on the interaction graph.
Several important special cases are introduced below.

\subsection{Special cases}
The simple parameterization in terms of $\TT$ and $\WW$ allows the generalized Laplacian to represent a variety of dynamic processes, including those described by the Laplacian and normalized Laplacian, as well as a continuous family of new operators that are not as well studied. It also contains certain operators for modeling epidemics. The consideration of this family of operators is also partially motivated by recent experimental work in understanding network centrality \cite{Ghosh11physrev,Lerman12pre}.


\paragraph{Normalized Laplacian}
If the interaction matrix is the original adjacency matrix of the graph $\WW=\AA$, and vertex delay factor is simply the identity matrix $\TT=\II$, then we recover the \emph{symmetric normalized Laplacian}:
$$\LL=\II -\DD_{\AA}^{-1/2}\AA\DD_{\AA}^{-1/2}.$$
The ``random walk" and ``consensus" formulation of this dynamic process correspond to the unbiased random walk and consensus processes described in Section~\ref{sec:background}:
$\LL^{RW}=\II -\AA\DD_{\AA}^{-1}$ and $\LL^{CON}=\II -\DD_{\AA}^{-1}\AA$.

\paragraph{(Scaled) Graph Laplacian}
When $\WW=\AA$, $\TT =d_{max}\DD_{\AA}^{-1}$,  the generalized Laplacian operator corresponds to the (scaled) graph {\em Laplacian}
$$\LL=\frac{1}{d_{max}}  (\DD - \AA).$$
This operator is often used to describe \emph{heat diffusion} processes \cite{Chung07pnas}, where $\LL$ is replacing the continuous Laplacian operator $\nabla^2$.

Notice that by setting $\TT =d_{max}\DD_{\AA}^{-1}$, the diagonal matrix $\TT\DD_{\WW}$ becomes effectively a scalar. As a result, different similarity transformation (other values of $\rho$ in Eq.~\ref{eq:spreading-operator}) lead to identical linear operators, meaning the ``random walk" and ``consensus" formulations are exactly the same as the symmetric formulation.

\paragraph{Replicator}
\label{eq:rep}
Let $\vv_{\AA}$ be the eigenvector of $\AA$ associated with its largest eigenvalue $\lambda_{max}$: $\AA\vv_{\AA}=\lambda_{max}\vv_{\AA}$.
We can then construct a diagonal matrix $\VV_{\AA} $ whose elements are the components of the eigenvector $\vv_{\AA}$.
Let us scale the adjacency matrix according to $\WW = \VV_{\AA}\AA\VV_{\AA}$ and use it as the interaction matrix.
Setting the vertex delay factor to identity, the spreading operator is:
$$\LL=\II - \DD_{\WW}^{-1/2}\WW\DD_{\WW}^{-1/2}=\II-\frac{1}{\lambda_{max}}\AA\;,$$
where the entries in $\DD_{\WW}$ simplifies as ${d_{\WW}}_i = \sum_j {v_{\AA}}_i a_{ij}{\vv_{\AA}}_j =  {v_{\AA}}_i\sum_j  a_{ij}{\vv_{\AA}}_j = \lambda_{max} {v_{\AA}}_i^2$.
This operator is known as the replicator matrix $\RR$, and it models \emph{epidemic diffusion}  on a graph \cite{Lerman12pre}. It is simply the normalized Laplacian of the interaction graph $\VV_{\AA}\AA\VV_{\AA}$ \cite{Smith13spectral}, by reweighing the adjacency graph $\AA$ with the eigenvector centralities of each vertex.

Using the random walk formulation, an URW on $\VV_{\AA}\AA\VV_{\AA}$ is equivalent to a maximum entropy random walk on the original graph $\AA$ \cite{burda_localization_2009,lambiotte_flow_2011}. Its solution is
\begin{equation}
 \theta_i (t+1) = \sum_j \dfrac{{v_{\AA}}_iw_{ij}}{\lambda_{max}{v_{\AA}}_j}\theta_j (t) \;.
\end{equation}
This means that both dynamics have exactly the same state vector $\theta$ at each time step. In particular, the stationary distributions are both
\begin{equation}
 \pi_i = \dfrac{{v_{\AA}}_i^2}{\sum_i {v_{\AA}}_i^2} \;.
\end{equation}

The consensus formulation of the Replicator gives a maximum entropy agreement dynamics:
$$\LL^{CON}=\II-\frac{1}{\lambda_{max}}\VV_{\AA}^{-1}\AA\VV_{\AA}\;.$$

\paragraph{Unbiased Laplacian}
Reweighing each edge by the inverse of the square root of the endpoint degrees gives the what is known as the normalized adjacency matrix $\WW = \DD_{\AA}^{-1/2}\AA\DD_{\AA}^{-1/2}$.
Then, the degree of vertex $i$ of the reweighted graph is ${ d_{\WW}}_i
= \sum_{j\in V} \WW[i,j]$.
With $\TT={d_{\WW}}_{max}{D}_{\WW}^{-1}$ we define the \emph{unbiased Laplacian matrix}:
$$\LL = \frac{1}{{d_{\WW}}_{max}}({\DD_{\WW} - \WW)}.$$

Unbiased Laplacian is an example of the degree based biased random walk with $P_{ij} \propto d_i^{-1/2} a_{ij}$ (\secref{sec:background}). An URW on the reweighed adjacency matrix $\WW$ is equivalent to a BRW on the original adjacency matrix of the following dynamics
\begin{equation}
 \theta_i (t+1) = \sum_j \dfrac{d^{-1/2}_ia_{ij}}{\sum_k d^{-1/2}_ka_{kj}}\theta_j(t) \;.
\end{equation}

The stationary distribution for this class of BRWs in general is
\begin{equation}
 \pi_i = \dfrac{\sum_i d_i^{\beta} a_{ij} d_j^{\beta}}{\sum_{ij} d_i^{\beta} a_{ij} d_j^{\beta}} \;.
\end{equation}

Just like the (scaled) graph Laplacian, the diagonal matrix $\TT\DD_{\WW}$ of the unbiased Laplacian is effectively a scalar. As a result, the ``random walk" and ``consensus" formulations are exactly the same as the symmetric formulation. This is a result of our design choice in vertex scaling factors $\TT$, whose motivation will be explained later.

These four special cases are related to each other through various transformations introduced earlier in this section, which are captured by the following diagram.
\begin{center}
\begin{tikzpicture}
  \matrix (m) [matrix of math nodes,row sep=3em,column sep=4em,minimum width=2em]
  {
     Normalized\quad Laplacian & Laplacian \\
     Replicator &  Unbiased\quad Laplacian\\};
  \path[-stealth]
    (m-1-1) edge node [above] {scaling} (m-1-2)
    (m-1-1) edge node [below] {reweighing+scaling} (m-2-2)
    (m-1-1) edge node [left] {reweighing} (m-2-1)
    (m-1-2) edge node [right] {reweighing+scaling} (m-2-2);
\end{tikzpicture}
\end{center} 

\section{Generalized Centrality}
\label{sec:centrality}
The generalized Laplacian allows us to study properties of networks, such as vertex centrality. Recall that centrality is used to capture how ``central" or important a vertex is in a network. In the context of dynamical systems, we aim for the following desirable properties when defining a proper centrality measure:
\begin{itemize}
 \item Centrality should be a per-vertex measure, with all values being positive scalars.
 \item Centrality of a vertex should be strongly related with the vertex state variable.
 \item Centrality should be independent of initial state vectors.
 \item Centrality should not change over time during the dynamical process.
\end{itemize}
The above conditions ensure that the centrality measure of a vertex is solely determined by the structure of the network and the interactions taking place on it. It follows our intuition that the importance of a vertex should depend on the structure of a network, not the specific initializations of the network, nor should it change over time.

The various centrality measures introduced in the past have lead to very different conclusions about the relative importance of vertices \cite{Page99thepagerank,Bonacich01,Ghosh11physrev,ghosh_rethinking_2012,Lerman12pre}. Among them, degree centrality, eigenvector centrality and page rank have been defined based on fundamentally different approaches. Our generalized Laplacian framework unifies these disparate centrality measures and shows them to be related to solutions of different dynamic processes on the network.

\subsection{Stationary Distribution of a Random Walk}
In random walks, a vertex has high centrality if it is visited frequently by the random walk. This is given by the state vector of the random walk. Equation~\ref{eq:dynamics} gives the weight distribution of the
dynamic process at any time
\begin{equation}
\label{eq:solution_RW}
\ttheta(t) = e^{ -\LL^{RW} t}\cdot\ttheta(0)
= \sum_{k=0}^{\infty}\frac{(-t)^{k}}{k!}{{\LL^{RW}}^{k}\ttheta(0)},
\end{equation}
where $\ttheta(0)$ is the state vector describing the initial distribution of the process.
The stationary distribution of the process is:
\begin{equation}
\label{eq:equilibrium}
\lim_{t \rightarrow \infty}\ttheta(t) = \ppi \quad with\quad \pi_i=\frac{{d_{\WW}}_{i}\tau_i}{\sum_{j} {d_{\WW}}_{j}\tau_j}\;,
\end{equation}
because
\begin{equation*}
    (\DD_{\WW}-\WW) (\TT\DD_{\WW})^{-1} \Pi =  (\DD_{\WW}-\WW) \overrightarrow{1} = \overrightarrow{0}\;,
\end{equation*}
with $\ppi$ being the vector with $\pi$ entries and $\Pi$ being the diagonal matrix with the same elements. By convention, $\ppi$ is the standard centrality measure in conservative processes, including random walks \cite{ghosh_rethinking_2012}.

By defining centrality as the stationary distribution of a random walk, the intuition of vertex importance becomes the total time a random walk spends at each vertex after convergence. This is proportional to both vertex degrees and vertex delay factors, which will be later interpreted as the volume measure. If $\LL^{RW}$ is a normalized Laplacian, this centrality measure is exactly the heat kernel page rank \cite{ChungLocalHeatKernal}, which is identical to degree centralities since $\WW=\AA$ and $\TT = \II$.

\subsection{Stationary Distribution in Consensus Dynamics}
In consensus processes, the state vector always converges to a uniform ``belief" state, where each vertex has the same value of the dynamic variable. As a result, the stationary distribution is not an appropriate measure of vertex centrality, since it deem all vertices to be equally important. However, the value of uniform ``belief" every vertex agrees to is
\begin{equation}
\pi_i = \dfrac{1}{\sum_j {d_{\WW}}_j \tau_j}\sum_i \theta_i(0) {d_{\WW}}_i \tau_i \;,
\end{equation}
where weight of vertex $i$ in this average is given by
\begin{equation}
\frac{{d_{\WW}}_{i}\tau_i}{\sum_{j} {d_{\WW}}_{j}\tau_j}\;.
\end{equation}

Intuitively, as a measure of importance, it make sense to define the centrality of a vertex in the consensus process as its contribution to the final agreement value. This consistency between ``consensus" and ``random walk" in centrality measures is more than a coincidence, and lead us to define \emph{generalized centrality}.

\subsection{Generalized Centrality}
As shown in \secref{sec:similarity-transformations}, the matrices connected through a similarity transformation represent the same linear operator up to a change of basis. For example, the relationship between ``consensus" and ``random walk" dynamics can be represented as the following diagram.
\begin{center}
\begin{tikzpicture}
  \matrix (m) [matrix of math nodes,row sep=3em,column sep=4em,minimum width=2em]
  {
     \left[\ttheta(0)\right]_{CON} & \left[\frac{\overrightarrow{1}}{\sum_j d_j\tau_j}\right]_{CON} \\
     \ttheta(0) & \frac{d_i\tau_i}{\sum_j d_j\tau_j} \\};
  \path[-stealth]
    (m-1-1) edge node [above] {${\LL^{CON}}^{\infty}$} (m-1-2)
    (m-2-1) edge node [below] {${\LL^{RW}}^{\infty}$} (m-2-2)
            edge node [left] {$(\DD\TT)^{-1}$} (m-1-1)
    (m-1-2) edge node [right] {$\DD\TT$} (m-2-2);
\end{tikzpicture}
\end{center}

The above equivalence applies to all state vectors at any time, including the stationary state. To verify, we solve the generalized     Laplacian. First, we rewrite the initial (at $t=0$) state vector in terms of  the eigenvectors of $\LL$ $\{\vv_1, \vv_2,...,\vv_n \}$ , which form an eigenbasis for the space spanned by $\LL$. Keep in mind here we have indexed the eigenvectors by their corresponding eigenvalue in ascending order $\lambda_1 < \lambda_2 < ...$, with the smallest $\lambda_1$ as the dominant eigenvalue.
\begin{align}
 \ttheta(0) = z_1 \vv_1 + z_2 \vv_2 +... +z_n \vv_n\;.
\end{align}
The new coordinates $\overrightarrow{z} = \{z_1, z_2,..., z_n\}$ are connected to the original coordinates under the standard basis through the change of basis matrix $\mathbb{V}$, whose columns are composed of $\{\vv_1, \vv_2,...,\vv_n \}$:
\begin{align*}
 \mathbb{V} \overrightarrow{z} = \ttheta(0)\;,\quad\quad \overrightarrow{z} =  \mathbb{V}^{-1}  \ttheta(0) = \mathbb{U}^{T} \ttheta(0)\;.
\end{align*}

Note that the matrices $\mathbb{V}$ and $\mathbb{U}^{T}$ are inverse of each other, and they form a dual basis. In fact, if matrix $B$ is diagonalizable, we have
\begin{align*}
		B =& \mathbb{V}\Lambda\mathbb{V}^{-1}\\
 \mathbb{V}^{-1}B =& \Lambda \mathbb{V}^{-1}\;,
\end{align*}
which means $\mathbb{U}^{T} = \mathbb{V}^{-1}$ is simply a row composition of left eigenvectors of $\LL$, which we will write as $\{\uu_1^T, \uu_2^T,...,\uu_n^T \}$. Now, we solve for the state vector at stationary, similar to what we did with random walks~\eqref{eq:solution_RW}.

\begin{align}
\label{eq:solution}
\ttheta(t) =& e^{ -\LL t}\cdot\ttheta(0) = \sum_{k=0}^{\infty}\frac{(-t)^{k}}{k!}{{\LL}^{k}\ttheta(0)}\nonumber\\
	   =& \sum_{k=0}^{\infty}\frac{(-t)^{k}}{k!}{{\LL}^{k} (z_1 \vv_1 + z_2 \vv_2 +... +z_n \vv_n)}\nonumber\\
	   =& \sum_i \sum_{k=0}^{\infty} \frac{(-t)^{k}}{k!}{{\lambda_i}^{k} z_i \vv_i}\nonumber\\
	   =& \sum_i z_i e^{-\lambda_i t}\vv_i \nonumber\\
	   =& \sum_i \uu_i^T\ttheta(0) e^{-\lambda_i t} \vv_i\nonumber\\
	   =& \mathbb{V} e^{-\Lambda t}\mathbb{U}^{T} \ttheta(0)\;,
\end{align}
where in the last step we used matrices to simplify the notation, with $\Lambda$ being the diagonal matrix with eigenvalues. One interesting observation is that by left multiplying both sides with $\mathbb{U}^T$, we have
\begin{align*}
 \mathbb{U}^T \ttheta(t) =  \mathbb{U}^T \mathbb{V} e^{-\Lambda t}\mathbb{U}^{T} \ttheta(0) =  e^{-\Lambda t}\mathbb{U}^{T} \ttheta(0)\;.
\end{align*}
Recall that $\mathbb{U}^{T} \ttheta$ is a vector in the eigenbasis $\mathbb{V}$. Applying the operator $\LL$ to any input vector simply re-scales it according to eigenvalues. Since the smallest eigenvalue of the generalized Laplacian is always $0$, we have
$$\uu_1^T \ttheta(t) = e^{-\lambda_1 t}\uu_1^T\ttheta(0) = \uu_1^T\ttheta(0)\;,$$
which states that the state vector is conserved along the direction of the dominant eigenvector $\vv_1$.

The state vector reaches a stationary distribution $\pi$
\begin{align}
\pi=\lim_{t \rightarrow \infty} \ttheta(t) = \lim_{t \rightarrow \infty} e^{\lambda_1 t}\ttheta(t) =  z_1 (\dfrac{e^{\lambda_1}}{e^{\lambda_1}})^t\vv_1 + z_2 (\dfrac{e^{\lambda_1}}{e^{\lambda_2}})^t\vv_2 +... +z_n (\dfrac{e^{\lambda_1}}{e^{\lambda_n}})^t \vv_n \approx z_1 \vv_1\;,
\end{align}
Since all terms vanishes as $t \rightarrow \infty$, and the stationary state vector $\pi$ only depends on $\vv_1$. According to the conditions we listed at the start of the section, $z_1\vv_1$ qualifies as a time invariant, initialization independent vertex measure.

\tabref{tab:centrality} summarizes the properties of the stationary distributions and centralities associated with different similarity transformation of the generalized Laplacian. $[\ttheta]_{\rho}$ represents the vector under the basis specified by the $\rho$ parameter, with the random walk as the standard basis.

\begin{table}
\caption{Stationary state vectors of different formulations of the generalized Laplacian.}
\begin{minipage}{\textwidth}\begin{center}
\setlength{\tabcolsep}{.3em}
\bgroup\def\arraystretch{1.5}
	\begin{tabular}{cccccc}
	\hline\noalign{\vspace {-.5cm}}\hline
	Formulations &$[\ttheta(0)]_{\rho}$ &${\uu_1}[i]$ &$z_1$  & ${\vv_1}[i]$ &  $[\pi_i]_{\rho}$\\
	\hline
	$\LL^{SYM}$ 	   &$(DT)^{-1/2}\ttheta(0)$ &$\frac{\sqrt{d_i\tau_i}}{\sqrt{\sum_j d_j\tau_j}}$	&$\frac{1}{\sqrt{\sum_j d_j\tau_j}}$	
		   &$\frac{\sqrt{d_i\tau_i}}{\sqrt{\sum_j d_j\tau_j}}$ 	&$\frac{\sqrt{d_j\tau_j}}{\sum_j d_j\tau_j}$\\
	$\LL^{RW}$ &$\ttheta(0)$ &$\frac{1}{\sqrt{\sum_j d_j\tau_j}}$	&$\frac{1}{\sqrt{\sum_j d_j\tau_j}}$					
		   &$\frac{d_i\tau_i}{\sqrt{\sum_j d_j\tau_j}}$		&$\frac{d_j\tau_j}{\sum_j d_j\tau_j}$ \\
	$\LL^{CON}$&$(DT)^{-1}\ttheta(0)$ &$\frac{d_i\tau_i}{\sqrt{\sum_j d_j\tau_j}}$	 &$\frac{1}{\sqrt{\sum_j d_j\tau_j}}$	 
		   &$\frac{1}{\sqrt{\sum_j d_j\tau_j}}$			&$\frac{1}{\sum_j d_j\tau_j}$\\
	\hline\noalign{\vspace {-.5cm}}\hline
	\end{tabular}\egroup
\end{center}\end{minipage}
\label{tab:centrality}
\end{table}

The spectral theorem states that any symmetric real matrix has an orthonormal basis $\VV$ which consists of its eigenvectors. Under the generalized Laplacian framework, the symmetric formulation with $\rho =0$ falls into this category. In the above table, we have chosen the normalization of the orthonormal basis $\sqrt{\sum_j d_j\tau_j}$ as the common normalization for all formulations.

As the table shows, similarity transformations of the same operator give the same the state vector $\ttheta$, as long as the input and output vectors are properly transformed into the correct basis. They represent the same linear operator and the same dynamics in different coordinate systems. Since centrality is determined by the dynamic process on a given network, it should be unified across these similarity transformations. In theory, any coordinate system can be set as the standard. Here, following the intuitions described earlier, we define the unnormalized stationary state vector of the random walk as the \emph{generalized centrality}:
$$c_i = {d_{\WW}}_i\tau_i\;.$$

Another motivation behind this definition is to establish a direct connection between centrality and community measures, as we will later demonstrate with the notion of \emph{generalized volume}~\eqref{eq:hS}.

\subsection{Transformations and Special Cases}
Generalized centrality includes many well known centrality measures as special cases. Below, we summarize the induced special cases discussed in the previous subsection.

\paragraph{Normalized Laplacian} $\WW=\AA$ and $\TT =\II$, and hence the generalized centrality reduces to degree centrality $c_i = d_i$.

\paragraph{(Scaled) Graph Laplacian}  $\WW=\AA$ and $\TT =d_{max}\DD_{\AA}^{-1}$, hence the generalized centrality measure here is uniform with $c_i = d_{max}$. This intuition is easier to see if one considers the unnormalized Laplacian as a consensus operator, as it is often used to calculate the unweighted average of vertex states \cite{olfati-saber_consensus_2007}.

\paragraph{Replicator}  $\WW = \VV_{\AA}\AA\VV_{\AA}$ and $\TT =\II$. Recall that $\vv_{\AA}$ is the eigenvector of $\AA$ associated with the largest eigenvalue $\lambda_{\max}$. The generalized centrality in this case is $c_i = \lambda_{\max}{\vv_{\AA}}_i^2$, which corresponds to the stationary distribution of a maximal-entropy random walk on the original graph $\AA$. Note that $\vv_{\AA}$, also known as the \emph{eigenvector centrality}, was introduced by Bonacich \cite{Bonacich01} to explain the importance of actors in a social network based on the importance of the actors to which they were connected.

\paragraph{Unbiased Laplacian} $\WW = \DD_{\AA}^{-1/2}\AA\DD_{\AA}^{-1/2}.$ and $\TT={d_{\WW}}_{max}{\DD}_{\WW}^{-1}$. Similar to the (Scaled) Graph Laplacian, the generalized centrality measure here is uniform with $c_i = {d_{\WW}}_{max}$.

\paragraph{Other transformations} Besides the above special cases, we can use any transformation introduced in the last section for new dynamics, and the corresponding generalized centrality will be immediately apparent. Scaling transformations change $\tau_i$ terms, while reweighing transformations change ${d_{\WW}}_i$. Similarity transform has no effect on generalized centrality by definition.

\section{Generalized Community Quality}
\label{sec:community}
Now we investigate the effect of different dynamics on network communities. A community is a subset of vertices that interact more with each other according to the rules of a dynamic process than with outside vertices. A \emph{quality function} measures the degree to which this interaction is happening within communities. Here in the context of dynamical systems, we use a set of properties to constrain our choice of quality functions:
\begin{itemize}
 \item Community quality should be a global measure of interactions.
 \item Community quality should be invariant of initial state vectors.
 \item Community quality should should not change over time during the dynamical process.
 \item Community quality of a subset should be strongly correlated to the change of state variable of member vertices.
\end{itemize}
The above conditions ensure that the quality function is solely determined by the choice of communities, network structure and the interactions between vertices. We assume that the underlying network structure remains static as the dynamics unfolds. There is a catch, however, by simply dividing each vertex into its own community, we would have a optimal but trivial community division. Therefore, we need additional constrains on the size of the communities.

A closely related problem in geometry is the isoperimetric problem, which relates the circumference of a region to its area. Isoperimetric inequalities lie at the heart of the study of expander graphs in graph theory. In graphs, area translate into the size of vertex subsets and the circumference translate into the size of their boundary \cite{Chung1997Spectral}. In particular, we will focus on the graph bisection (cut) problem, which restricts the number of communities to two. For bisections, the constrains on community sizes becomes a balancing problem.

Just like centrality, various community measures used in previous literature lead to very different conclusions about community structure \cite{Fortunato10, Newman2006, Rosvall08, Zhu2014}. In this section, we will demonstrate that for graph bisection, some of them are essentially graph isoperimetric solutions under our generalized Laplacian framework, and more importantly, each one corresponds to a unified community measure for a class of similar operators including seemingly different formulations of ``consensus", ``symmetric" and ``random walk".

\subsection{Generalized Conductance}
\label {sec:Quality}
Recall that conductance is a community quality measure associated with unbiased random walks.
\begin{equation}
   \phi(S)  =  \frac{\cut_{\AA}(S,\bar{S})}{{\min(\vol(S),\vol(\bar{S}))}}\;,
\end{equation}
where $\vol(S) = \sum_{i\in S}d_i$ and $\cut_{\AA} = \sum_{i\in S,j\in \bar S} a_{i,j}$.

We generalize this notion with a claim that every dynamic process has an associated function that measures the quality of the cluster with respect to that process. Optimizing the quality function leads to cohesive communities, i.e., groups of vertices that ``trap'' the specific dynamic process for a long period of time.

Consider a dynamic process defined by a spreading operator $\LL=\TT^{-1/2}\DD_{\WW}^{-1/2}(\DD_{\WW}-\WW)\DD_{\WW}^{-1/2}\TT^{-1/2}$. We define the \emph{generalized conductance} of a set $S$ with respect to $\LL$ as:
\begin{align}
\label{eq:hS}
h_{\LL}(S)
=&\frac{\cut_{\WW}(S,\bar S)}{\min\left(\vol_{\LL}(S),\vol_{\LL}(\bar
  S)\right)}\nonumber\\
=&\frac{\sum_{i\in S,j\in \bar S} w_{i,j}}{\min\left(\sum_{i\in \mathrm{S}} {d_{\WW}}_i\tau_i,\sum_{i\in {\mathrm{\bar{S}}}}{d_{\WW}}_i\tau_i\right)}
\end{align}
\noindent We also define the minimum over all possible $S$ as the generalized conductance of the graph,
\begin{equation}
\label{eq:phiG}
\phi_{\LL}(G)=\min_{S\in V}h_{\LL}(S)\;.
 \end{equation}

Notice that we have generalized the volume measure of a set $S \subseteq V$ to $\vol_{\LL}(S) = \sum_{i\in S} c_i = \sum_{i\in S} {d_{\WW}}_i\tau_i$, which is the sum of generalized centrality of member vertices. This direct connection between generalized centrality and generalized conductance justifies our previous definition of generalized centrality (\secref{sec:centrality}).

Using the random walk perspective, the numerator measures the random jumps across communities, while the denominator ensures a balanced bisection. As previously pointed out, the presence of a good cut implies that it will take a random walk a long time to over come this boundary and before reaching its stationary distribution. This corresponds to a small numerator. The generalized volume can be interpreted as the total time a random walk stays within a community after convergence, as it is proportional to both vertex degrees and vertex delay factors. This interpretation of the denominator coincides with our definition of generalized centrality.

\subsection{Transformations and Special Cases}
Now with the generalized quality measures for communities defined, let us investigate how various transformations under the generalized Laplacian affect them, ultimately leading to different bisections found by our algorithms.

We can use any transformation to produce new dynamics, and the corresponding generalized conductance will be redefined according to Equation~\eqref{eq:hS},
\begin{align}
h_{\LL}(S)
=&\frac{\sum_{i\in S,j\in \bar S} w_{i,j}}{\min\left(\sum_{i\in \mathrm{S}} {d_{\WW}}_i\tau_i,\sum_{i\in {\mathrm{\bar{S}}}}{d_{\WW}}_i\tau_i\right)}
\end{align}

However, the effect of transformations on the resulting communities is not as obvious when compared with the generalized centrality. Below, we elaborate the effect of transformations on the generalized conductance measure in cases and examples.

First of all, the similarity transformation keeps both numerator and denominator the same, which make the quality function of the same communities identical. This ultimately leads to identical generalized conductances, which is the minimum over all possible bisections. Uniform scaling does change the denominator. However, because all possible bisections are scaled uniformly, the relative quality measure remain the same, leading to identical generalized conductances communities.

From the algorithmic perspective, both similarity and uniform scaling transformations preserve spectral properties of the operator matrix. Since spectrum is the only input information our spectral dynamics clustering algorithm~\ref{alg:spec} uses, we always expect to get the same solution after the transformations. This is not the case  with non-uniform scaling and reweighing transformations.

With non-uniform scaling, the numerator remains unchanged. It is each vertex's delay time change that scales the volume measures in the denominator, which in turn results in different optimal bisections becaues of the balance constrain. The following toy example demonstrates how non-uniform scaling affects balanced cuts.

\begin{figure}
    \includegraphics[width=0.32\textwidth]{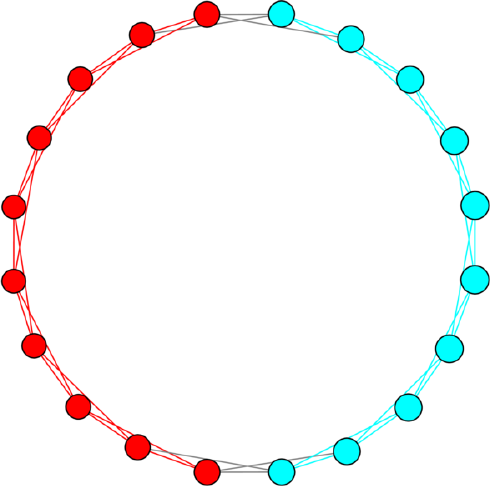}
    \includegraphics[width=0.32\textwidth]{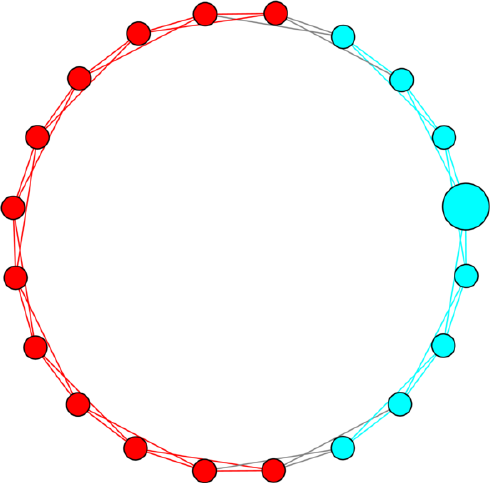}
    \includegraphics[width=0.32\textwidth]{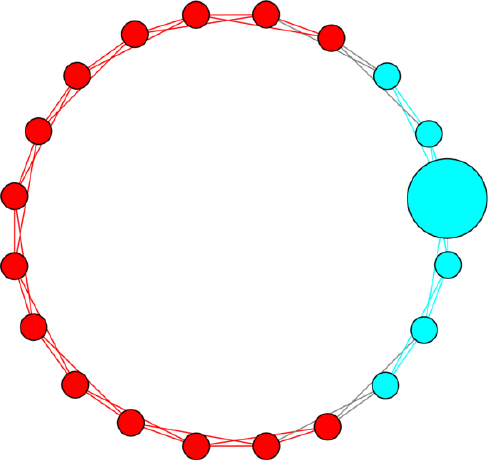}
  \caption{Optimal bisections with different non-uniform scalings}
\end{figure}

The reweighing transformation is the most complex of all, changing both the numerator and denominator in Equation~\eqref{eq:hS}.
This trade off between cut and balance can oftentimes be very complicated to analyze (as you will see with real world networks). For the purpose of illustration, we control the volume measures in the denominator by scaling the relevant vertex's delay time to almost $0$ in the following toy examples.
\begin{figure}
    \includegraphics[width=0.32\textwidth]{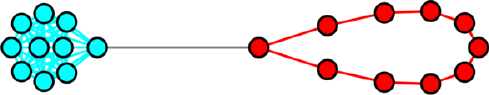}
    \includegraphics[width=0.32\textwidth]{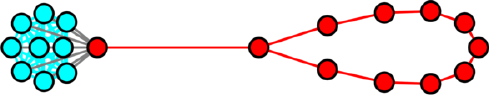}
    \includegraphics[width=0.32\textwidth]{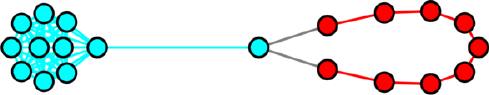}\\
    \includegraphics[width=0.43\textwidth]{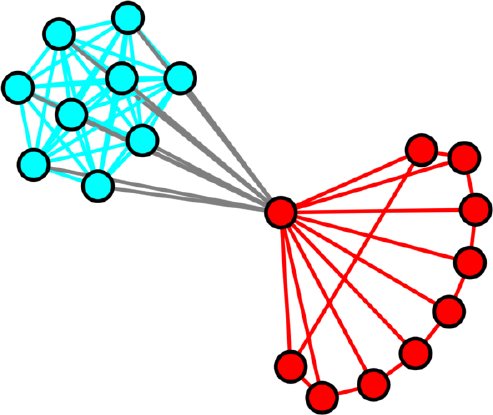}
    \includegraphics[width=0.43\textwidth]{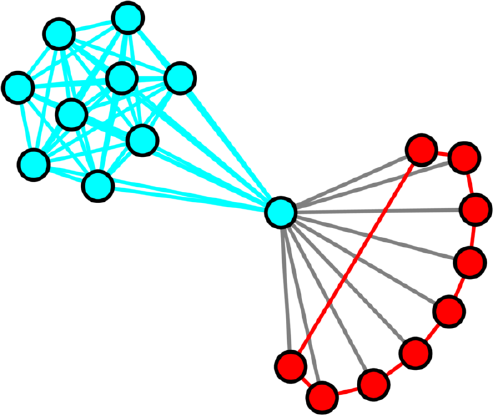}
  \caption{Optimal bisections with different interaction matrices. Here we have set the vertex delay factor of the vertices which changed its color to $\tau = 10^{-4}$ while all other vertices have $\tau=1$. This setup makes all bisections have almost identical denominators. It is the interaction matrix $\WW$ via different reweighing transformations changes the numerator (cut width), leading to different optimal bisections. Top row: different degree based biased random walks. From left to right, $\beta=1,5,-1$ respectively. Bottom row: left is the normalized Laplacian while the right is the replicator.}
\end{figure}

Finally, we summarize the induced special cases.
\paragraph{Normalized Laplacian} $\WW=\AA$ and $\TT =\II$, and hence
$h_{\LL}(S)$ is the conductance.

\paragraph{(Scaled) Graph Laplacian}  $\WW=\AA$ and $\TT =d_{max}\DD_{\AA}^{-1}$, hence
$$ h_{\LL}(S) = \frac{\cut_{\AA}(S,\bar S)}{\min(d_{\max}|S|,d_{max}|\bar{S}|)} = \frac{1}{d_{\max}}\cdot \frac{\sum_{i\in S,j\in \bar S} a_{i,j}}{\min(|S|,|\bar{S}|)},$$
This is the ratio cut scaled by $1/d_{\max}$.

\paragraph{Replicator}  $\WW = \VV_{\AA}\AA\VV_{\AA}$ and $\TT =\II$. Recall $\vv_{\AA}$ is the eigenvector of $\AA$ associated with the largest eigenvalue $\lambda_{\max}$.
The redefined cut size is $ \sum_{i\in S,j \in \bar{S}} w_{ij}=
 \sum_{i\in S,j \in \bar{S}}{v_{\AA}}_i a_{ij} {v_{\AA}}_j$. Therefore,
$$h_{\LL}(S) = \frac{\sum_{i\in \mathrm{S}, j\in {\mathrm{\bar S}}} {v_{\AA}}_i a_{ij}
				  {v_{\AA}}_j}{\lambda_{\max} \min\left(\sum_{i\in \mathrm{S}}{v_{\AA}}^{2}_i,\sum_{i \in {\mathrm{\bar S}}}{v_{\AA}}^{2}_i\right)}$$
Since the degree of a vertex in the interaction graph $\WW$ is ${d_{\WW}}_{i}=\sum_j w_{ij}=\lambda_{\max}{v_{\AA}}^{2}_i$, the generalized conductance of the Replicator is simply the conductance of the interaction graph \cite{Smith13spectral}.

\paragraph{Unbiased Laplaican} $\WW = \DD_{\AA}^{-1/2}\AA\DD_{\AA}^{-1/2}.$ and $\TT={d_{\WW}}_{max}{D}_{\WW}^{-1}$. The associated quality function is
$$h_{\LL}(S) = \frac{1}{{d_{\WW}}_{max}}\cdot \frac{\sum_{i\in S, j\in \bar S} \frac{a_{i,j}}{\sqrt{d_i d_j}}  }{\min(|S|,|\bar S|)}.$$
We call this quality function \emph{unbiased cohesion}.

Recall that the four special cases are related by transformations. By our analysis of each transformation, we expect their resulting optimal bisections to have the following relations:
\begin{center}
\begin{tikzpicture}
  \matrix (m) [matrix of math nodes,row sep=3em,column sep=4em,minimum width=2em]
  {
     Normalized\quad Laplacian & Laplacian \\
     Replicator &  Unbiased\quad Laplacian\\};
  \path[-stealth]
    (m-1-1) edge node [above] {share numerator} (m-1-2)
    (m-1-1) edge node [below] {reweighing+scaling both} (m-2-2)
    (m-1-1) edge node [left]  {reweighing both} (m-2-1)
    (m-1-2) edge node [right] {share denominator} (m-2-2);
\end{tikzpicture}
\end{center}

Notice that here the conductance for Laplacian and unbiased Laplacian share the same denominator even though they are related through both reweighing and scaling transformations. This is a result of their scaling cancelling out the reweighing effect on volumes (centralities). This is part of the motivation behind our design of the unbiased Laplacian operator for easier comparisons. We will be using these relationships for analyzing experimental results in the next section.

\subsection{Generalized Cheeger Inequality}
\label{sec:algorithms}
Given the generalized conductance measure, finding the best community bisection is still a combinatorial problem, which quickly becomes computationally intractable as the network grow in size. In this subsection we will generalized theorems for the classic Laplacian to our generalized setting, ultimately leading to efficient approximate algorithms with theoretical guarantees. For mathematical convenience we will use the symmetric formulation and assume that $\rho=0$ for $\LL$.
\label{sec:generalized-cheeger}
Cheeger inequality states that
$$ {\phi^2 (G)}/{2} \leq \lambda_2 \leq 2 \phi(G) $$
where $\lambda_2$ is the second smallest eigenvalue of the symmetric normalized
Laplacian, $\boldsymbol{\mathcal{L}}=\boldsymbol{I} -
\boldsymbol{D}^{-1/2}\boldsymbol{W}\boldsymbol{D}^{-1/2},$
and $\phi(G)$ is conductance.
The relationship between conductance and spectral properties of the Laplacian enables the use of its eigenvectors for partitioning graphs, particularly the nearest-neighborhood graphs and finite-element meshes
 \cite{Spielman96spectralpartitioning}.

In this section,  we generalize Cheeger's inequality to any spreading operator under our framework and its associated generalized conductance of the graph (given by Eq.~\ref{eq:phiG}). Our generalization of Cheeger's inequality comes with algorithmic  consequences. It leads to spectral partitioning algorithms that are efficient in finding low conductance cuts for a given operator.


\begin{theorem}(Generalized Cheeger Inequality)\\
\label{th:1}
Consider the dynamic process described by a (properly scaled) spreading operator $\LL=\TT^{-1/2}\DD_{\WW}^{-1/2}(\DD_{\WW}-\WW)\DD_{\WW}^{-1/2}\TT^{-1/2}$.
Let $\lambda_1 \leq \lambda_{2} \leq ... \leq \lambda_{n}$ be the
eigenvalues of $\LL$.
Then $\lambda_{1}=0$ and $\lambda_{2}$ satisfies the following inequalities:
$$\phi_{\LL}(G)^{2}/{2} \leq \lambda_{2} \leq 2 \phi_{\LL}(G)$$
where $\phi_{\LL}(G)$ is given by Eq.~\ref{eq:phiG}.
\end{theorem}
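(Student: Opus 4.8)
The plan is to reduce the generalized Cheeger inequality to the classical one by exploiting the structure of $\LL$. The key observation is that the generalized Laplacian $\LL = \TT^{-1/2}\DD_{\WW}^{-1/2}(\DD_{\WW}-\WW)\DD_{\WW}^{-1/2}\TT^{-1/2}$ is itself the symmetric normalized Laplacian of a modified graph — but with a twist coming from the delay matrix $\TT$. Concretely, I would first verify that $\lambda_1 = 0$ with eigenvector $\vv_1$ proportional to $(\sqrt{{d_{\WW}}_i\tau_i})_i$, which is immediate since $(\DD_{\WW}-\WW)\DD_{\WW}^{-1/2}\TT^{-1/2}\vv_1 \propto (\DD_{\WW}-\WW)\overrightarrow{1} = \overrightarrow{0}$, and positive semi-definiteness (hence $\lambda_i \geq 0$) follows from $\LL$ being congruent to $\DD_{\WW}-\WW$, which is the Laplacian of a graph with nonnegative weights.

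Next, following the classical proof of Cheeger, I would set up the Rayleigh quotient characterization of $\lambda_2$. Using the substitution $\ttheta = \TT^{-1/2}\DD_{\WW}^{-1/2}\bm{g}$, one gets
$$\lambda_2 = \min_{\bm{g} \perp_{\TT\DD_{\WW}} \overrightarrow{1}} \frac{\sum_{(i,j)} w_{ij}(g_i - g_j)^2}{\sum_i {d_{\WW}}_i \tau_i\, g_i^2},$$
where the orthogonality is with respect to the inner product weighted by $\TT\DD_{\WW}$. The upper bound $\lambda_2 \leq 2\phi_\LL(G)$ then comes from plugging in the test vector $\bm{g} = \mathbf{1}_S - \alpha\overrightarrow{1}$ for the optimal set $S$, with $\alpha$ chosen to enforce the weighted orthogonality; the numerator becomes $\cut_\WW(S,\bar S)$ and the denominator is controlled below by $\min(\vol_\LL(S),\vol_\LL(\bar S))$, exactly as in the standard argument, since the delay factors $\tau_i$ enter numerator and denominator consistently (the numerator has no $\tau$, which only helps).

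For the harder lower bound $\phi_\LL(G)^2/2 \leq \lambda_2$, I would run the standard ``sweep cut'' rounding argument on the eigenvector $\vv_2$. Let $\bm{g}$ be the optimizer above; take its positive part $\bm{g}_+$ supported on a set of generalized volume at most half the total (restricting to the appropriate side), order the vertices by $g$-value, and use the Cauchy–Schwarz / co-area style inequality
$$\sum_{(i,j)} w_{ij}|g_i^2 - g_j^2| \;\leq\; \Bigl(\sum_{(i,j)} w_{ij}(g_i-g_j)^2\Bigr)^{1/2}\Bigl(\sum_{(i,j)} w_{ij}(g_i+g_j)^2\Bigr)^{1/2}$$
together with the bound that each threshold cut $S_t = \{i : g_i^2 > t\}$ has $\cut_\WW(S_t,\bar S_t) \geq \phi_\LL(G)\cdot \vol_\LL(S_t)$ (using the volume balance from the support restriction), and $\sum_t \vol_\LL(S_t)\,dt = \sum_i {d_{\WW}}_i\tau_i g_i^2$. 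Combining these yields $\lambda_2 \geq \phi_\LL(G)^2/2$. The main obstacle is bookkeeping the delay factors $\tau_i$ through the co-area formula: the volume measure $\vol_\LL$ now carries the $\tau_i$ weights while the cut measure $\cut_\WW$ does not, so I need to be careful that the layer-cake identity $\int_0^\infty \vol_\LL(S_t)\,dt = \sum_i c_i g_i^2$ uses the generalized volume consistently — but this works precisely because $\vol_\LL$ is defined as $\sum_{i\in S} c_i = \sum_{i\in S}{d_{\WW}}_i\tau_i$ and the Rayleigh denominator is the same weighted sum. Once the weighted co-area identity is in place, the rest is the classical Cheeger rounding verbatim.
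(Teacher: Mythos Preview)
Your outline is essentially the paper's own proof: same Rayleigh-quotient reformulation via the substitution $g[u]=f[u]/\sqrt{{d_{\WW}}_u\tau_u}$, same test-vector argument for the upper bound, and the same sweep-cut plus Cauchy--Schwarz rounding for the lower bound. So the strategy is correct.

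There is, however, one genuine gap in your lower-bound sketch. You write that ``once the weighted co-area identity is in place, the rest is the classical Cheeger rounding verbatim,'' but it is \emph{not} quite verbatim, and this is exactly where the hypothesis that $\LL$ is \emph{properly scaled} ($\tau_i\geq 1$ for all $i$) enters. After Cauchy--Schwarz you must bound the factor
\[
\sum_{(i,j)} w_{ij}\,(g_i+g_j)^2 \;\leq\; 2\sum_{(i,j)} w_{ij}(g_i^2+g_j^2) \;=\; 2\sum_i {d_{\WW}}_i\, g_i^2,
\]
and this last expression carries \emph{no} $\tau$. To match it against the Rayleigh denominator $\sum_i {d_{\WW}}_i\tau_i\, g_i^2$ and conclude $\lambda_2\geq \phi_\LL(G)^2/2$, you need
\[
\sum_i {d_{\WW}}_i\, g_i^2 \;\leq\; \sum_i {d_{\WW}}_i\tau_i\, g_i^2,
\]
which is precisely $\tau_i\geq 1$. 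Your discussion of bookkeeping $\tau$ focuses only on the co-area side (where indeed things line up), and misses this second place where the delay factors intrude. The assumption is not cosmetic: if one rescales $\TT\to c\TT$ with $c\to 0$, then $\lambda_2$ scales like $1/c$ while $\phi_\LL(G)^2$ scales like $1/c^2$, so the inequality $\phi_\LL^2/2\leq\lambda_2$ would fail without the normalization. The paper's proof singles out exactly this step as the one place the properly-scaled hypothesis is invoked; you should flag it too.
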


\begin{proof} We prove the theorem by following the approach for proving the classic Cheeger's inequality (see \cite{Chung1997Spectral}).

Let $(\tau_1,...,\tau_n)$ be the diagonal entries of $\TT$, and $\bvec{v}_1$ be the eigenvector associated with $\lambda_1$.
Note that $\bvec{v}_1 = \TT^{1/2}\DD_{\WW}^{1/2}\cdot \overrightarrow{1}$, where $\overrightarrow{1}$ denotes the column vector of all 1's, is an eigenvector of $\calL$ associated
with eigenvalue $\lambda_0 = 0$.
Let $\vol_{\LL}(S) = \sum_{i\in S} d_i\tau_i$ for $S\subseteq V$, where for clarity we abuse the notation $d_i$ and use it as ${d_{\WW}}_i$.
Suppose $\bvec{f}$ is the eigenvector associated with $\lambda_2$.
Then, $\bvec{f}\perp \bvec{v}_1$.
Consider vector $\bvec{g}$ such that $g[u]=\displaystyle {f[u]}/{\sqrt{d_{u}\tau_u}}$.
The fact that $\bvec{f}\perp \bvec{v}_1$ then implies
$\displaystyle \sum_{v}g[v] {d_{v}\tau_v}=0$.
Then,
\begin{eqnarray*}
\lambda_{2} & =  & \frac{\bvec{f}^T\calL
 \bvec{f}}{\bvec{f}^{T}\bvec{f}} =  \displaystyle
\frac{\sum_{u,v\in
    V}\left(\frac{f[u]}{\sqrt{d_{u}\tau_u}}-\frac{f[v]}{\sqrt{d_{v}\tau_v}}\right)^{2}w_{u,v}}{\sum_{v}f[v]^{2}}\\
& = & \frac{\sum_{u,v\in
    V}\left(g[u] - g[v]\right)^{2}w_{u,v}}{\sum_{v}g[v]^{2}d_v\tau_v}
\end{eqnarray*}
Instead of sweeping the vertices of $G$ according to the eigenvector
$\bvec{f}$ itself,  we sweep the vertices of the graph $G$ according to $\bvec{g}$  by ordering the vertices of $G$ so that
$$g[v_1]\ge g[v_2] \ge \cdots \ge g[v_n]$$
and consider sets  $S_i=\{ v_1, \cdots , v_i \}$ for all $1 \leq i \leq n$.

Similar to \cite{Chung1997Spectral}, we will eventually only consider the first ``half" of the sets $S_i$ during the sweeping: Let $r$ denote the largest integer such that
$ \vol_{\LL}(S_r) \leq \vol_{\LL}(V)/2$.
Note that
\begin{eqnarray*}
& &\sum_v {(g[v]- g[v_r])}^2 d_v\tau_v \\
& =& \sum_v g[v]^2 d_v\tau_v + g[v_r]^2 d_v\tau_v
\geq \sum_v g[v]^2 d_v\tau_v.
\end{eqnarray*}
where the first equation follows from $\sum_v g[v]
{d_v}\tau_v=0$.
We denote the positive and negative part of $g-g[v_r]$ as $g_{+}$ and $g_{-}$ respectively:
\begin{equation}
g_+[v]=\begin{cases}
    g[v] - g[v_r], & \text{if $g[v] \ge  g[v_r]$}.\\
    0, & \text{otherwise}.
    \end{cases}
\end{equation}

\begin{equation}
    g_-[v]=\begin{cases}
    |g[v] - g[v_r]|, & \text{if $g[v] \le  g[v_r]$}.\\
    0, & \text{otherwise}.
  \end{cases}
\end{equation}
Now
\begin{eqnarray*}
&&\lambda_2  =  \frac{\sum_{u, v\in V}  (g[u]- g[v])^2 w_{u,v}}{\sum_v g[v]^2 d_v\tau_v}\nonumber \\
&\ge& \frac{\sum_{u,v\in V} (g_+[u]-g_+[v])^2w_{u,v} +(g_-[u]
  -g_-[v])^2 w_{u,v} }{\sum_v (g_+[v]^2 +g_-[v]^2)
  d_v\tau_v} \\
&\geq & \min\left[
 \frac{\sum (g_+[u]-g_+[v])^2w_{u,v}}{\sum_v g_+[v]^2
   d_v\tau_v},
\frac{\sum (g_-[u]
  -g_-[v])^2 w_{u,v} }{\sum_v g_-[v]^2
  d_v\tau_v}
\right]
\end{eqnarray*}
Without loss of generality, we assume the first ratio is at most the
second ratio, and will mostly focus on the vertices $\{v_1,....,v_r\}$ in the first ``half" of the graph in the analysis below. Thus,
\begin{eqnarray*}
\lambda_2 &\geq &   \frac{\sum_{u,v} (g_+[u]-g_+[v])^2w_{u,v}}{\sum_v g_+[v]^2
   d_v\tau_v}
\\
& \geq &
 \frac{\left(\sum_{u,v} (g^2_+[u]-g^2_+[v])w_{u,v}\right)^2}{\left(\sum_v g_+[v]^2
   d_v\tau_v\right)\left(\sum_{u,v}(g_+[u]+g_+[v])^2w_{u,v}\right)}\\
\end{eqnarray*}
which follows from the Cauchy-Schwartz inequality.

We now separately analyze the numerator and denominator. To bound the denominator, we will use the following property of $\tau_i$:
Because $\LL$ is properly scaled, $\tau_i \geq 1$ for all $i\in V$.
Therefore,
\begin{eqnarray*}
\sum_{u,v}(g_+[u]+g_+[v])^2w_{u,v}  &\leq&  \sum_{u,v}
2(g^2_+[u]+g^2_+[v])w_{u,v} \nonumber \\
 &=&  2 \sum_{u\in V} g^2_+[u] d_u \\
 &\leq&   2 \sum_{u\in V} g^2_+[u] d_u\tau_u.
\end{eqnarray*}
Hence, the denominator is at most
$$2 \left(\sum_{u\in V} g^2_+[u] d_u\tau_u\right)^2.$$

To bound the numerator, we consider subsets  of vertices $S_i=\{ v_1, \cdots , v_i
\}$   for all $1 \leq i \leq r$ and define $S_0 = \emptyset$.
First note that
\begin{eqnarray}\label{eqn:GammaVol}
\vol_{\LL}(S_{i}) - \vol_{\LL}(S_{i-1}) = d_{v_i}\tau_{v_i}.
\end{eqnarray}
By the definition of $\phi_{\LL}(G)$,
  we know $\phi_{\LL}(G) \leq \min_{i} h_{\LL}(S_i)$ for all
  $1\leq i\leq r$, where recall the function $h_S(\calL)$ is
  defined by Eq. ~\ref{eq:hS}.
Since $\vol_{\LL}(S_i) \leq \vol_{\LL}(\bar{S_i})$ for all
$1\leq i\leq r$,
 we have
\begin{eqnarray}\label{eqn:cutsize}
\cut(S_i,\bar{S_i}) \geq\phi_{\LL}\cdot \vol_{\LL}(S_i)
\end{eqnarray}

By orienting vertices according to $v_1,...,v_n$, we can express the numerator
\begin{eqnarray*}
\mbox{Num} & = &
\left(\sum_{u,v} (g^2_+[u]-g^2_+[v])w_{u,v}\right)^2  \\
& = &
\left(\sum_{i< j} \left(\sum_{k=0}^{j-i-1}
g^2_+[v_{i+k}]-g^2_+[v_{i+k+1}]\right)w_{v_i,v_j}\right)^2\\
&& \mbox{Rewrite the difference as a telescoping series}\\
& = &
\left(\sum_{i=1}^{n-1}  \left(g^2_+[v_{i}]-g^2_+[v_{i+1}]\right)\cdot
\cut(S_i,\bar{S_i})\right)^2 \\
&& \mbox{Collecting $(v_i,v_{i+1})$ terms}
\end{eqnarray*}
\begin{eqnarray*}
& \geq & \left(\sum_{i=1}^{n-1}  \left(g^2_+[v_{i}]-g^2_+[v_{i+1}]\right)\cdot
\phi_{\LL}\cdot \vol_{\LL}(S_i)\right)^2 \\
&& \mbox{By Eqn: \ref{eqn:cutsize}}\\
& = &\phi_{\LL}^2\cdot\left(\sum_{i=1}^{n}
 g^2_+[v_{i}]\cdot \left(\vol_{\LL}(S_i) -\vol_{\LL}(S_{i+1})\right)
  \right)^2 \\
&& \mbox{By Eqn. \ref{eqn:GammaVol} and $g_+(v_n) = 0$}\\
&= &\phi_{\LL}(G)^2\cdot\left(\sum_{i=1}^{n}
 g^2_+[v_{i}]\cdot d_{v_i}\tau_i\right)^2.
\end{eqnarray*}

Combining the bounds for the numerator and the denominator, we obtain
$\lambda_2 \leq\phi_{\LL}^2/2$ as stated in the theorem.
The right hand side of the theorem follows from the same argument for
the standard Cheeger Inequality.
\end{proof}

\subsection{Spectral Partitioning for Generalized Conductance}
The generalized Cheeger inequality is essential for providing theoretical guarantees to greedy community detection algorithms. In this section, we extend traditional spectral clustering algorithm to the generalized Laplacian setting.

Given a weighted graph $G=(V,E,\AA)$ and a operator $\LL$, we can use the standard sweeping method in the proof of Theorem~\ref{th:1} to find a partition $(S,\bar{S})$. This procedure is described in Algorithm~\ref{alg:2}.

\begin{algorithm}
\caption{Spectral Dynamics Clustering $(G,\LL)$}
\label{alg:2}
\noindent {\textbf{Input}:} weighted network: $G = (V,E,\AA)$,  and spreading operator $\LL$ defined by the interaction matrix $\WW$ and the vertex delay factor $\TT$.\\
{\textbf{Output}} partition: $(S,\bar S)$ \\
{\textbf{Algorithm}}
\begin{itemize}
\item Find the eigenvector $\bvec{f}$ of  $\LL=\TT^{-1/2}\DD_{\WW}^{-1/2}(\DD_{\WW}-\WW)\DD_{\WW}^{-1/2}\TT^{-1/2}$ associated with the second smallest eigenvalue of $\LL$.

\item Let vector $\bvec{g}$ be $g[u] = f[u]/\sqrt{{d_{\WW}}_u\tau_u}$.

\item Order the vertices of $G$ into $(v_1,....,v_n)$ such that  $g[v_1] \geq g[v_2] \geq ...\geq g[v_n]$.

\item Sweeping: For each $S_i =\{v_1,...,v_i\}$, compute
$$ h_{\LL}(S_i) =
  \frac{\cut(S_i,\bar{S_i})}{\min\left(\vol_{\LL}(S_i),\vol_{\LL}(\bar
    S_i)\right)}.$$

\item Output the $S_i$ with the smallest $h_{\LL}(S_i)$.
\end{itemize}
\label{alg:spec}
\end{algorithm}

Before stating the quality guarantee of the above algorithm, we quickly discuss its implementation and running time. The most expensive step is the computation of the eigenvalue vector $f$ associated with the second smallest eigenvalue of $\LL$. While one can use standard numerical methods to find   an approximation of this eigenvector -- the analysis would depend on the separation of the second and the third eigenvalue of $\LL$. Since $\LL$ is a diagonally scaled normalized Laplacian matrix, one can use the nearly-linear-time Laplacian solvers (e.g., by Spielman-Teng \cite{SpielmanTengLinear} or Koutis-Miller-Peng \cite{KoutisMillerPeng}) to solve linear systems in $\LL$.

Following \cite{SpielmanTengLinear},  let us consider the following notion of spectral approximation of $\LL$: Suppose $\lambda_2(\LL)$ the second smallest eigenvalue of $\LL$.
For $\epsilon \geq 0$, $\bar{\bvec{f}}$ is an {\em $\epsilon$-approximate second eigenvector} of $\LL$ if $ \bar{\bvec{f}} \perp \DD_{\AA}^{1/2}\TT^{1/2}\cdot \overrightarrow{1}$, and
$$
\frac{{\bvec{\bar f}}^T\LL {\bvec{\bar f}}}{{\bvec{\bar f}}^T{\bvec{\bar f}}} \leq (1+\epsilon)\cdot \lambda_2(\LL).
$$

The following proposition follows directly from the algorithm and Theorem 7.2 of \cite{SpielmanTengLinear} (using the solver from  \cite{KoutisMillerPeng}).

\begin{proposition}\label{prop:approximateEig}
For any interaction graph $G = (V,E,\WW)$ and vertex scaling factor $\TT$,
and $\epsilon,p >0$, with probability at least $1-p$, one
can compute an {\em $\epsilon$-approximate
  second eigenvector} of operator $\LL$ in time
$$
O\left(|E|\log n\log\log n \log(1/p)\log(1/\epsilon)/\epsilon\right).
$$
\end{proposition}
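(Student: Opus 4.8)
The plan is to exhibit $\LL$ as a diagonal congruence of an ordinary weighted graph Laplacian, reduce linear solves in $\LL$ to Laplacian solves, and then feed the resulting fast solver into the approximate-Fiedler-vector routine behind Theorem~7.2 of \cite{SpielmanTengLinear}. First I would write $M=(\DD_{\WW}\TT)^{-1/2}$, a diagonal matrix with strictly positive finite entries (using $\tau_i\ge 1$ and ${d_{\WW}}_i>0$), and set $\bar{\LL}_G=\DD_{\WW}-\WW$. Since $\WW$ is symmetric, nonnegative, with zero diagonal, $\bar{\LL}_G$ is a genuine weighted graph Laplacian on the edge set of $G$, hence positive semidefinite with null space spanned by $\overrightarrow{1}$. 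In the symmetric formulation ($\rho=0$) we have $\LL=M\,\bar{\LL}_G\,M$, so $\LL$ is congruent to $\bar{\LL}_G$; its smallest eigenvalue is $0$ with eigenvector $\vv_1=M^{-1}\overrightarrow{1}=\DD_{\WW}^{1/2}\TT^{1/2}\overrightarrow{1}$, and the orthogonality constraint on $\bar{\bvec f}$ in the definition of an $\epsilon$-approximate second eigenvector is precisely orthogonality to this $\vv_1$. The substitution $\bvec g=M\bvec f$ (i.e. $g[u]=f[u]/\sqrt{{d_{\WW}}_u\tau_u}$, matching Algorithm~\ref{alg:2}) turns the eigenproblem for $\LL$ into the generalized eigenproblem $\bar{\LL}_G\,\bvec g=\lambda\,\DD_{\WW}\TT\,\bvec g$.

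\textbf{Reduction to Laplacian solves.} Next I would observe that $\LL\bvec x=\bvec b$ is equivalent to $\bar{\LL}_G(M\bvec x)=M^{-1}\bvec b$; since $M$ and $M^{-1}$ are diagonal and cost $O(n)$ to apply, one solve in $\LL$ amounts to one solve in the graph Laplacian $\bar{\LL}_G$ plus $O(n)$ overhead. Invoking the nearly-linear-time SDD solver of Koutis--Miller--Peng \cite{KoutisMillerPeng}, each Laplacian solve can be carried out to relative accuracy $\delta$ (in the $\bar{\LL}_G$-norm) in time $O(|E|\log n\log\log n\log(1/\delta))$, which yields an approximate solver for $\LL$ within the same asymptotic bound.

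\textbf{Plugging into Spielman--Teng.} Theorem~7.2 of \cite{SpielmanTengLinear} says that, given black-box access to such an approximate Laplacian solver, one can compute with probability at least $1-p$ a vector orthogonal to the dominant eigenvector whose Rayleigh quotient is at most $(1+\epsilon)$ times the second eigenvalue. The only graph-specific ingredient of that argument is the solver, which the previous step supplies for $\LL$; the diagonal congruence $M$ enters only through $O(n)$-time pre- and post-multiplications and leaves the spectrum being approximated unchanged, so the guarantee transfers verbatim to $\LL$ and the output meets the definition used in Theorem~\ref{th:1}. For the running time I would track: $O(\log(1/\epsilon)/\epsilon)$ (shifted) inverse-power iterations, each calling the solver with inner accuracy $\delta=\mathrm{poly}(\epsilon/n)$ so that $\log(1/\delta)=O(\log(n/\epsilon))$ is absorbed into the existing logarithms, all repeated $O(\log(1/p))$ times for probability amplification; multiplying the per-solve cost above by these factors gives the stated
$$O\!\left(|E|\log n\log\log n\log(1/p)\log(1/\epsilon)/\epsilon\right).$$

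\textbf{Main obstacle.} The delicate point is justifying that the Spielman--Teng analysis, stated for (normalized) graph Laplacians, applies to $\LL$, which is only $\TT^{-1/2}$-congruent to the normalized Laplacian of $\WW$ rather than literally one. The cleanest route, which I would take, is the remark above that their argument is agnostic to the particular positive semidefinite operator provided a fast approximate solver and the null space are in hand -- both of which we have established -- so that none of their potential/mixing estimates need to be re-derived for the $\TT$-twisted operator. A secondary bookkeeping point to check is that the per-solve accuracy $\delta$ needed to guarantee a final Rayleigh-quotient error of $\epsilon$ is only polynomially small in $\epsilon/n$, so it contributes only a $\log(n/\epsilon)$ factor that is swallowed by the existing $\log n$ and $\log(1/\epsilon)$ terms; this is exactly the analysis carried out in \cite{SpielmanTengLinear} and is unaffected by the congruence.
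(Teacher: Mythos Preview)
Your proposal is correct and follows exactly the route the paper indicates: the paper's entire ``proof'' is the single sentence that the proposition ``follows directly from the algorithm and Theorem~7.2 of \cite{SpielmanTengLinear} (using the solver from \cite{KoutisMillerPeng}),'' and you have simply spelled out why---the diagonal congruence $\LL=M\bar{\LL}_G M$ with $M=(\DD_{\WW}\TT)^{-1/2}$ reduces solves in $\LL$ to Laplacian solves, so the Koutis--Miller--Peng solver plugs into the Spielman--Teng approximate-Fiedler routine unchanged. Your write-up is in fact more detailed than the paper's, but the approach is identical.
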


To use this spectral approximation algorithm (and in fact any numerical   approximation to the second eigenvector of $\LL$) in our spectral  partitioning algorithm for the dynamics, we will need a strengthened  theorem of Theorem \ref{th:1}.

\begin{theorem}(Extended Cheeger Inequality with Respect to Rayleigh Quotient)\\
\label{th:1-approx}
For any interaction graph $G = (V,E,\WW)$ and vertex scaling factor $\TT$,
(whose diagonals are $(\tau_1,...,\tau_n)$),
  for any vector $\bvec{u}$ such that
  $\bvec{u} \perp \DD_{\AA}^{1/2}\TT^{1/2}\cdot \overrightarrow{1}$,
if we order the vertices of $G$ into $(v_1,....,v_n)$ such that
   $g[v_1] \geq ...\geq  g[v_n]$, where $\bvec{g} =(\DD\TT)^{-1/2}\cdot \bvec{u}$
then
$$\frac{\left(\min_i h_{\LL}(S_i)\right)^2}{2} \leq
\frac{{\bvec{u}}^T\LL {\bvec{u}}}{{\bvec{u}}^T{\bvec{u}}}, $$
where $\LL=\TT^{-1/2}\DD_{\WW}^{-1/2}(\DD_{\WW}-\WW)\DD_{\WW}^{-1/2}\TT^{-1/2}$ and $S_i =\{v_1,...,v_i\}$.
\end{theorem}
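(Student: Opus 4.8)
The plan is to observe that the ``hard direction'' of the Cheeger argument reproduced in the proof of Theorem~\ref{th:1} --- the inequality $\phi_{\LL}(G)^2/2 \le \lambda_2$ --- never uses that $\bvec{f}$ is an eigenvector of $\LL$. The only facts about $\bvec{f}$ it invokes are (i) $\bvec{f} \perp \bvec{v}_1 = \TT^{1/2}\DD_{\WW}^{1/2}\cdot\overrightarrow{1}$, and (ii) the value of the Rayleigh quotient $\bvec{f}^T\LL\bvec{f}/\bvec{f}^T\bvec{f}$, which happens to equal $\lambda_2$ for an eigenvector but is simply an upper bound carried through the computation for any test vector. So I would rerun that argument with the arbitrary $\bvec{u}$ in place of $\bvec{f}$.

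Concretely, set $\bvec{g} = (\DD_{\WW}\TT)^{-1/2}\bvec{u}$, so $g[v] = u[v]/\sqrt{d_v\tau_v}$ (abusing $d_v$ for ${d_{\WW}}_v$ as in the earlier proof; here $\DD_{\AA}$ in the hypothesis is read as $\DD_{\WW}$, consistent with $\bvec{v}_1$). The diagonal conjugation in $\LL$ gives $\bvec{u}^T\LL\bvec{u} = \bvec{g}^T(\DD_{\WW}-\WW)\bvec{g}$ and $\bvec{u}^T\bvec{u} = \sum_v g[v]^2 d_v\tau_v$, so $R := \bvec{u}^T\LL\bvec{u}/\bvec{u}^T\bvec{u}$ equals exactly the ratio $\sum_{u,v}(g[u]-g[v])^2 w_{u,v}/\sum_v g[v]^2 d_v\tau_v$ that played the role of $\lambda_2$ in the proof of Theorem~\ref{th:1}. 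The hypothesis $\bvec{u} \perp \DD_{\WW}^{1/2}\TT^{1/2}\cdot\overrightarrow{1}$ unwinds to $\sum_v g[v]\, d_v\tau_v = 0$, which is precisely condition (i).

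From there the chain of inequalities transfers unchanged: order the vertices by $g[v_1]\ge\cdots\ge g[v_n]$; let $r$ be the largest index with $\vol_{\LL}(S_r)\le\vol_{\LL}(V)/2$; shift by $g[v_r]$ and split into $g_+,g_-$; use $\sum_v(g[v]-g[v_r])^2 d_v\tau_v \ge \sum_v g[v]^2 d_v\tau_v$ (from $\sum_v g[v]d_v\tau_v=0$) to pass to $R \ge \min$ of the two one-sided Rayleigh quotients; take the $g_+$ side without loss of generality; apply Cauchy--Schwarz; bound the denominator using $\tau_i\ge 1$ (``properly scaled''); and bound the numerator by the telescoping/collecting-$(v_i,v_{i+1})$ manipulation. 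The one change is at the step invoking $\cut(S_i,\bar S_i)\ge\phi_{\LL}(G)\cdot\vol_{\LL}(S_i)$: since $g_+$ is supported on $\{v_1,\dots,v_r\}$, only indices $i\le r$ contribute to the telescoping sum, and for those $\vol_{\LL}(S_i)\le\vol_{\LL}(\bar S_i)$, so $h_{\LL}(S_i)=\cut(S_i,\bar S_i)/\vol_{\LL}(S_i)$ and hence $\cut(S_i,\bar S_i)\ge(\min_j h_{\LL}(S_j))\cdot\vol_{\LL}(S_i)$. Carrying $\min_j h_{\LL}(S_j)$ through the rest of the computation yields $R \ge (\min_i h_{\LL}(S_i))^2/2$, which is the claim.

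I do not expect a genuine obstacle. The content is just the recognition that the lower-bound half of Cheeger is a statement about the Rayleigh quotient of an \emph{arbitrary} orthogonal test vector, together with the bookkeeping that the sweep sets appearing in that argument are exactly the $S_i$, so one may replace the graph-wide optimum $\phi_{\LL}(G)$ by the sweep-restricted optimum $\min_i h_{\LL}(S_i)$ --- a stronger conclusion, and the one the spectral partitioning algorithm (run on an approximate eigenvector, as in Proposition~\ref{prop:approximateEig}) actually needs. The only items needing minor care are the consistent reading of $\DD_{\AA}$ as $\DD_{\WW}$ and the compatibility of the ``first half'' restriction $i\le r$ with the telescoping bound, both already dispatched in the proof of Theorem~\ref{th:1}.
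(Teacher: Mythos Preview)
Your proposal is correct and matches the paper's own proof, which consists of a single sentence: ``The theorem follows directly from the proof of Theorem~\ref{th:1} if we replace vector $\bvec{f}$ \dots\ by $\bvec{u}$,'' together with a reference to Mihail's analogous result for Laplacian matrices. You have simply made explicit what the paper leaves implicit --- in particular, the observation that the sweep-restricted minimum $\min_i h_{\LL}(S_i)$ can replace $\phi_{\LL}(G)$ because only indices $i\le r$ contribute to the telescoping sum --- and your handling of the notational wrinkles ($\DD_{\AA}$ vs.\ $\DD_{\WW}$, the proper-scaling assumption) is sound.
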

\begin{proof}
The theorem follows directly from the proof of Theorem \ref{th:1} if we replace vector $\bvec{f}$ (the eigenvector of associated with the second smallest eigenvalue of $\LL$) by $\bvec{u}$. This theorem is the analog of a theorem by Mihail \cite{mihail89} for Laplacian matrices.
\end{proof}

The next theorem then follows directly from Proposition \ref{prop:approximateEig}, Theorem \ref{th:1-approx} and the definition of $\epsilon$-approximate second eigenvector of $\LL$ that provide a guarantee of the quality of the algorithm of this subsection.

\begin{theorem}
For any interaction graph $G = (V,E,\WW)$ and vertex delay factor $\TT$,
(whose diagonals are $(\tau_1,...,\tau_n)$),
 one can compute in time $$O(|E|\log n\log\log
 n\log(1/\epsilon)/\epsilon)$$
a partition $(S,\bar{S})$ such that
\begin{eqnarray*}
 h_{\LL}(S) &=&
  \frac{\sum_{v\in S,u\in \bar S} w_{u,v}}{\min\left(\sum_{v\in S} {d_{\WW}}_v\tau_v,
   \sum_{v\in \bar S} {d_{\WW}}_v\tau_v\right)}\\
   &\leq & \sqrt{2(1+\epsilon)\lambda_2(\LL)}
   \end{eqnarray*}
where $\TT^{-1/2}\DD_{\WW}^{-1/2}(\DD_{\WW}-\WW)\DD_{\WW}^{-1/2}\TT^{-1/2}$, $w_{u,v}$ is the
$(u,v)^{th}$ entry of the interaction matrix $\WW$, and $\lambda_2(\LL)$
  is the second smallest eigenvalue of $\LL$.
Consequently,
\begin{eqnarray*}
&& h_{\LL}(S) \leq 2\sqrt{(1+\epsilon)\phi_{\LL}(G)}\\
&=&2(1+\epsilon)\sqrt{ \min_{S^*\in V} \frac{\sum_{v\in S^*,u\in \bar {S^*}} w_{u,v}}{\min\left(\sum_{v\in S^*} {d_{\WW}}_v\tau_v,  \sum_{v\in \bar{S^*}} {d_{\WW}}_v\tau_v\right)}}\;.
\end{eqnarray*}
\label{th:generalSpectral}
\end{theorem}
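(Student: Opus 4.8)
The plan is to run the sweep of Algorithm~\ref{alg:2} on an approximate second eigenvector produced by Proposition~\ref{prop:approximateEig}, control its quality by chaining Theorem~\ref{th:1-approx} with the approximation guarantee, and then, for the second bound, invoke the generalized Cheeger inequality of Theorem~\ref{th:1}. Concretely, I would first use Proposition~\ref{prop:approximateEig} to obtain, with probability at least $1-p$ and in time $O(|E|\log n\log\log n\log(1/p)\log(1/\epsilon)/\epsilon)$, an $\epsilon$-approximate second eigenvector $\bar{\bvec{f}}$ of $\LL$. By definition this vector satisfies $\bar{\bvec{f}}\perp\DD_{\AA}^{1/2}\TT^{1/2}\cdot\overrightarrow{1}$ --- the direction of the zero eigenvalue of $\LL$, as identified in the proof of Theorem~\ref{th:1} --- and $\bar{\bvec{f}}^T\LL\bar{\bvec{f}}/(\bar{\bvec{f}}^T\bar{\bvec{f}})\le(1+\epsilon)\lambda_2(\LL)$. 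I would then execute the remaining steps of Algorithm~\ref{alg:2} with $\bar{\bvec{f}}$ in place of the exact eigenvector: set $\bvec{g}=(\DD_{\WW}\TT)^{-1/2}\bar{\bvec{f}}$, sort the vertices so that $g[v_1]\ge\cdots\ge g[v_n]$, form the prefixes $S_i=\{v_1,\dots,v_i\}$, and output the $S$ minimizing $h_{\LL}(S_i)$. This sweep costs only $O(|E|+n\log n)$, since one sorts once and then updates $\cut$ and $\vol_{\LL}$ incrementally along the ordering, so the total running time is dominated by the eigenvector computation and matches the stated bound (treating $p$ as a fixed constant, or carrying it explicitly as in Proposition~\ref{prop:approximateEig}).

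Next I would apply Theorem~\ref{th:1-approx} with $\bvec{u}=\bar{\bvec{f}}$: its hypothesis is exactly the orthogonality $\bar{\bvec{f}}\perp\DD_{\AA}^{1/2}\TT^{1/2}\cdot\overrightarrow{1}$ noted above, and $\bvec{g}=(\DD_{\WW}\TT)^{-1/2}\bar{\bvec{f}}$ is precisely the vector used in the sweep, so the theorem yields $(\min_i h_{\LL}(S_i))^2/2\le \bar{\bvec{f}}^T\LL\bar{\bvec{f}}/(\bar{\bvec{f}}^T\bar{\bvec{f}})$. Combining this with the $\epsilon$-approximation bound gives $(\min_i h_{\LL}(S_i))^2/2\le(1+\epsilon)\lambda_2(\LL)$, hence $h_{\LL}(S)=\min_i h_{\LL}(S_i)\le\sqrt{2(1+\epsilon)\lambda_2(\LL)}$, which is the first displayed inequality. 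For the ``consequently'' part I would substitute the right-hand inequality $\lambda_2(\LL)\le 2\phi_{\LL}(G)$ of Theorem~\ref{th:1} to obtain $h_{\LL}(S)\le\sqrt{4(1+\epsilon)\phi_{\LL}(G)}=2\sqrt{(1+\epsilon)\phi_{\LL}(G)}$, and then expand $\phi_{\LL}(G)$ using Eq.~\eqref{eq:phiG} and Eq.~\eqref{eq:hS} to display it as the stated minimum over $S^*\subseteq V$.

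Once the earlier results are in hand this argument is essentially bookkeeping, so I do not expect a genuine mathematical obstacle; the steps that need care are (i) checking that the orthogonality condition in the definition of an $\epsilon$-approximate second eigenvector coincides with the hypothesis of Theorem~\ref{th:1-approx}, and in particular reconciling the $\DD_{\AA}$ versus $\DD_{\WW}$ notation so that $\bvec{g}=(\DD_{\WW}\TT)^{-1/2}\bar{\bvec{f}}$ is the intended sweep vector; (ii) confirming that sweeping over all prefixes $S_1,\dots,S_n$ captures both ``halves'' of the graph, which holds because $h_{\LL}(S_i)=h_{\LL}(\bar S_i)$ by symmetry of the definition; and (iii) accounting honestly for the failure probability $p$ and the additive $O(|E|+n\log n)$ cost of the sweep in the final running-time statement.
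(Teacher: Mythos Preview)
Your proposal is correct and follows exactly the approach the paper takes: the paper states that the theorem ``follows directly from Proposition~\ref{prop:approximateEig}, Theorem~\ref{th:1-approx} and the definition of $\epsilon$-approximate second eigenvector of $\LL$,'' which is precisely your chain of Proposition~\ref{prop:approximateEig} $\to$ Theorem~\ref{th:1-approx} $\to$ Theorem~\ref{th:1}. Your write-up is in fact more careful than the paper's one-line justification, and the caveats you flag in (i)--(iii) (the $\DD_{\AA}$ versus $\DD_{\WW}$ inconsistency, the handling of $p$) are genuine notational loose ends in the paper rather than gaps in your argument.
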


\section{Experiments}
\label{sec:experiments}
So far, we focused on investigating the mathematical properties of the generalized Laplacian matrix. Now armed with the theory on the interplay between dynamics and network centrality and community-quality measures, we systematically investigate the different perspectives on network structure that emerge from different dynamic processes. We will illustrate that even this simple class of processes can lead to divergent views on a collection of widely studied real-world networks, about who the central vertices are and what are the cohesive communities.

\begin{table}
\caption{Specifications of candidate datasets}
\begin{minipage}{\textwidth}\begin{center}
\setlength{\tabcolsep}{.3em}
\bgroup\def\arraystretch{1.5}
	\begin{tabular}{lccccc}
	\hline\noalign{\vspace {-.5cm}}\hline
	Name & \#vertices  & \#edges & diameter & clustering\\
	\hline
	Zachary's Karate Club &34	&78 &5 &0.588 \\
	College Football &115 	&613	&4 &0.403	\\
	House of Representatives & 434&51033&4 &0.882 \\
	Political Blogs & 1490&16714&9&0.21 \\
	Facebook Egonets&4039 &88234&17&0.303 \\
	Power Grid &4941 &6594&46&0.107 \\
	\hline\noalign{\vspace {-.5cm}}\hline
	\end{tabular}\egroup
\end{center}\end{minipage}
\label{tab:datasets}
\end{table}

Table~\ref{tab:datasets} lists the networks we study empirically, and their properties. We treat all networks as undirected in this paper.
These networks come from different domains and embody a variety of dynamical processes and interactions, from real-world friendships (Zachary karate club \cite{ZacharyKarateClub}), to online social networks (Facebook \cite{mcauley2012learning}), to electrical power distribution (Power Grid \cite{watts1998collective}), to co-voting records (House of Representatives \cite{Smith13spectral}) and hyperlinked weblogs on US politics (Political Blogs \cite{adamic2005political}), to games played between NCAA football teams \cite{GirvanBetween}.

In the following experiments, we use two figures to demonstrate the differences between the four special cases under the generalized Laplacian framework. The first is the \emph{centrality profile}, which gives the generalized centrality for each vertex under a given a dynamic operator. To improve visualization, vertices are ordered by their centrality according to the normalized Laplacian matrix, and they are rescaled to fall within the same range.

To investigate the interplay between dynamics and their corresponding communities, we apply Algorithm~\ref{alg:spec} to these networks. Leskovec et al. \cite{Leskovec08www} used community profiles to study results of network partitioning. Community profile shows the conductance of the best bisection of the network into two communities of size $k$ and $N-k$ as $k$ varies. They found that community profiles of real-world networks reveal a ``core and whiskers'' organization of the network, with a  giant core and many small communities, or whiskers, loosely connected to the core. They showed that conductance is generally minimized for small $k$, which suggests that conductance is a poor measure for resolving network structure within the giant core.
Instead of the community profile, we use the \emph{sweep profile} to study differences in bisecting the network using different dynamic operators. Given a dynamic operator, a sweep profile gives the generalized conductance~\eqref{eq:hS} value for a potential community of size $k$ during the sweep in Algorithm~\ref{alg:spec}. Sweep profiles provide us an interesting perspective on how different dynamics lead to different communities, and they are related to centrality profiles we have shown earlier. To improve visualization, we rescale sweep profiles to lie within the same range.

Keep in mind that Algorithm~\ref{alg:spec} only guarantees a solution within a certain bound of the optimal. The visualizations of network layouts are manually created for better demonstrating the different dynamics, using a combination of ``Yifan Hu" and ``Force Atlas" algorithms from the Gephi software package \cite{gephi}.

\subsection{Zachary's Karate Club}
The first network we study is a social network consisting of $34$ members of a karate club in a university, where undirected edges represent friendships \cite{ZacharyKarateClub}. The network is made up of two assortative blocks centered around the instructor and the club president, each with a high degree hub and lower-degree peripheral vertices. With a simple community structure, this network often serves as a starting benchmark for community detection algorithms. Its centrality/sweep profiles and visualizations of optimal bisections under each special case dynamic are given below.

\begin{figure}
\centering
 \begin{subfigure}[b]{0.45\textwidth}
      \includegraphics[width=0.95\linewidth]{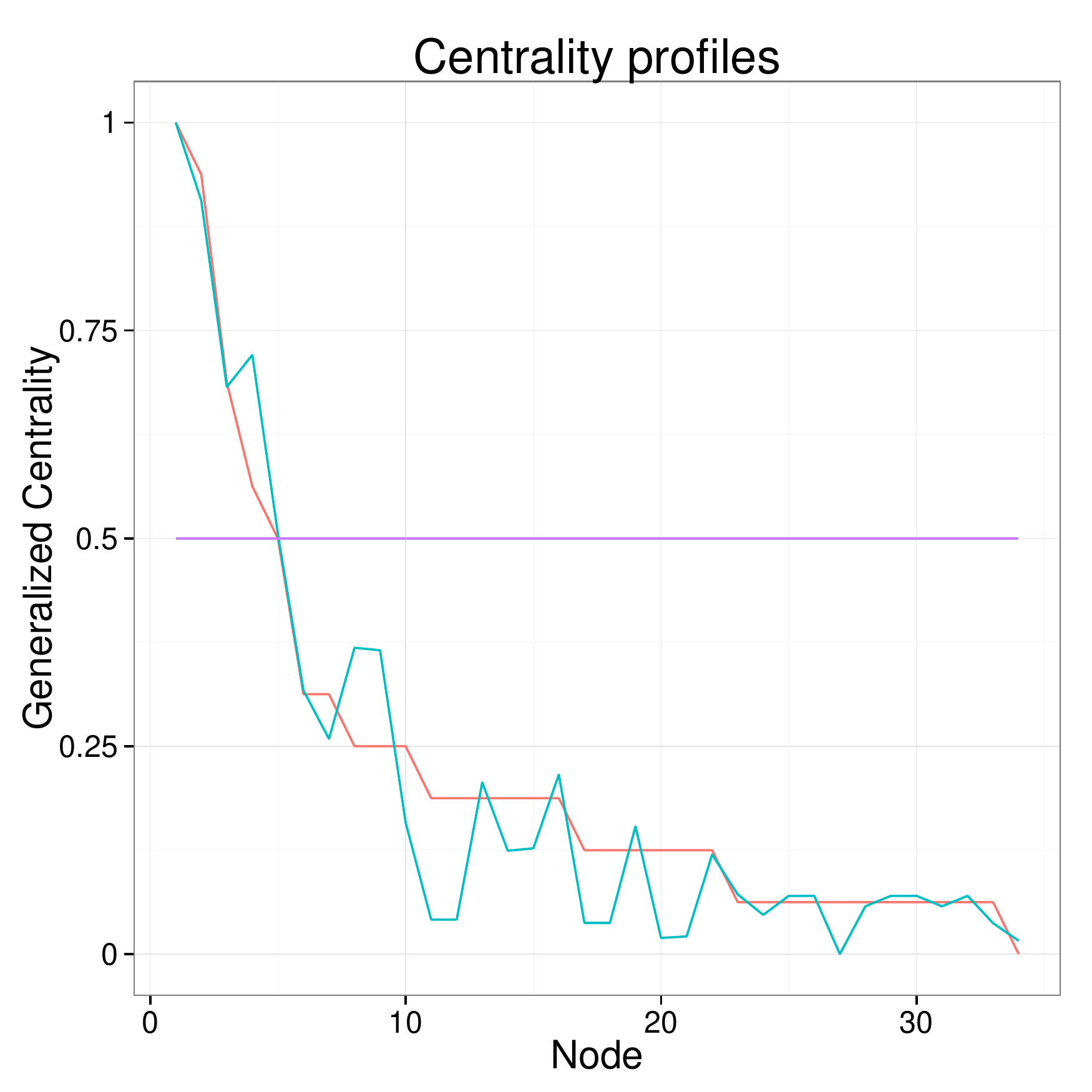}
      \caption{}\label{fig:1centrality}
 \end{subfigure}
 \begin{subfigure}[b]{0.45\textwidth}
      \includegraphics[width=0.95\linewidth]{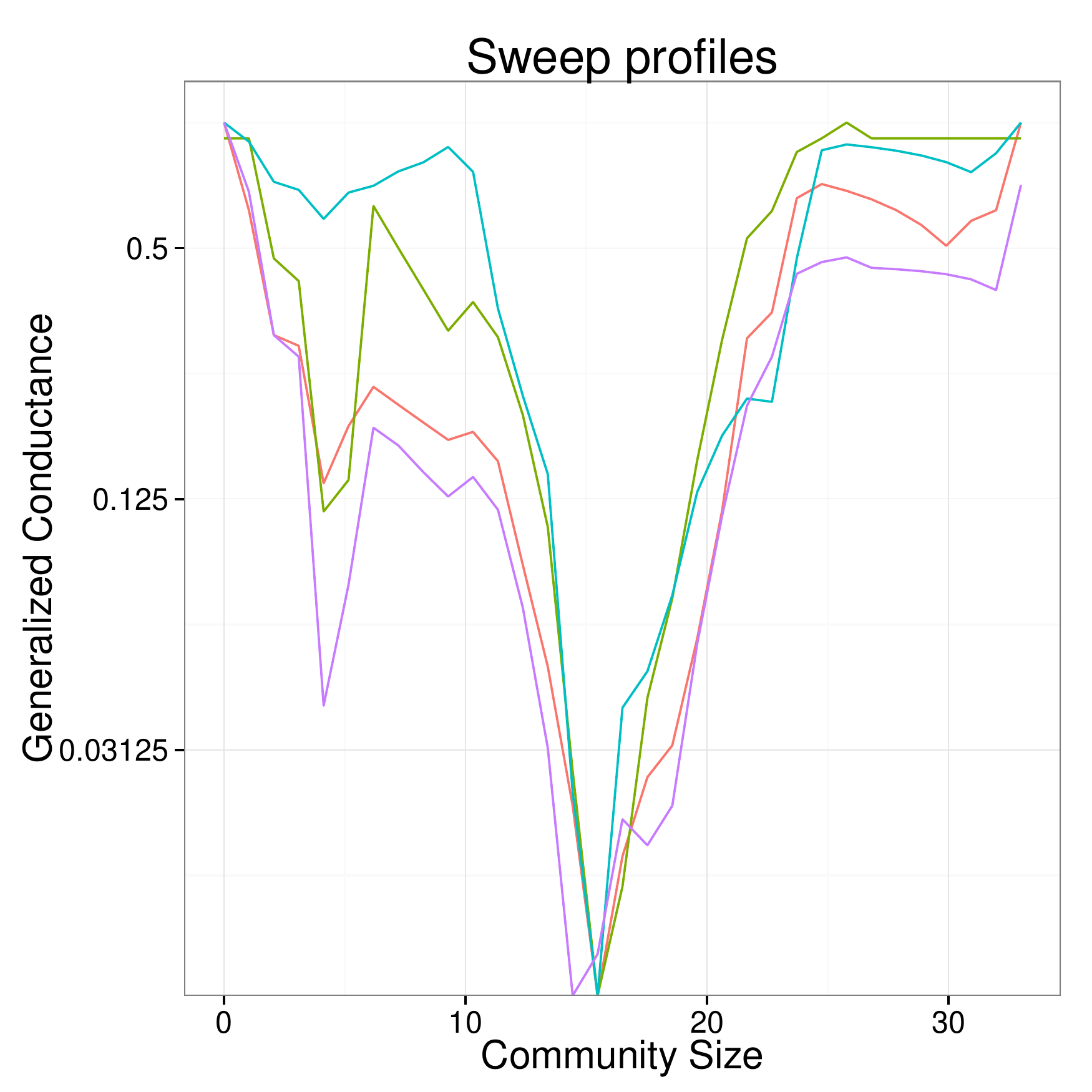}
      \caption{}\label{fig:1sweep}
 \end{subfigure}\\
 \includegraphics[width=0.98\textwidth]{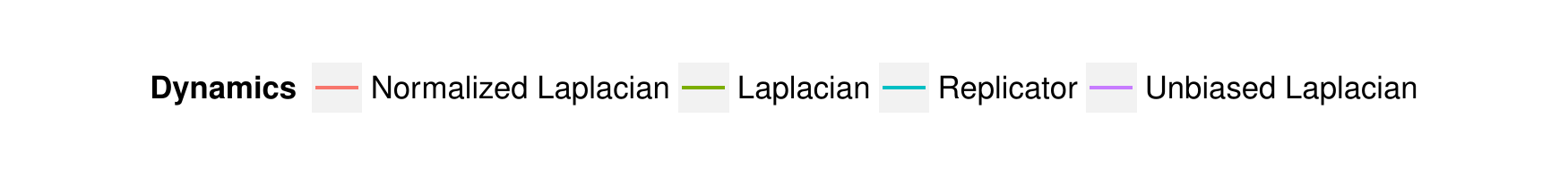}\\
 \begin{subfigure}[b]{0.32\textwidth}
      \includegraphics[width=0.95\textwidth]{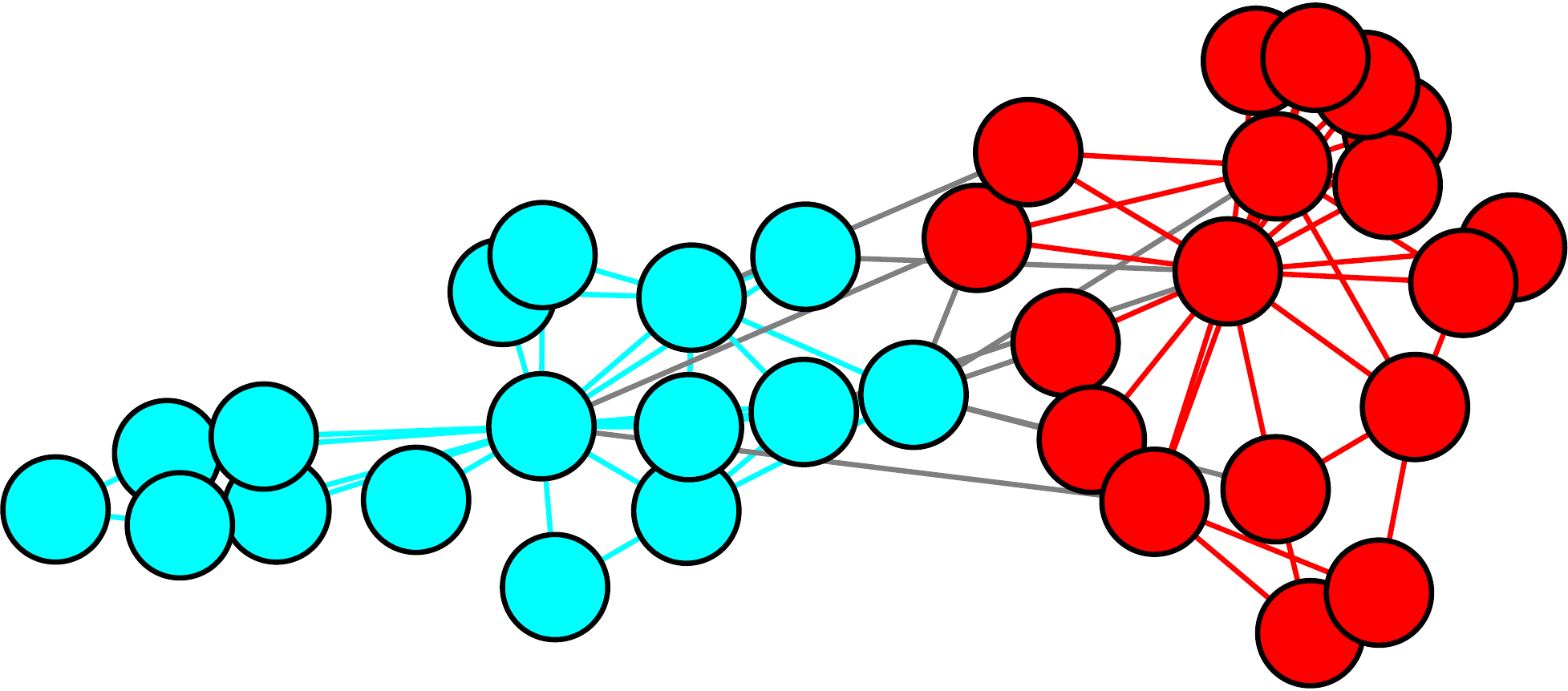}
      \caption{Normalized Laplacian}\label{fig:1norm}	
 \end{subfigure}
 \begin{subfigure}[b]{0.32\textwidth}
      \includegraphics[width=0.95\textwidth]{figures/karateOthers-crop}
      \caption{Laplacian}\label{fig:1lap}	
 \end{subfigure}\\
 \begin{subfigure}[b]{0.32\textwidth}
      \includegraphics[width=0.95\textwidth]{figures/karateOthers-crop}
      \caption{Replicator}\label{fig:1rep}	
 \end{subfigure}
 \begin{subfigure}[b]{0.32\textwidth}
      \includegraphics[width=0.95\textwidth]{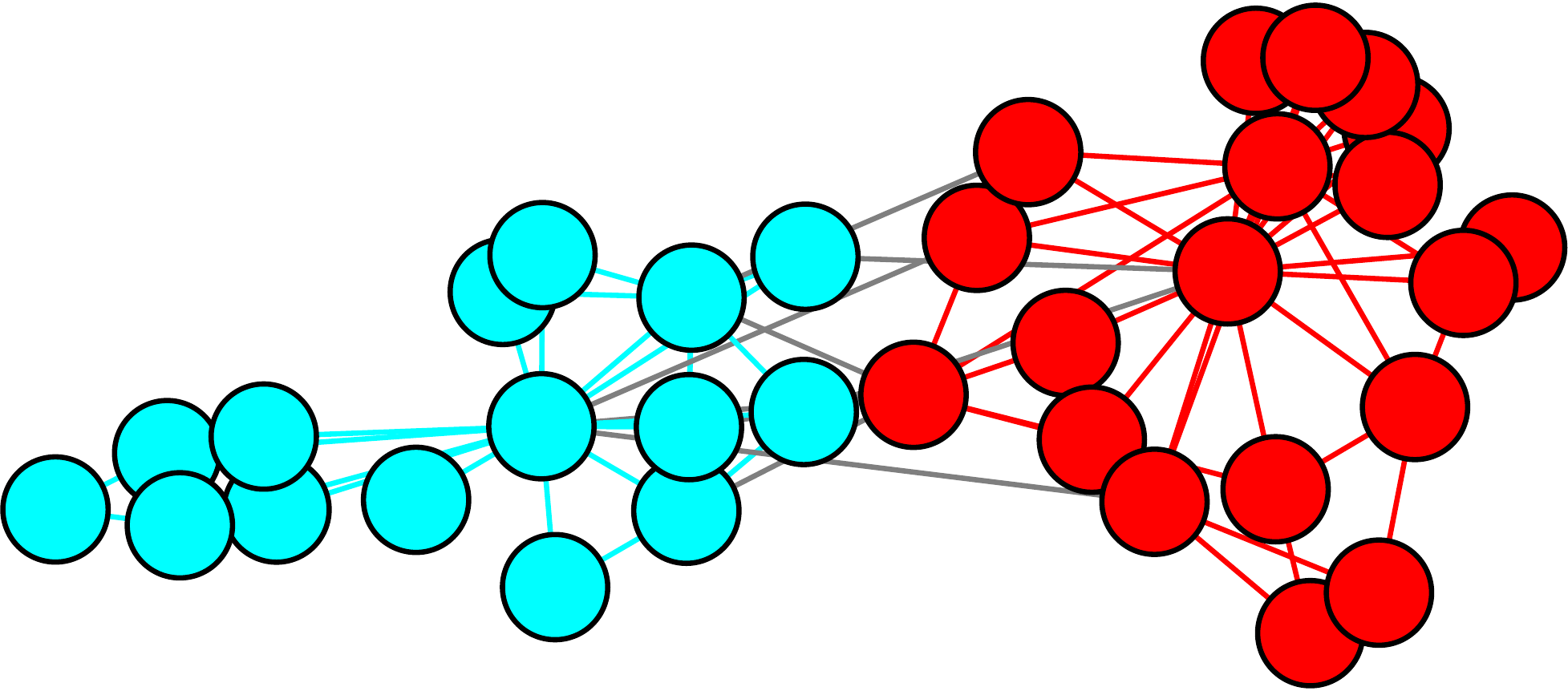}
      \caption{Unbiased Laplacian}\label{fig:1unb}	
 \end{subfigure}
 \begin{subfigure}[b]{0.32\textwidth}
      \includegraphics[width=0.95\textwidth]{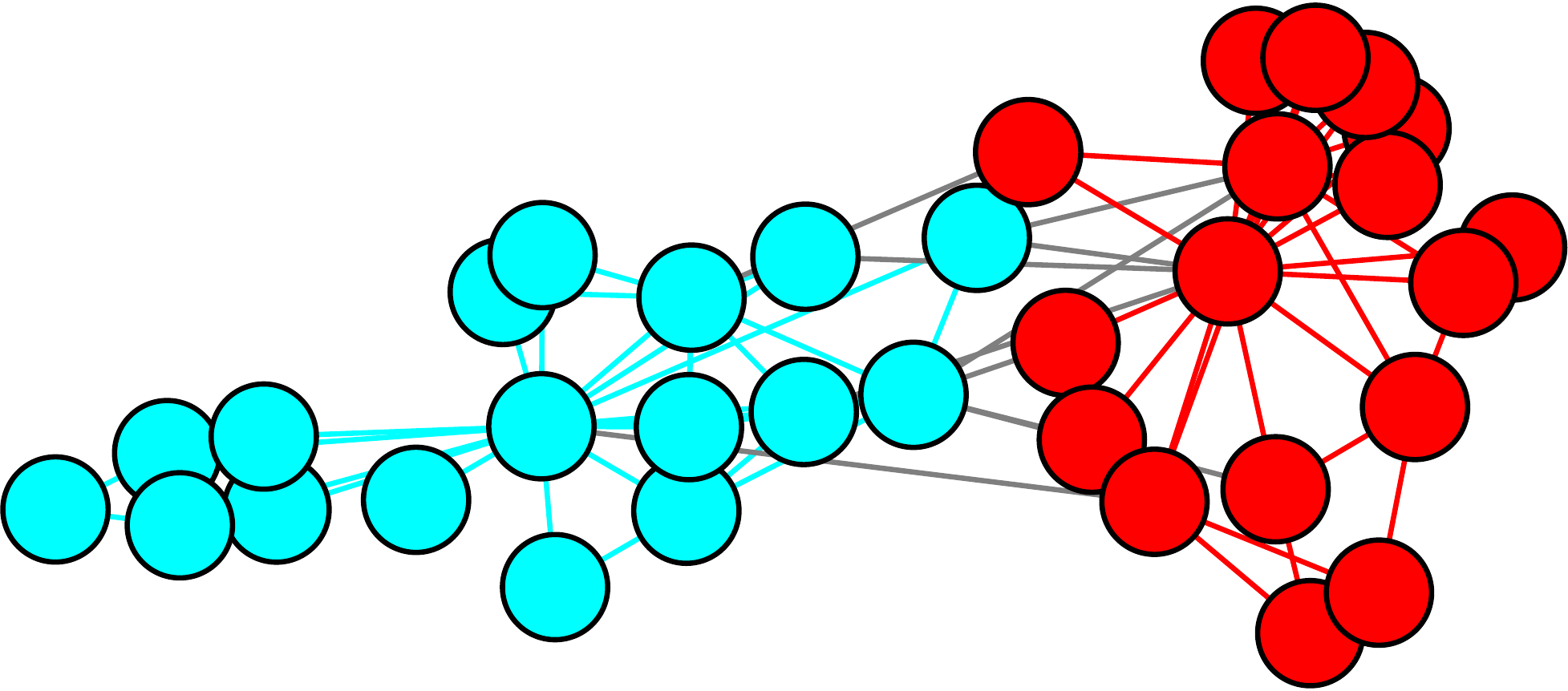}
      \caption{Ground truth communities}\label{fig:1true}	
 \end{subfigure}
 \caption{Centrality/sweep profiles and optimal bisections of Zachary's Karate Club}
 \textit{The visualizations are produced by Algorithm~\ref{alg:spec}, corresponding to normalized Laplacian, Laplacian, Replicator and Unbiased Laplacian respectively. Except for the last visualization gives the ground truth communities. Notice the mapping between the visualizations and their corresponding sweep profiles.}
\end{figure}

Just as many other community detection algorithms, the four special cases give almost identical optimal bisections for this simple network, all of which are very similar to the ground truth communities (\figref{fig:1true}). Further more, their centrality and sweep profiles are very similar as well (\figref{fig:1centrality}, \figref{fig:1sweep}). This is a excellent example showing that all good community measures capture the same fundamental idea of communities, those well-interacting subsets of vertices with relatively sparse connection in between. They do differ, however, in finer details of their mathematical definitions, as we will see in more complicated networks in the following subsections.

\subsection{College Football}
The second network represents American football games played between Division IA colleges during the regular season in Fall 2000 \cite{GirvanBetween}, where two vertices (colleges) are linked if they played in a game. Following the structure of the division, the network naturally breaks up into 12 smaller conferences, roughly corresponding to the geographic locations of the colleges. Most games are played within each conference which leading to densely connected local clusters. Its centrality/sweep profiles and visualizations of optimal bisections under each special case dynamic are given below.

\begin{figure}
 \centering
 \begin{subfigure}[b]{0.45\textwidth}
      \includegraphics[width=0.95\linewidth]{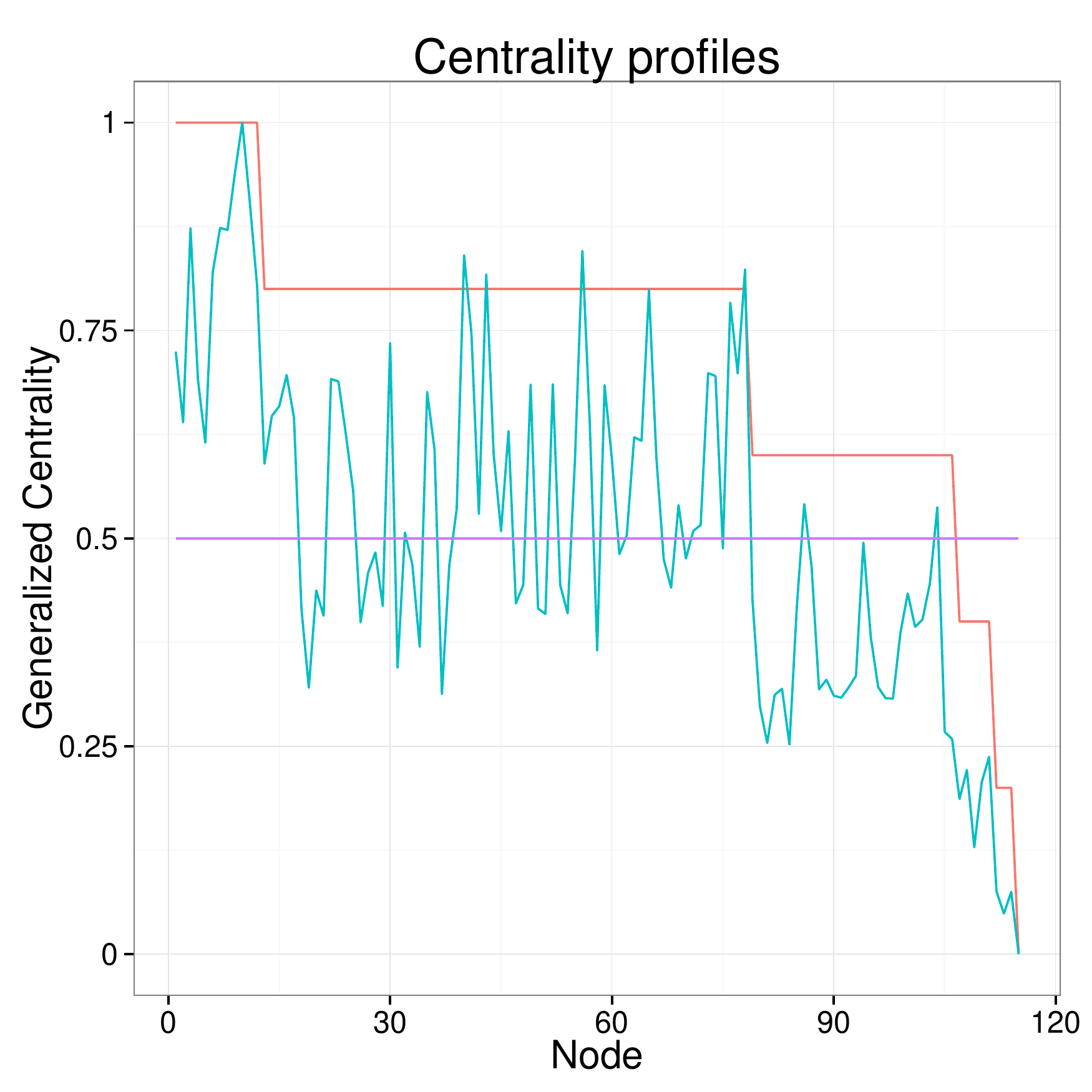}
      \caption{}\label{fig:6centrality}
 \end{subfigure}
 \begin{subfigure}[b]{0.45\textwidth}
      \includegraphics[width=0.95\linewidth]{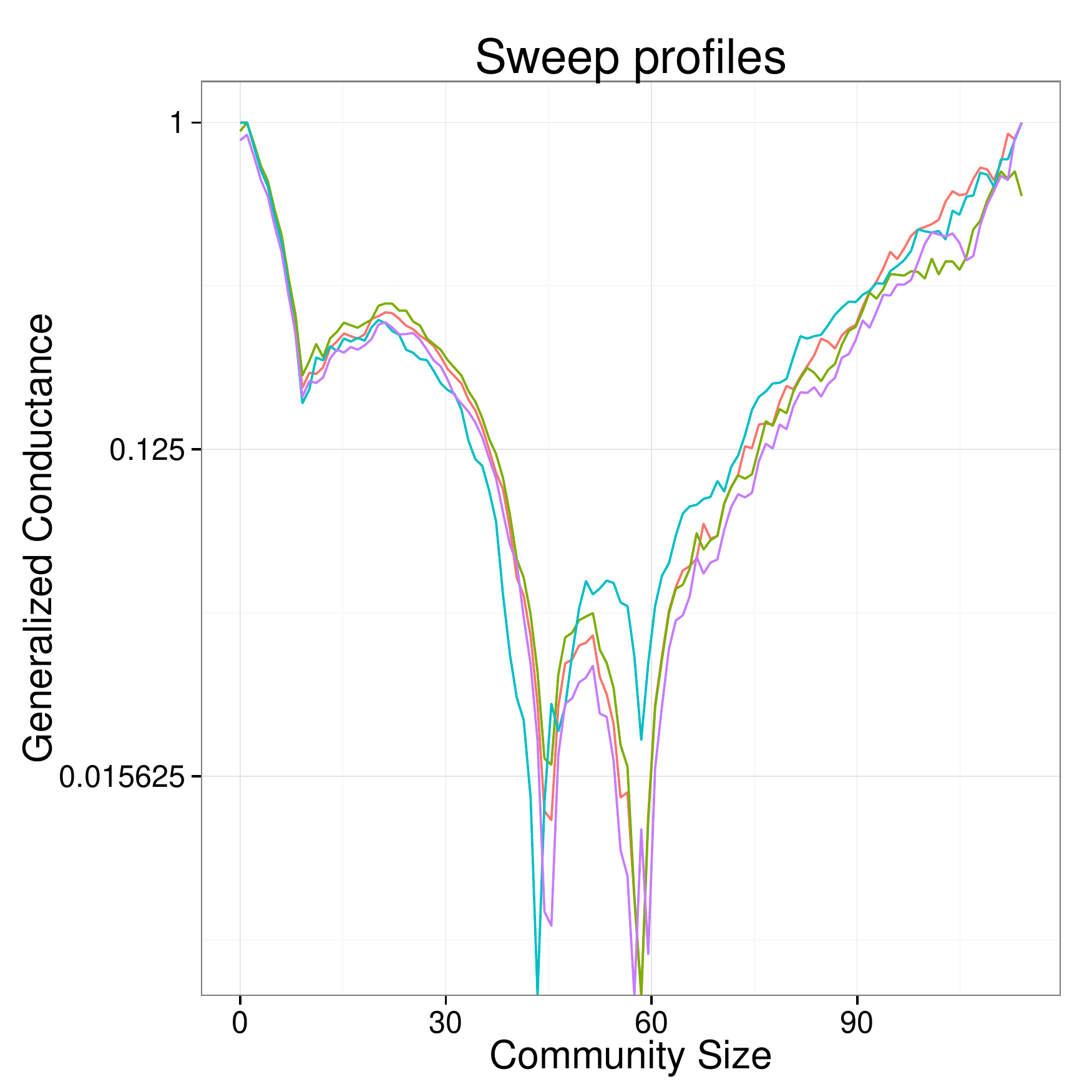}
      \caption{}\label{fig:6sweep}
 \end{subfigure}\\
 \includegraphics[width=0.98\textwidth]{figures/legendPro0}\\
 \begin{subfigure}[b]{0.32\textwidth}
      \includegraphics[width=0.95\textwidth]{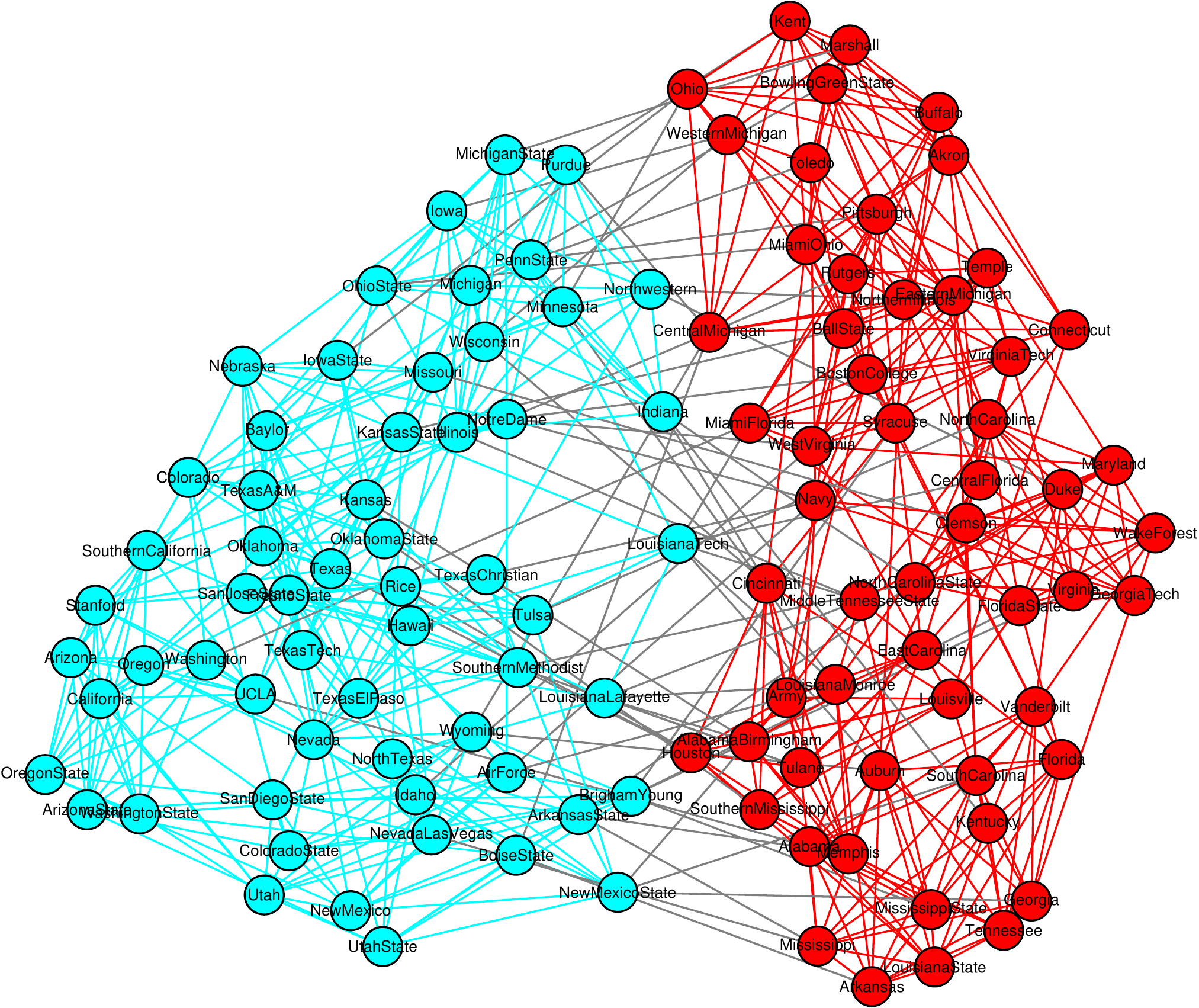}
      \caption{Normalized Laplacian}\label{fig:6norm}	
 \end{subfigure}
 \begin{subfigure}[b]{0.32\textwidth}
      \includegraphics[width=0.95\textwidth]{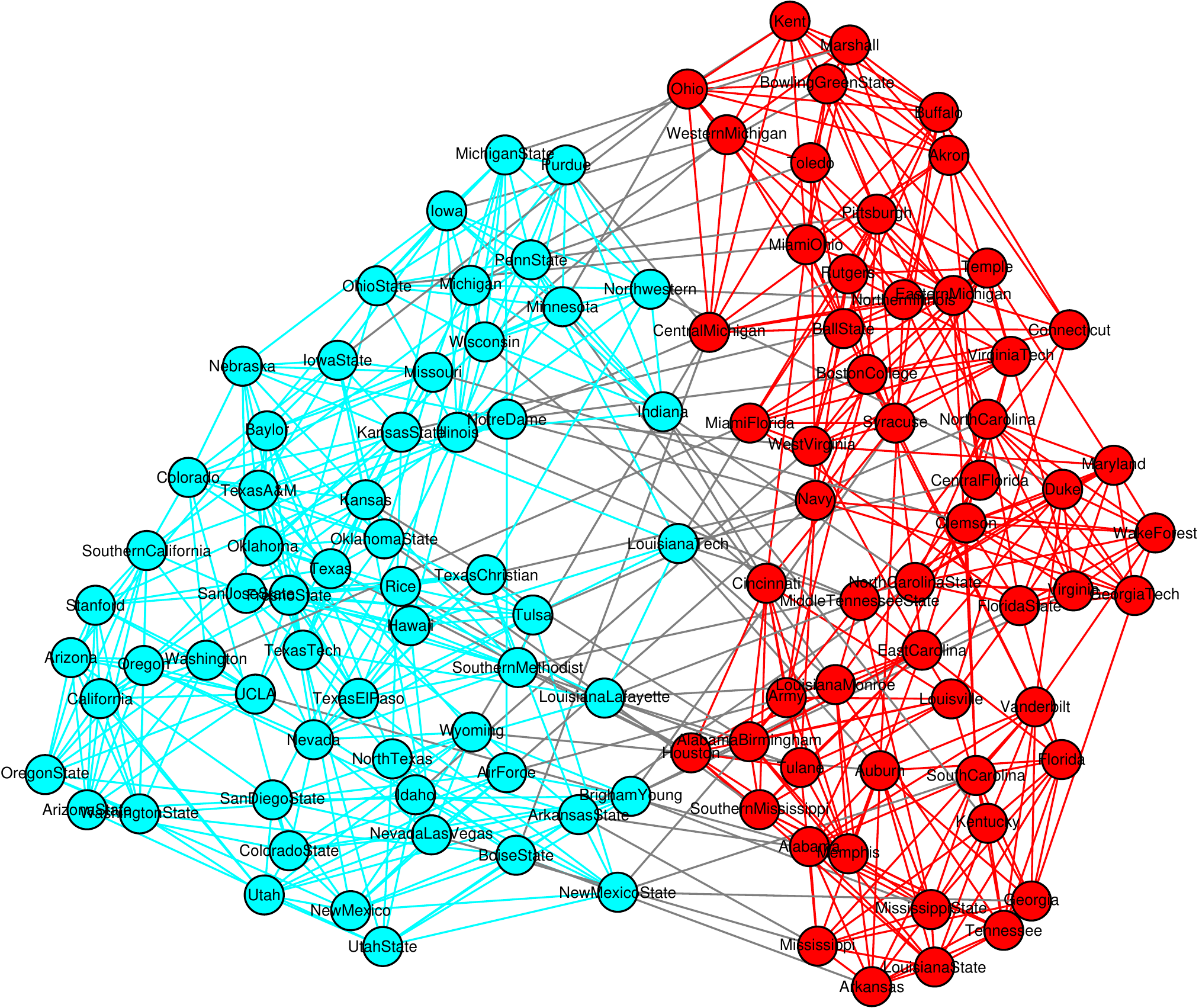}
      \caption{Laplacian}\label{fig:6lap}	
 \end{subfigure}
 \begin{subfigure}[b]{0.32\textwidth}
      \includegraphics[width=0.95\textwidth]{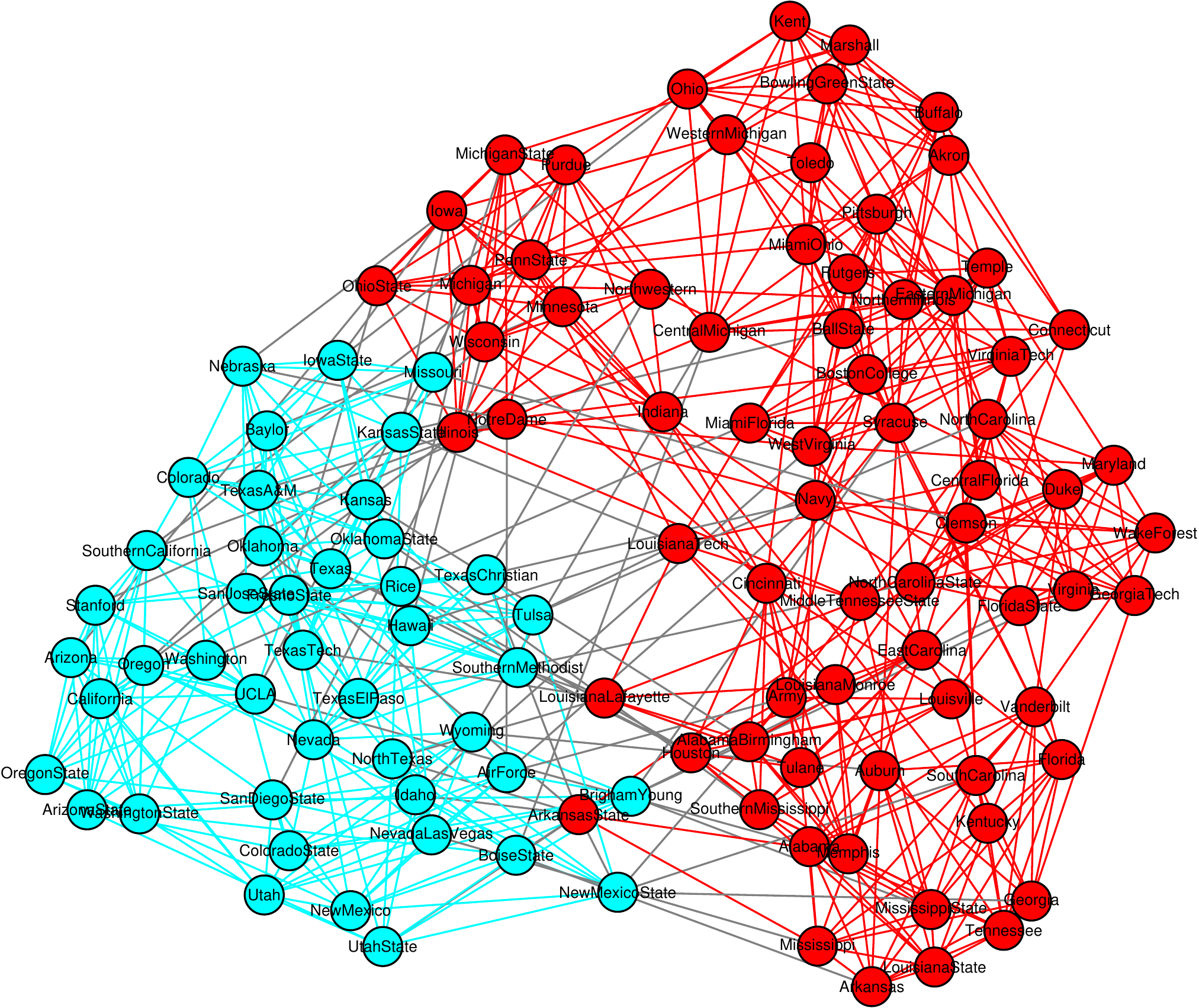}
      \caption{Replicator}\label{fig:6rep}	
 \end{subfigure}\\
 \begin{subfigure}[b]{0.32\textwidth}
      \includegraphics[width=0.95\textwidth]{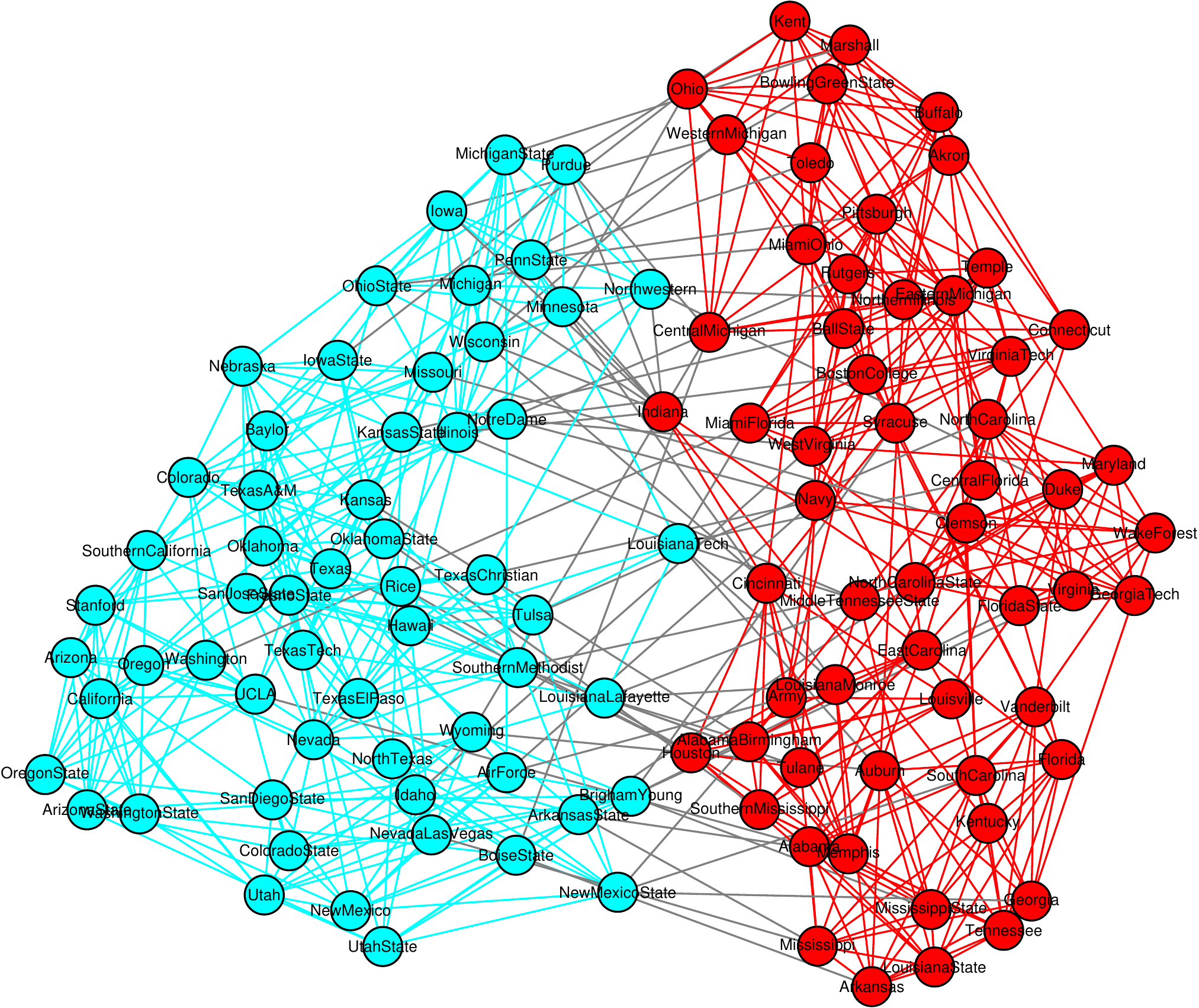}
      \caption{Unbiased Laplacian}\label{fig:6unb}	
 \end{subfigure}
 \begin{subfigure}[b]{0.32\textwidth}
      \includegraphics[width=0.95\textwidth]{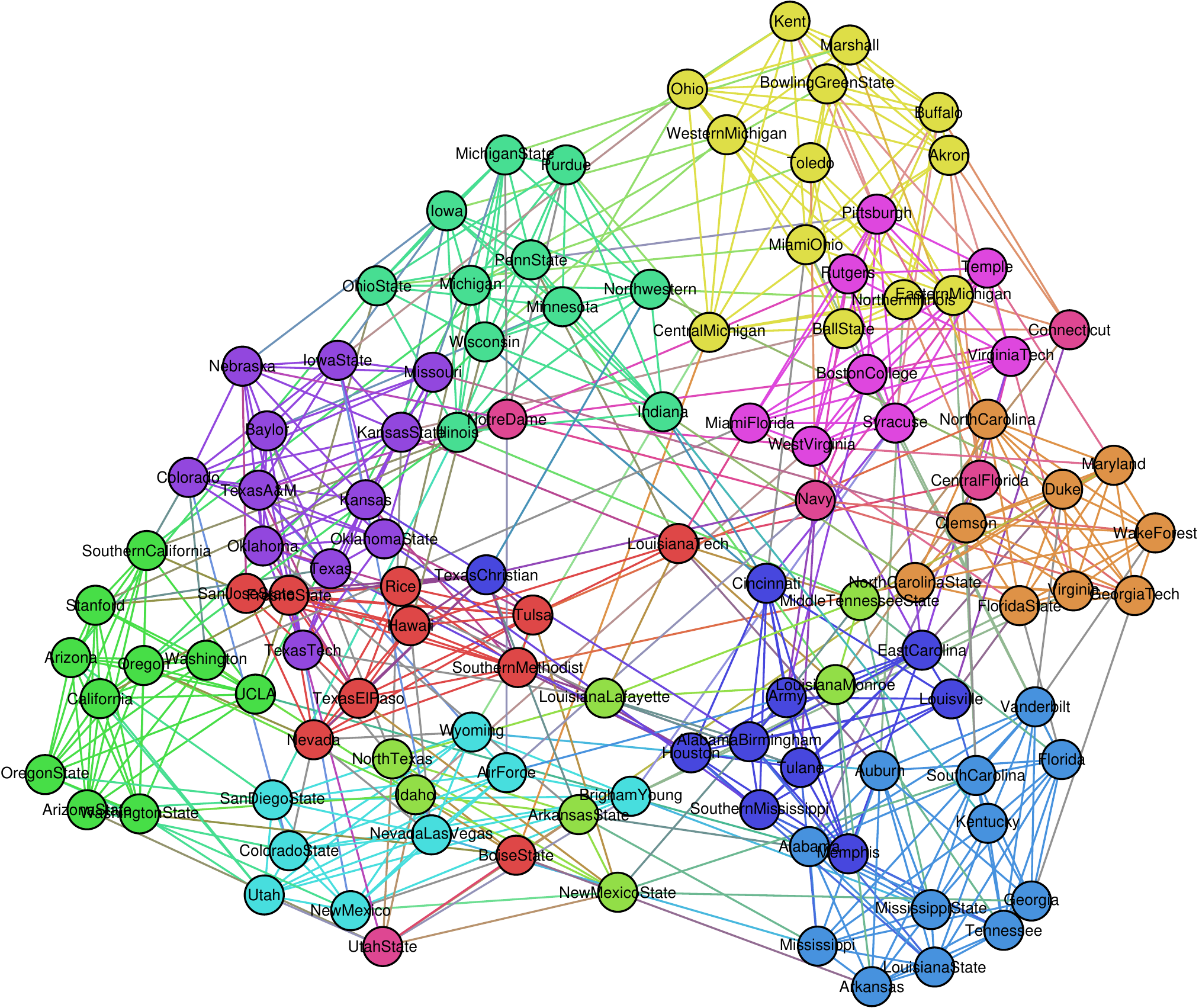}
      \caption{Ground truth communities}\label{fig:6true}	
 \end{subfigure}
  \begin{subfigure}[b]{0.32\textwidth}
      \includegraphics[width=0.95\textwidth]{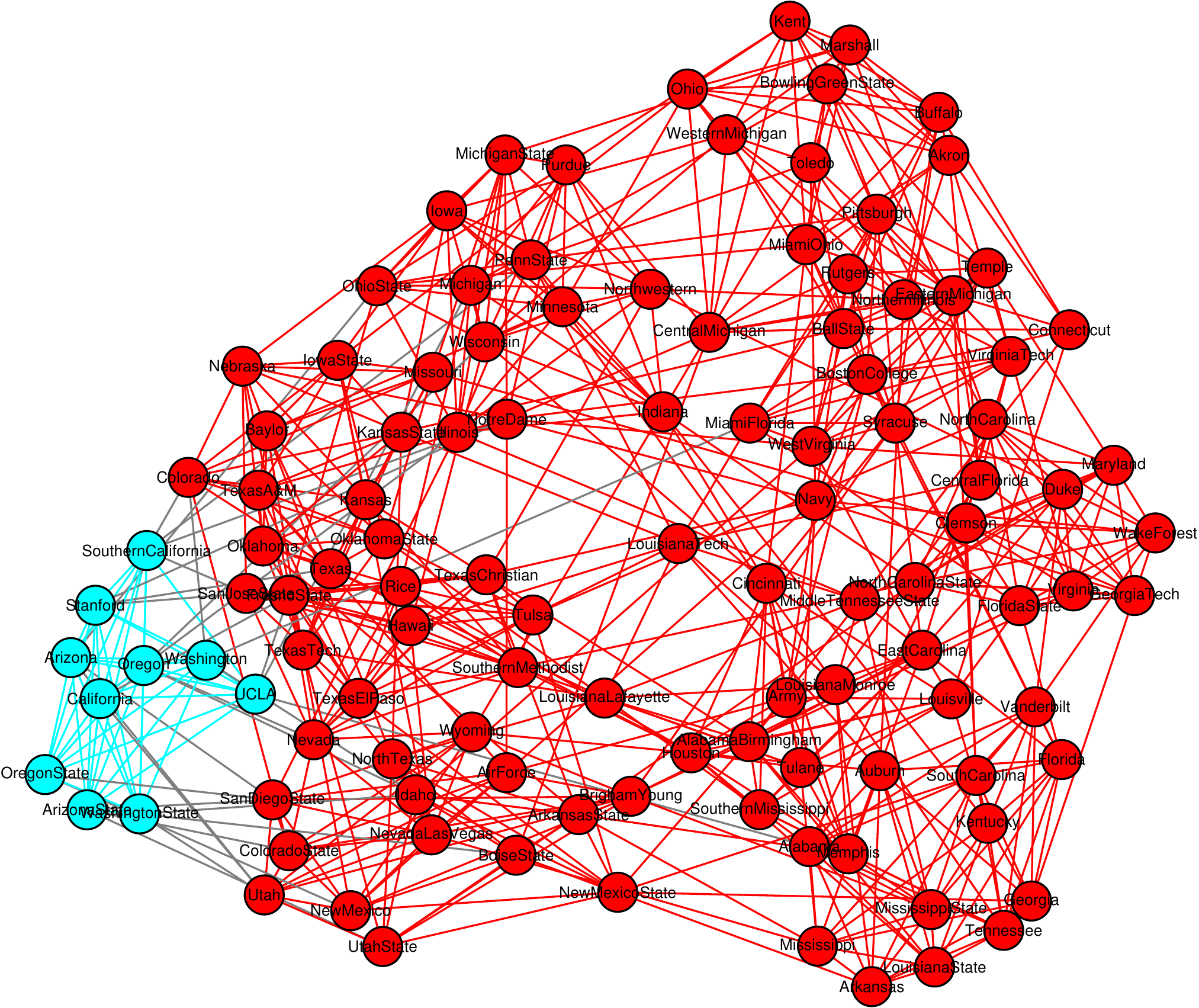}
      \caption{Pacific-10 Conference}\label{fig:6local}	
 \end{subfigure}
 \caption{Centrality/sweep profiles and optimal bisections of the College Football network.}
 \textit{The visualizations are produced by Algorithm~\ref{alg:spec}, corresponding to normalized Laplacian, Laplacian, Replicator and Unbiased Laplacian respectively. Except for the last two visualizations gives the ground truth communities for all 12 local conferences and the Pacific-10 conference in particular.}
\end{figure}

The centrality profiles show heavy tailed distributions, which corresponds to evenly spread out degrees across the network~\figref{fig:6centrality}. This is consistent with the reality of the network, where every football team plays roughly the same number of games each season.

Unlike Zachary's Karate Club, College Football starts to give us divergent community divisions under different dynamic operators. Most operators lead to a balanced east-west bisection (\figref{fig:6norm}, \figref{fig:6lap}, \figref{fig:6unb}). The replicator, however, places the ``swing" Big Ten Conference (contains mostly colleges in the midwest) into the west cluster (\figref{fig:6rep}). Upon further investigation, we discovered that while both bisections have almost the same cross community links, the seemingly more balanced division does lead to a slight imbalance in terms of links within each community. The the generalized centrality under the replicator magnified this imbalance, ultimately pushed the ``swing" conference to the east side.

In fact, the sweep profile \figref{fig:6sweep} clearly shows that all four special cases actually see both bisections as plausible solutions, with closely matched local optima. This phenomenon where different dynamics agrees on multiple local optima but favor different ones as the global solution is a repeating theme in the following examples. Following our observation in the Zachary's Karate Club, it means while different special cases of the generalized conductance can differ in finer details, they will agree on strong community structures that impact all dynamics in similar ways. \figref{fig:6local} further illustrates the point. All four special cases here agree on the first local optimum in the sweep profiles, and this local cluster corresponding to the Pacific 10 conference (it later becomes the Pacific 12).

\subsection{House of Representatives}
The House of Representatives network is built from the 98th United States House of Representatives voting data set \cite{CongressData}. Unlike the previously studied variants, here we use a special version taking account of all 908 votes. The resulting network is densely connected and has an unusually flat degree distribution.  Originally analyzed in an earlier version of the paper by Smith et al. \cite{Smith13spectral}, this network better differentiates some of the dynamics under our framework. Its centrality/sweep profiles and visualizations of optimal bisections under each special case dynamic are given below.

\begin{figure}
 \centering
 \begin{subfigure}[b]{0.45\textwidth}
      \includegraphics[width=0.95\linewidth]{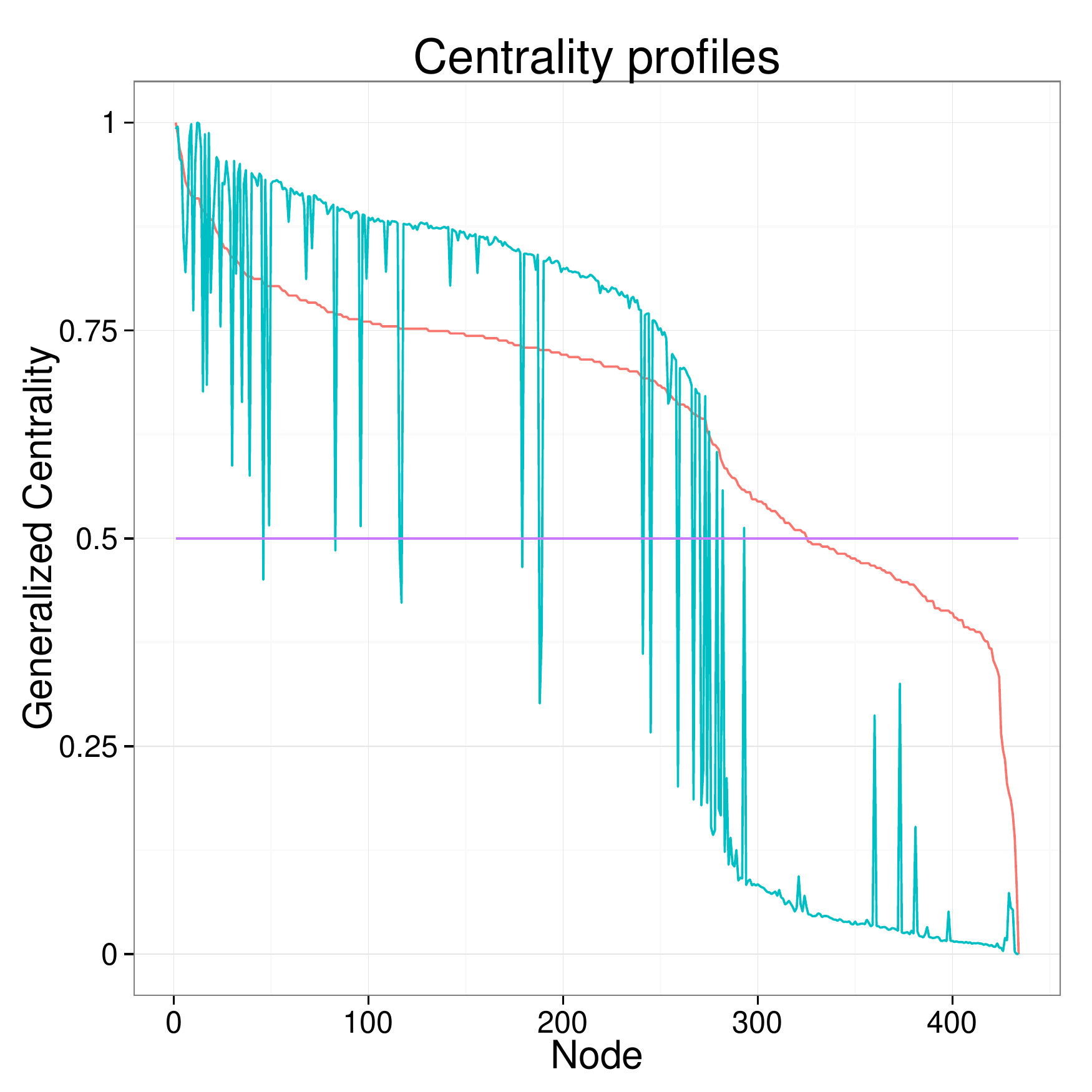}
      \caption{}\label{fig:2centrality}
 \end{subfigure}
 \begin{subfigure}[b]{0.45\textwidth}
      \includegraphics[width=0.95\linewidth]{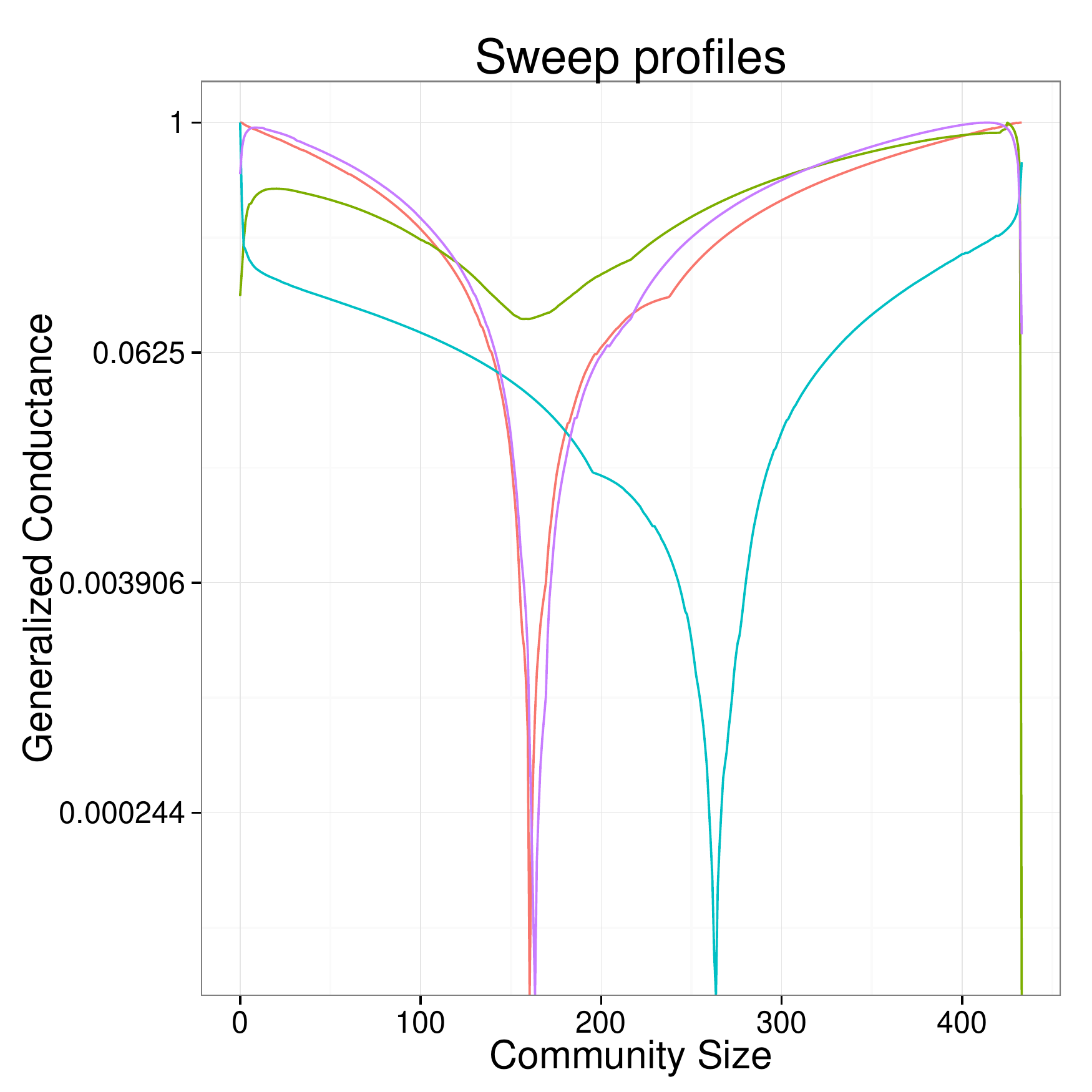}
      \caption{}\label{fig:2sweep}
 \end{subfigure}\\
 \includegraphics[width=0.98\textwidth]{figures/legendPro0}\\
 \begin{subfigure}[b]{0.32\textwidth}
      \includegraphics[width=0.95\textwidth]{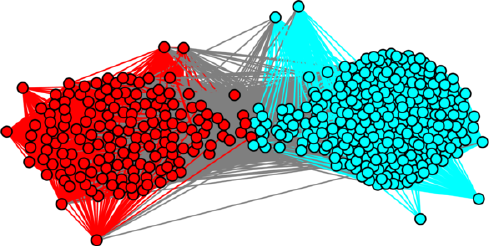}
      \caption{Normalized Laplacian}\label{fig:2norm}	
 \end{subfigure}
 \begin{subfigure}[b]{0.32\textwidth}
      \includegraphics[width=0.95\textwidth]{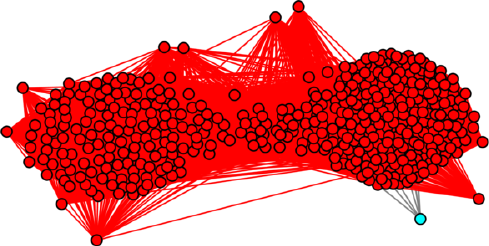}
      \caption{Laplacian}\label{fig:2lap}	
 \end{subfigure}\\
 \begin{subfigure}[b]{0.32\textwidth}
      \includegraphics[width=0.95\textwidth]{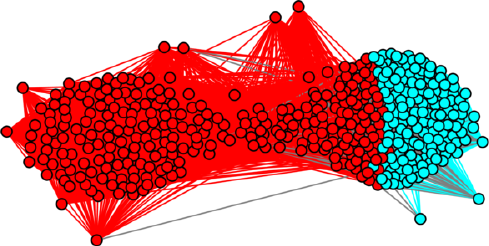}
      \caption{Replicator}\label{fig:2rep}	
 \end{subfigure}
 \begin{subfigure}[b]{0.32\textwidth}
      \includegraphics[width=0.95\textwidth]{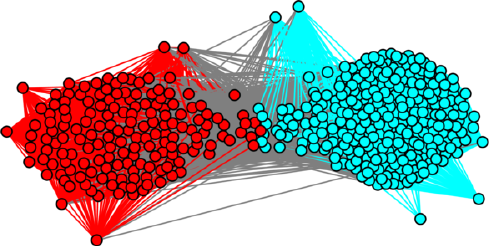}
      \caption{Unbiased Laplacian}\label{fig:2unb}	
 \end{subfigure}
 \begin{subfigure}[b]{0.32\textwidth}
      \includegraphics[width=0.95\textwidth]{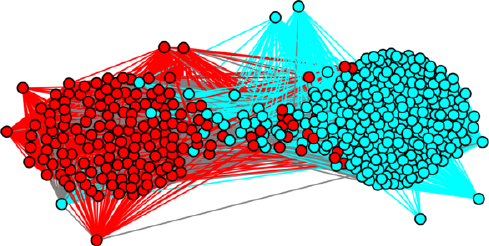}
      \caption{Ground truth communities}\label{fig:2true}	
 \end{subfigure}
 \caption{Centrality/sweep profiles and optimal bisections of the House of Representatives network}
 \textit{The visualizations are produced by Algorithm~\ref{alg:spec}, corresponding to normalized Laplacian, Laplacian, Replicator and Unbiased Laplacian respectively. Except for the last visualization gives the ground truth communities. Notice the mapping between the visualizations and their corresponding sweep profiles.}
\end{figure}

``House of Representatives" network is an excellent example of how centralities and communities are closely related under our framework. First, the centrality profile of the ``House of Representatives" network looks similar to that of the College Football, but quite different from the rest networks in (\tabref{tab:datasets}). Because we have taken into account all the votes, this network is very densely connected, and its degree distribution has an extremely fat tail as demonstrated by the red curve in (\figref{fig:2centrality}).

Since the degree distribution is relatively uniform, we expect variance of the cut size (numerator) in \eqref{eq:hS} to be relatively small. The exception here is the optimal bisection produced by the regular Laplacian (\figref{fig:2lap}), which is most prone to ``whiskers". For the other three special cases, the volume balance (denominator) is the determining factor in communities measures, and all produce fairly ``balanced" bisections according their own generalized volume measures.

Another observation of the centrality profile is that vertices are considered to be of different importance by normalized Laplacian, Replicator and unbiased Laplacian. In particular, centralities of normalized Laplacian might differs from that of unbiased Laplacian by the degree, but given its relative uniform distribution, leads to almost identical optimal bisections (\figref{fig:2norm}, \figref{fig:2unb}). The replicator, on the other hand, scales vertex centrality according to eigenvector centralities, which places more volume to the high degree vertices on the cyan cluster. The resulting optimal bisection is thus shifted right to achieve volume balance (\figref{fig:2rep}).

\subsection{Political Blogs}
The next example is a network of political blogs in the US assembled by Adamic and Glance \cite{adamic2005political}. Here we focus on the giant component, which consists of 1222 blogs and 19087 links between them. The blogs have known political leanings, and were labeled as either liberal or conservative. The network is assortative and has a highly skewed degree distribution. Its centrality/sweep profiles and visualizations of optimal bisections under each special case dynamic are given below.

\begin{figure}
 \centering
 \begin{subfigure}[b]{0.45\textwidth}
      \includegraphics[width=0.95\linewidth]{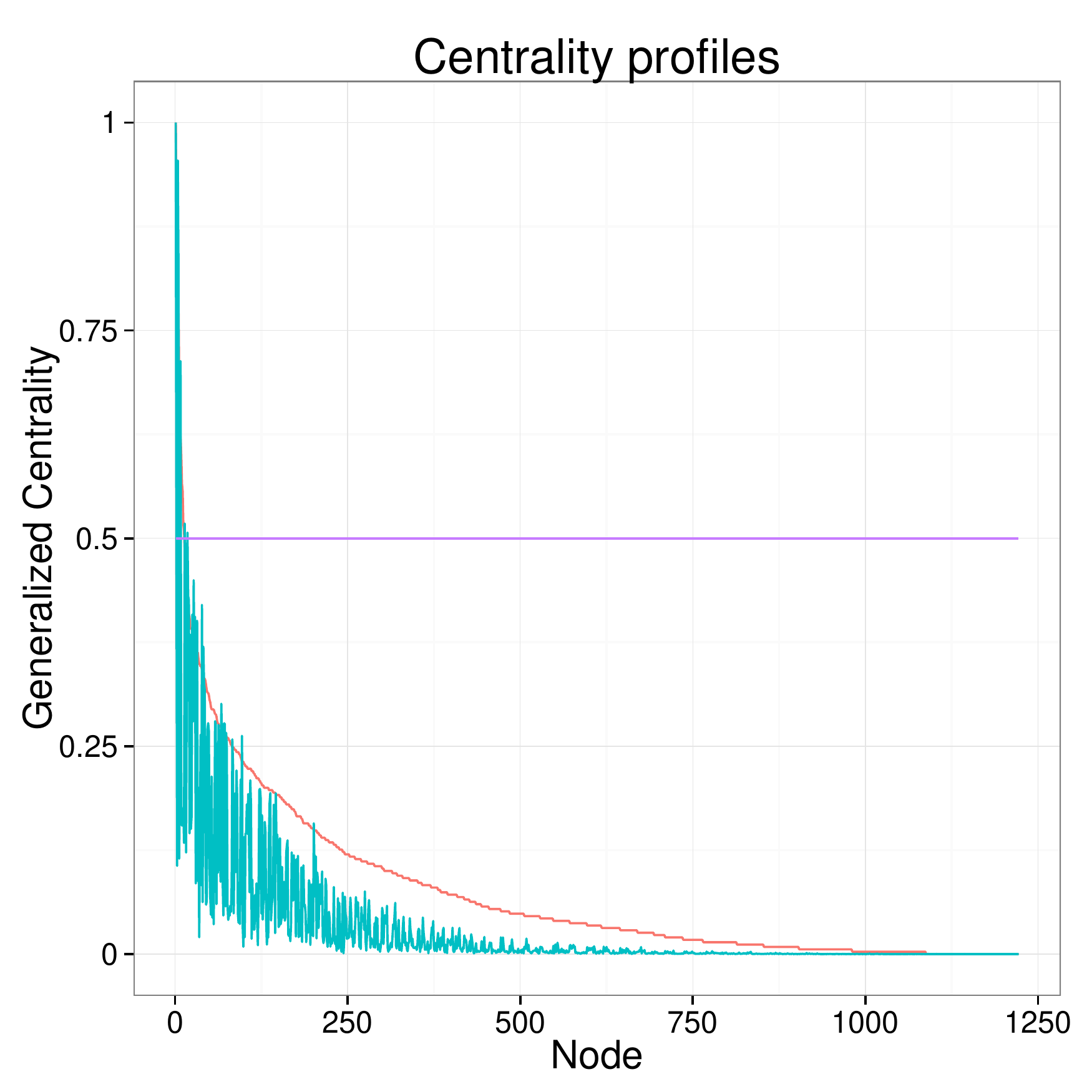}
      \caption{}\label{fig:3centrality}
 \end{subfigure}
 \begin{subfigure}[b]{0.45\textwidth}
      \includegraphics[width=0.95\linewidth]{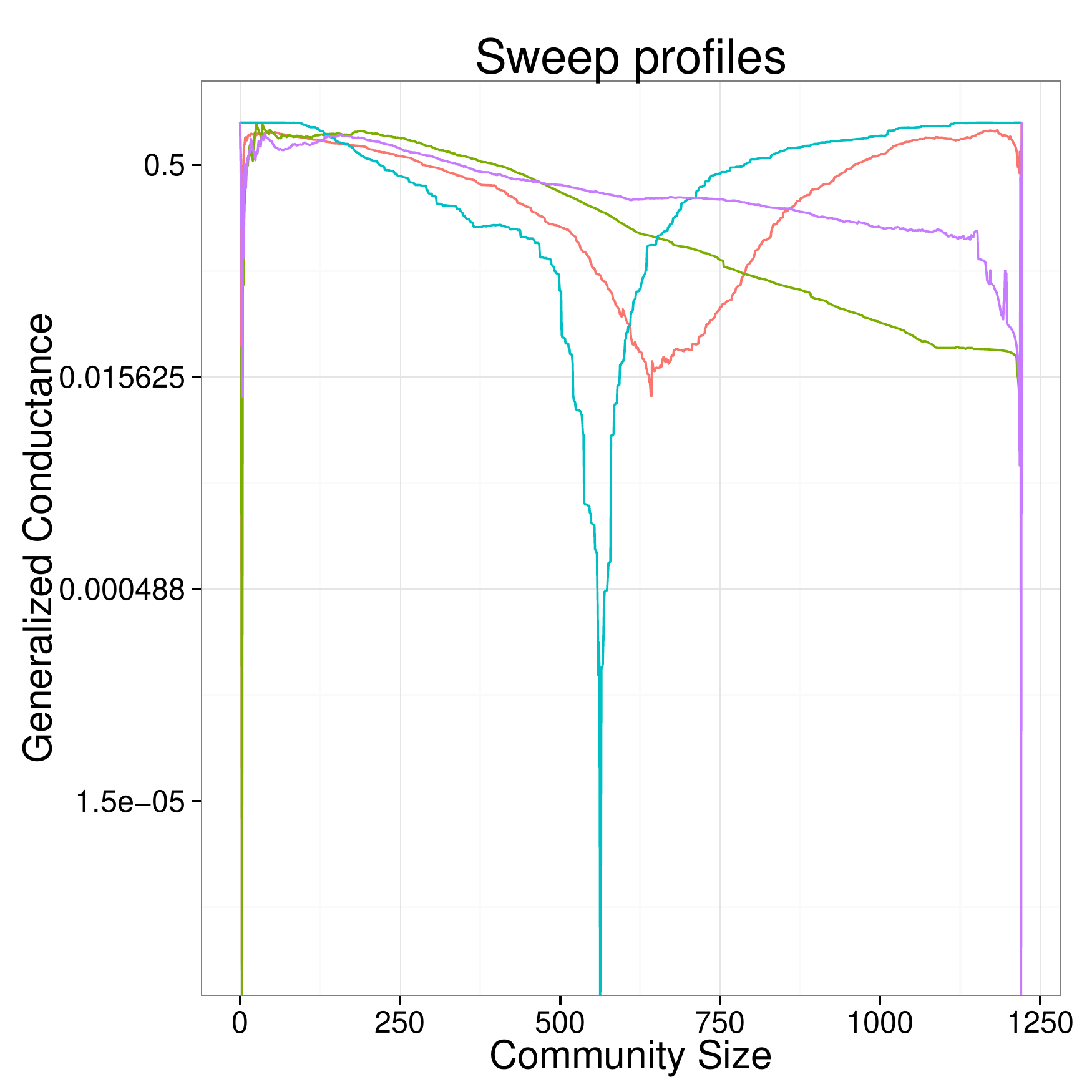}
      \caption{}\label{fig:3sweep}
 \end{subfigure}\\
 \includegraphics[width=0.98\textwidth]{figures/legendPro0}\\
 \begin{subfigure}[b]{0.32\textwidth}
      \includegraphics[width=0.95\textwidth]{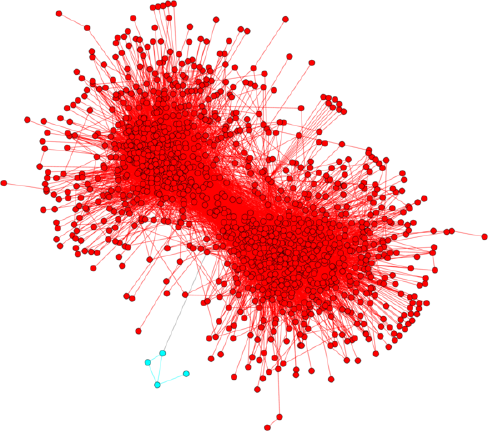}
      \caption{Normalized Laplacian}\label{fig:3norm}	
 \end{subfigure}
 \begin{subfigure}[b]{0.32\textwidth}
      \includegraphics[width=0.95\textwidth]{figures/pBlogsLaplacian-crop}
      \caption{Laplacian}\label{fig:3lap}	
 \end{subfigure}\\
 \begin{subfigure}[b]{0.32\textwidth}
      \includegraphics[width=0.95\textwidth]{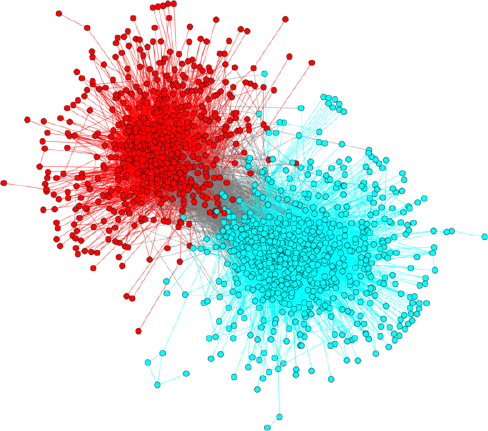}
      \caption{Replicator}\label{fig:3rep}	
 \end{subfigure}
 \begin{subfigure}[b]{0.32\textwidth}
      \includegraphics[width=0.95\textwidth]{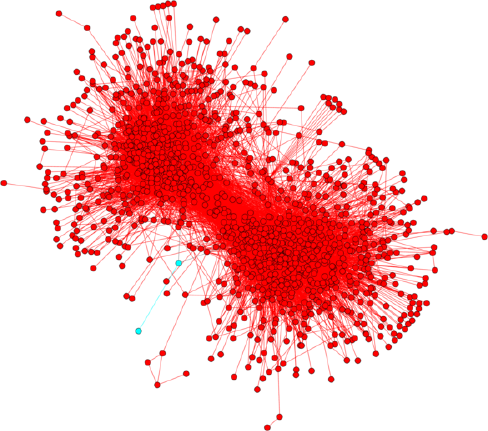}
      \caption{Unbiased Laplacian}\label{fig:3unb}	
 \end{subfigure}
 \begin{subfigure}[b]{0.32\textwidth}
      \includegraphics[width=0.95\textwidth]{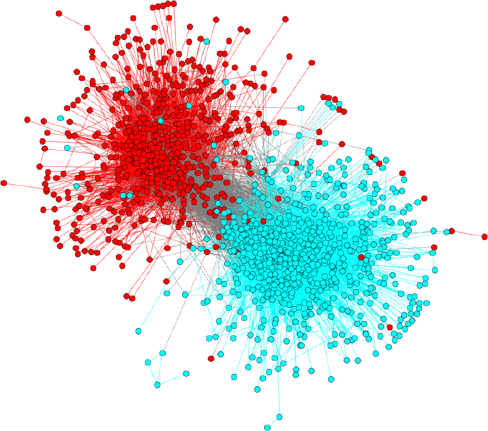}
      \caption{Ground truth communities}\label{fig:3true}	
 \end{subfigure}
 \caption{Centrality/sweep profiles and optimal bisections of the Political Blogs network.}
 \textit{The visualizations are produced by Algorithm~\ref{alg:spec}, corresponding to normalized Laplacian, Laplacian, Replicator and Unbiased Laplacian respectively. Except for the last visualization gives the ground truth communities. Notice the mapping between the visualizations and their corresponding sweep profiles.}
\end{figure}

The Political Blogs network demonstrates a common pitfall of greedy community detection algorithms. A lot of real world networks have power-law like skewed degree distributions, which often corresponds to a ``core-whiskers" structures. As Leskovec et.al. have shown in \cite{Leskovec08www}, such structures have ``whisker" cuts that are so cheap that balance constrains can be effectively ignored. The same happened here for three of our special cases, whose optimal bisections are highly unbalanced.

Unlike the House of Representatives, community measure in Political Blogs is dominated by the cut size (numerator). In particular, both the normalized Laplacian and the Laplacian share the same cut size measures, give the same solution (\figref{fig:3norm}, \figref{fig:3lap}), despite their differences in volume/centrality measures (see curves in \figref{fig:3centrality}). The Unbiased Laplacian produced a different whisker cut, because it has a reweighed cut size measure (\figref{fig:3unb}). Further investigation reveals that the Unbiased Laplacian cuts off a whisker from two highly connected vertices, which according to \eqref{eq:hS} greatly reduces the cut size.

The exception here is the Replicator (\figref{fig:3rep}). By reweighing the adjacency matrix by eigenvector centralities, the generalized volume measure now consider highly connected vertices near the core to be even more important (see the red curve in the centrality profile). The difference in generalized volume is now too drastic to be ignored. As a result, Replicator did not fall for the ``whisker" cuts and produced balanced communities.

\subsection{Facebook Egonets}
The Facebook Egonets dataset was collected using a Facebook app \cite{mcauley2012learning}. Here we use the combined network which merges all egonets of survey participants. Each egonet of a user is defined as local network consists of ``friends" on Facebook, representing the user's social circle. Facebook Egonets has many of the typical social network properties, including a heavy tail degree distribution. However, it also differs from traditional social networks because of the sampling bias in the data collection process, leading to lower clustering coefficient and a bigger diameter than what one might expect. Its centrality/sweep profiles and visualizations of optimal bisections under each special case dynamic are given below.

\begin{figure}
 \centering
 \begin{subfigure}[b]{0.45\textwidth}
      \includegraphics[width=0.95\linewidth]{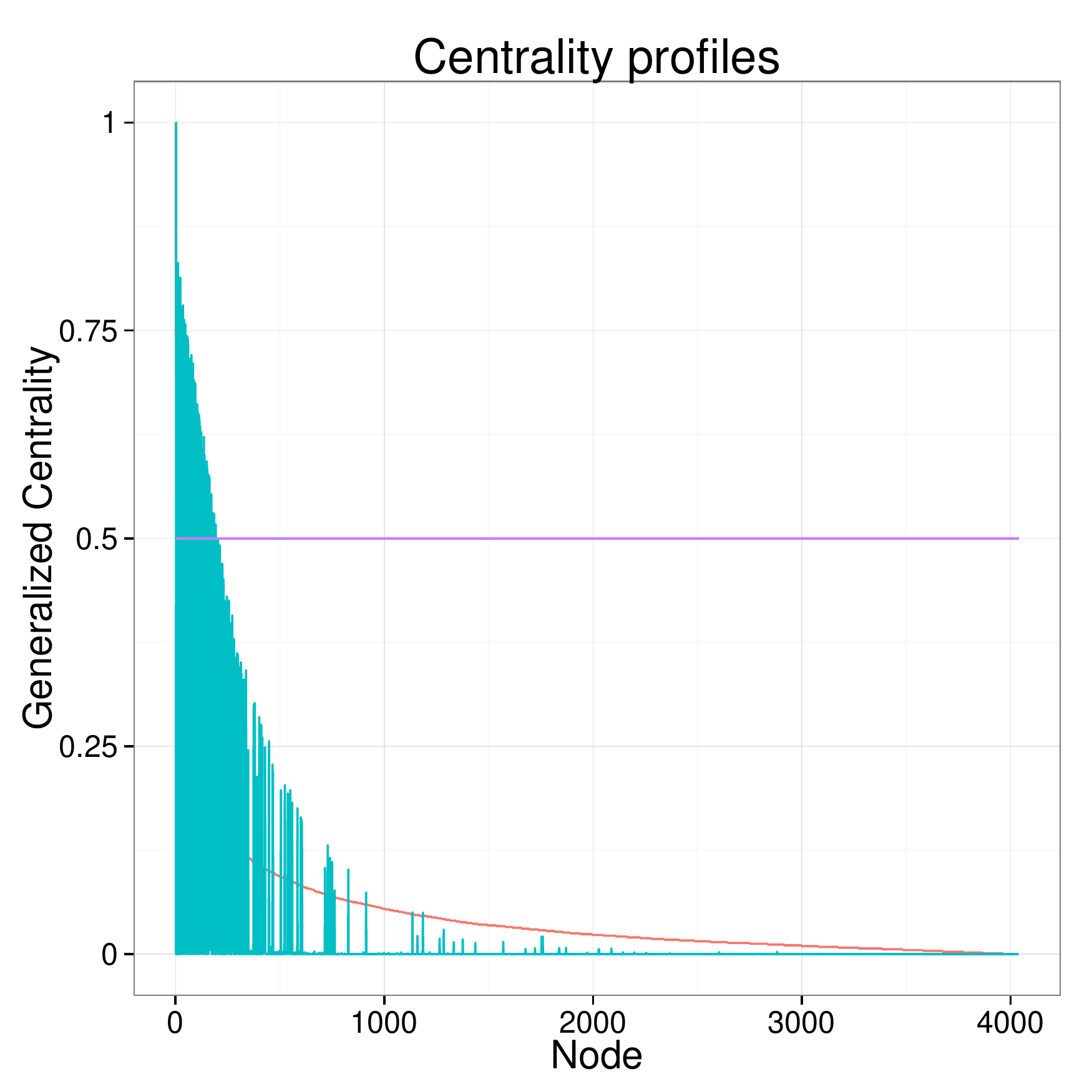}
      \caption{}\label{fig:4centrality}
 \end{subfigure}
 \begin{subfigure}[b]{0.45\textwidth}
      \includegraphics[width=0.95\linewidth]{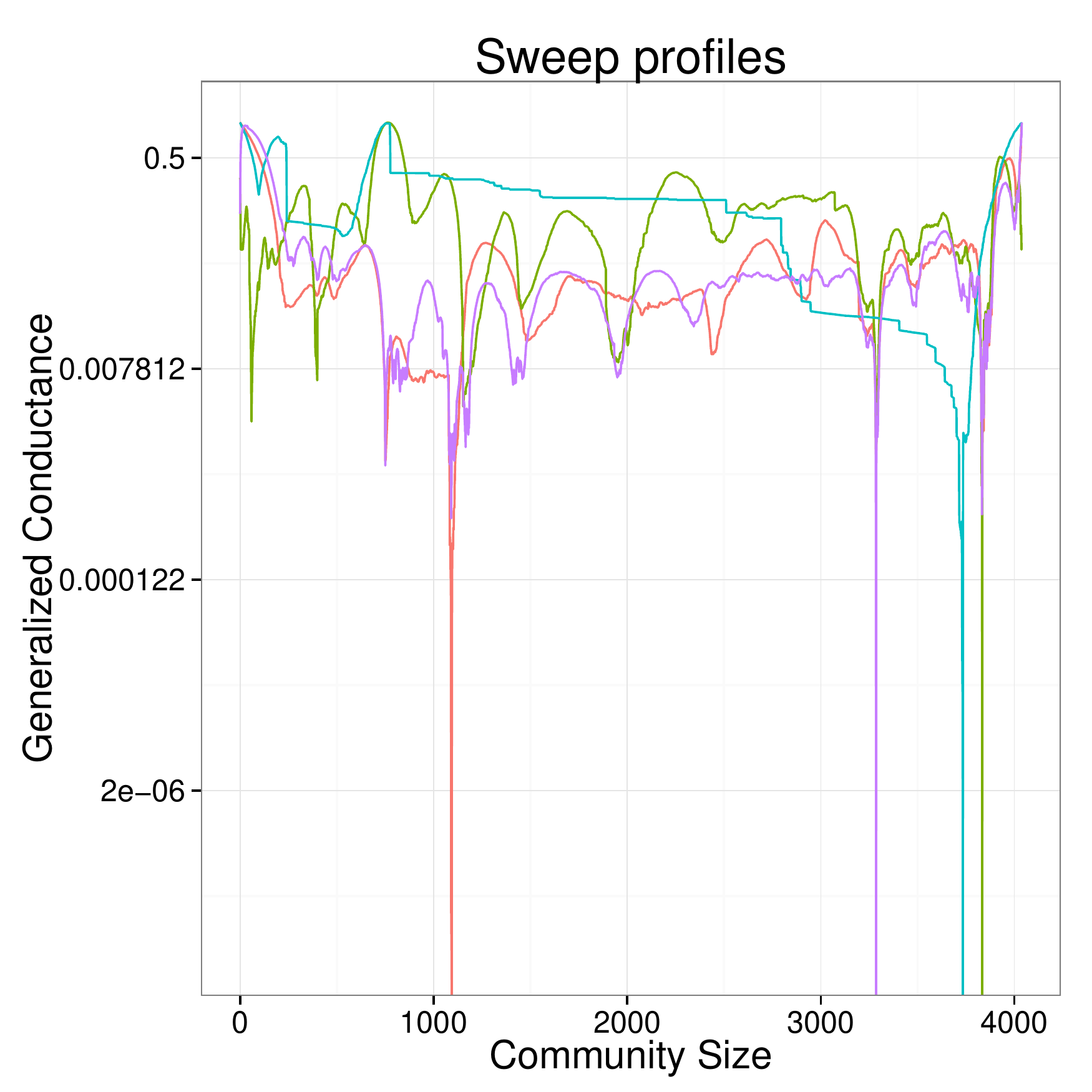}
      \caption{}\label{fig:4sweep}
 \end{subfigure}\\
 \includegraphics[width=0.98\textwidth]{figures/legendPro0}\\
 \begin{subfigure}[b]{0.32\textwidth}
      \includegraphics[width=0.95\textwidth]{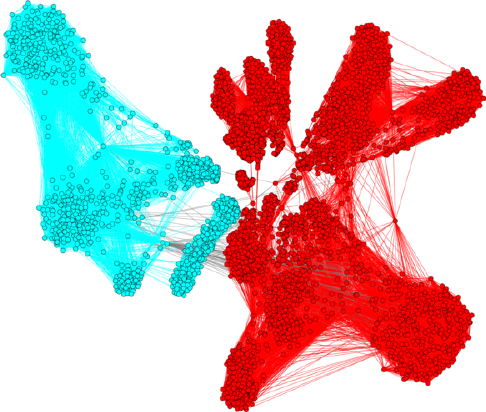}
      \caption{Normalized Laplacian}\label{fig:4norm}	
 \end{subfigure}
 \begin{subfigure}[b]{0.32\textwidth}
      \includegraphics[width=0.95\textwidth]{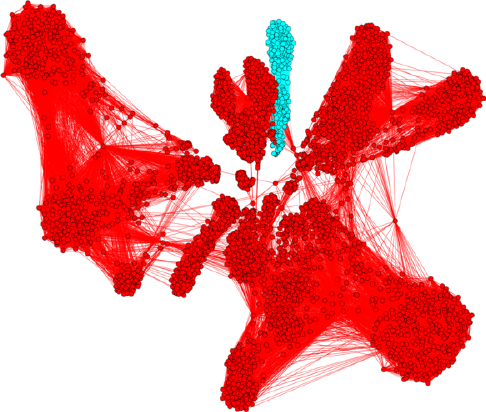}
      \caption{Laplacian}\label{fig:4lap}	
 \end{subfigure}\\
 \begin{subfigure}[b]{0.32\textwidth}
      \includegraphics[width=0.95\textwidth]{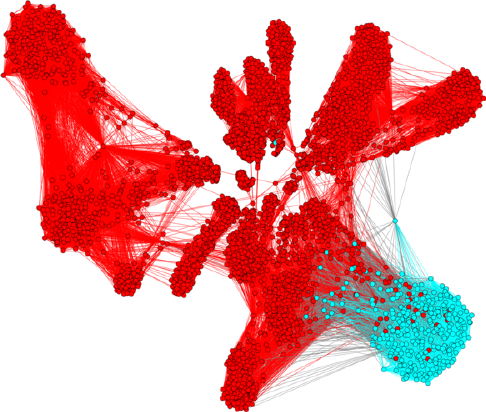}
      \caption{Replicator}\label{fig:4rep}	
 \end{subfigure}
 \begin{subfigure}[b]{0.32\textwidth}
      \includegraphics[width=0.95\textwidth]{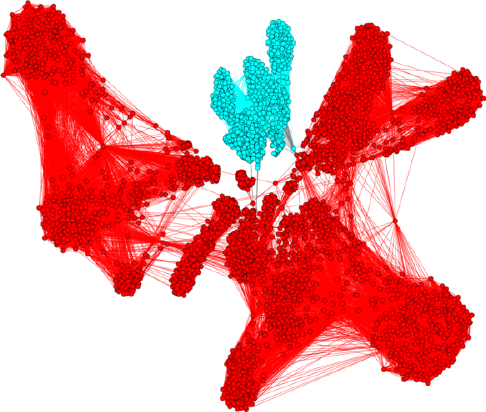}
      \caption{Unbiased Laplacian}\label{fig:4unb}	
 \end{subfigure}
 \caption{Centrality/sweep profiles and optimal bisections of the Facebook Egonets network.}
 \textit{The visualizations are produced by Algorithm~\ref{alg:spec}, corresponding to normalized Laplacian, Laplacian, Replicator and Unbiased Laplacian respectively. There is no predetermined ground truth communities in this network. Notice the mapping between the visualizations and their corresponding sweep profiles.}
\end{figure}

Like what happened with Political Blogs, the overall multi-core structure leads to unbalanced bisections. Due to its bigger size and a even more heterogeneous degree distribution (\figref{fig:4centrality}), all four special cases under the generalized Laplacian framework fall for local clusters, each in a different fashion. Again, the regular Laplacian, being the most susceptible to such problems, finds a smallest local community with the minimal cut size of 17 links (\figref{fig:4lap}). In contrast, the unbiased Laplacian which has the same volume measures, finds a superset of vertices as the optimal cut, with 40 inter community edges (\figref{fig:4unb}). The normalized Laplacian measures cut sizes the same way as the Laplacian, but its different volume measure leads to a much more balanced cut (\figref{fig:4norm}). Last but not least, the replicator finds a local core structure with an average degree of $85.7$ (\figref{fig:4rep}). This is consistent with what we observed on House of Representatives, where the
eigenvector centrality places more volume to the cyan cluster, and the resulting cut is actually much more balanced than it looks.

\subsection{Power Grid}
The last example is an undirected, unweighted network representing the topology of the Western States Power Grid of the United States \cite{watts1998collective}. Among the six datasets in \tabref{tab:datasets}, Power Grid is the largest network in terms of the number of vertices. However, it is extremely sparse with an average degree of $2.67$, leading to a homogeneous connecting pattern across the whole network without real cores or whiskers. Its centrality/sweep profiles and visualizations of optimal bisections under each special case dynamic are given below.

\begin{figure}
 \centering
 \begin{subfigure}[b]{0.45\textwidth}
      \includegraphics[width=0.95\linewidth]{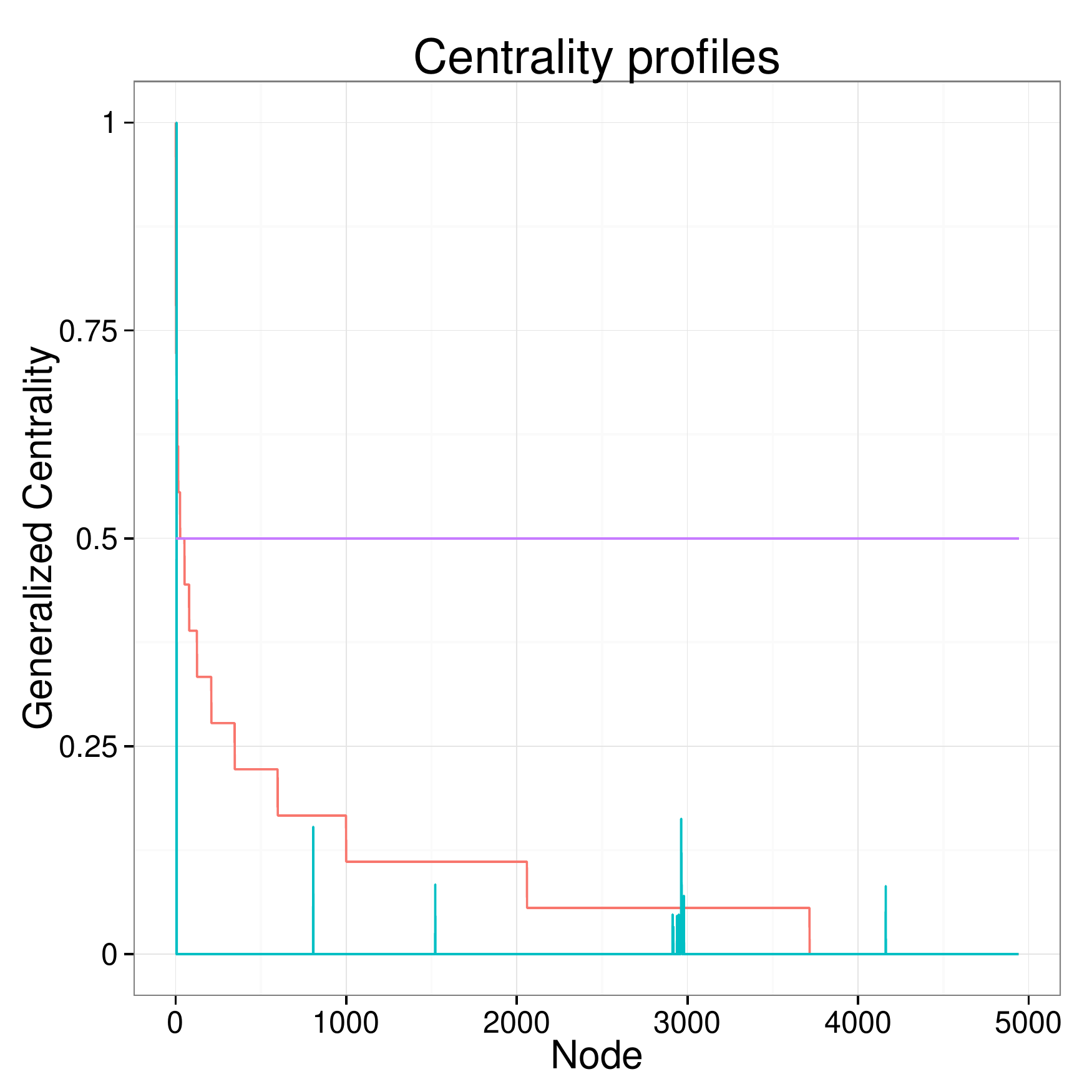}
      \caption{}\label{fig:5centrality}
 \end{subfigure}
 \begin{subfigure}[b]{0.45\textwidth}
      \includegraphics[width=0.95\linewidth]{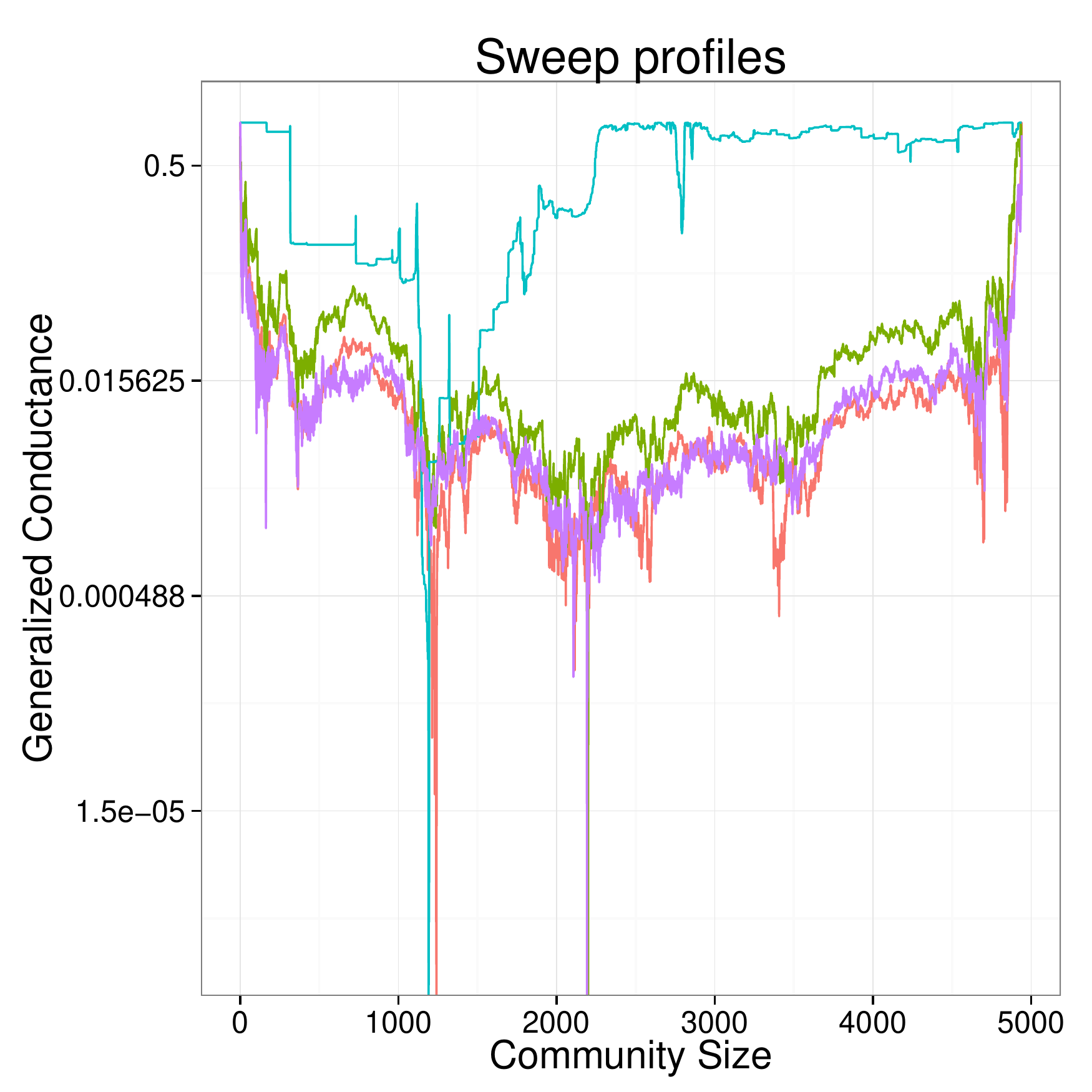}
      \caption{}\label{fig:5sweep}
 \end{subfigure}\\
 \includegraphics[width=0.98\textwidth]{figures/legendPro0}\\
 \begin{subfigure}[b]{0.32\textwidth}
      \includegraphics[width=0.95\textwidth]{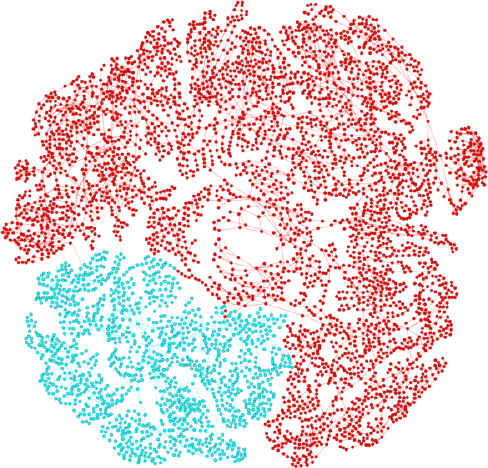}
      \caption{Normalized Laplacian}\label{fig:5norm}	
 \end{subfigure}
 \begin{subfigure}[b]{0.32\textwidth}
      \includegraphics[width=0.95\textwidth]{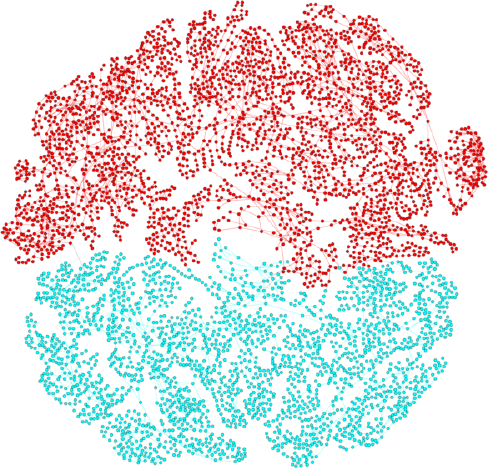}
      \caption{Laplacian}\label{fig:5lap}	
 \end{subfigure}\\
 \begin{subfigure}[b]{0.32\textwidth}
      \includegraphics[width=0.95\textwidth]{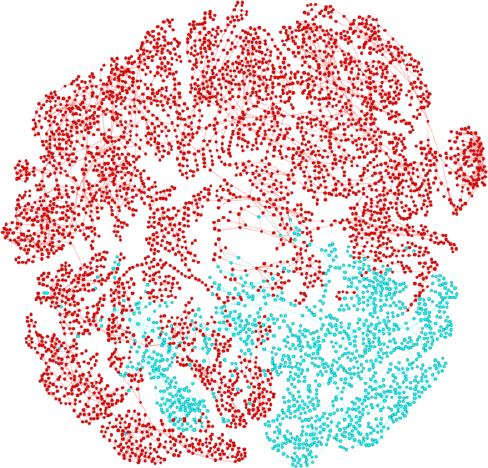}
      \caption{Replicator}\label{fig:5rep}	
 \end{subfigure}
 \begin{subfigure}[b]{0.32\textwidth}
      \includegraphics[width=0.95\textwidth]{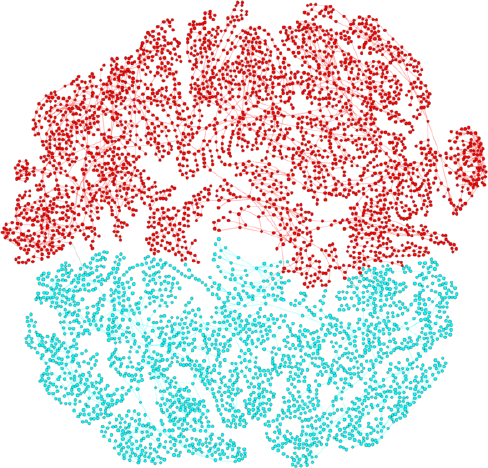}
      \caption{Unbiased Laplacian}\label{fig:5unb}	
 \end{subfigure}
 \caption{Centrality/sweep profiles and optimal bisections of the Power Grid network.}
 \textit{The visualizations are produced by Algorithm~\ref{alg:spec}, corresponding to normalized Laplacian, Laplacian, Replicator and Unbiased Laplacian respectively. There is no predetermined ground truth communities in this network. Notice the mapping between the visualizations and their corresponding sweep profiles.}
\end{figure}

The fat tails in centrality profiles indicate existence of high degree vertices \figref{fig:5centrality}. However, as the visualizations shows, these hub vertices do not usually link each other directly, resulting in negative degree assortativity \cite{newman_mixing_2003}. This is consistent with the geographic constrains when designing a power grid, as the final goal is to distribute power from central stations to end users. These important difference in overall structure prevented core or whiskers from appearing, and changes how different dynamics behave on Power Grid.

Replicator, which demonstrated the most consistent performance on social networks with core-whiskers structures, performs the worst on bisecting the Power Grid. In fact, the visualization shown in \figref{fig:5rep} is obtained by manually fixing negative eigenvector centrality entries in~\eqref{eq:rep} (the numeric error comes from the extreme sparse and ill-conditioned adjacency matrix).

The other three special cases all give reasonable results. Laplacian and Unbiased Laplacian share the same volume measure, and they have nearly identical solutions with well balanced communities (\figref{fig:5lap}, \figref{fig:5unb}). Their different cut size measures only lead to slightly different boundaries thanks to the homogeneous connecting pattern. Normalized Laplacian share the same cut size measure with the regular Laplacian, and its volume balance is usually more robust on social networks with core-whisker structures. On Power Grid, however, it opts for a smaller cut size at the cost of volume imbalance (\figref{fig:5norm}). It turns out the volume of the cyan cluster is compensated by its relative high average degree.

\section{Conclusion}
The generalized Laplacian framework presented in this paper can describe a variety of different dynamic processes taking place on a network, including random walks and epidemics, but also new ones, such as the unbiased Laplacian. We extended the relationships between the properties of centrality, conductance and the dynamic operator normalized Laplacian, to this more general class of dynamic processes. Each dynamic process leads to a distribution that  gives centrality of vertices with respect to that process. In addition, we show that the generalized conductance with respect to the dynamic process is related to the eigenvalues of the operator describing that process through a Cheeger-like inequality. We used these relationships to develop efficient algorithms for global community detection, in which vertices within the same partition interact more with each other via the dynamic process than with vertices in other partitions.

The generalized Laplacian framework also provides a systematic tool to analyze and compare different dynamic processes. By making the dynamic process explicit, we gain new insights to existing centrality measures and community quality measures. By connecting them using standard linear transformations, we discovered the equivalence among seemingly different dynamical systems, and we also have a better understanding of the different community structures each of them produces. In the future, we plan to investigate their differences based on how the vertex state variables change during the evolution of the dynamics. In the analysis of massive networks, it is also desirable to identify subsets of vertices whose induced sub-graphs have ``enough"  community structures without examining the entire network. Chung \cite{Chung07pnas, ChungLocalHeatKernal} derived a local version of the Cheeger's like inequality to the heat-kernel page rank to identify random walk-based local clusters. Similarly, our framework can be
adapted to such local clustering procedures.

While our framework is flexible enough to represent several important types of dynamic processes, it does not represent all possible processes, for example, those processes that even after a change of basis, do not conserve the total volume during the dynamical process. In order to describe such dynamics, an even more general framework is needed. We conjecture, however, that the more general operators will still obey the Cheeger-like inequality, and that other theorems presented in this paper can be extended to these types of processes.

\subsection*{Acknowledgments}
This work is partly supported by grants NSF CIF-1217605,  AFOSR-MURI  FA9550-10-1-0569, AFRL FA-8750-12-2-0186, DARPA W911NF-12-1-0034, NSF CCF-0964481, and  NSF CCF-1111270.

\bibliographystyle{nws}
\bibliography{rumig-synchronization,references}

\begin{thebibliography}{}

\bibitem[\protect\citename{Adamic \& Glance, }2005]{adamic2005political}
Adamic, Lada~A, \& Glance, Natalie. (2005).
\newblock {The political blogosphere and the 2004 US election: divided they
  blog}.
\newblock {\em Pages  36--43 of:} {\em {Proceedings of the 3rd international
  workshop on Link discovery}}.
\newblock ACM.

\bibitem[\protect\citename{Andersen {\em et~al.}\relax,
  }2007]{AndersenChungLang}
Andersen, R., Chung, F., \& Lang, K. (2007).
\newblock {Using {PageRank} to Locally Partition a Graph}.
\newblock {\em Internet math.}, {\bf 4}(1), 1--128.

\bibitem[\protect\citename{Andersen \& Peres, }2009]{AndersenPeres}
Andersen, Reid, \& Peres, Yuval. (2009).
\newblock {Finding Sparse Cuts Locally Using Evolving Sets}.
\newblock {\em Pages  235--244 of:} {\em {STOC}}.
\newblock ACM.

\bibitem[\protect\citename{{Arenas} {\em et~al.}\relax, }2006]{Arenas2006Sync}
{Arenas}, A., {D{\'i}az-Guilera}, A., \& {P{\'e}rez-Vicente}, C.~J. (2006).
\newblock {Synchronization Reveals Topological Scales in Complex Networks}.
\newblock {\em Physical review letters}, {\bf 96}(11), 114102.

\bibitem[\protect\citename{Bastian {\em et~al.}\relax, }2009]{gephi}
Bastian, Mathieu, Heymann, Sebastien, \& Jacomy, Mathieu. (2009).
\newblock {\em {Gephi: An Open Source Software for Exploring and Manipulating
  Networks}}.

\bibitem[\protect\citename{Bonacich \& Lloyd, }2001]{Bonacich01}
Bonacich, Phillip, \& Lloyd, Paulette. (2001).
\newblock {Eigenvector-like measures of centrality for asymmetric relations}.
\newblock {\em Social networks}, {\bf 23}(3), 191--201.

\bibitem[\protect\citename{Borgatti, }2005]{Borgatti05}
Borgatti, S. (2005).
\newblock {Centrality and network flow}.
\newblock {\em Social networks}, {\bf 27}(1), 55--71.

\bibitem[\protect\citename{Burda {\em et~al.}\relax,
  }2009]{burda_localization_2009}
Burda, Z., Duda, J., Luck, J.~M., \& Waclaw, B. (2009).
\newblock {Localization of the Maximal Entropy Random Walk}.
\newblock {\em Physical review letters}, {\bf 102}(16), 160602.

\bibitem[\protect\citename{Chung, }2007]{Chung07pnas}
Chung, Fan. (2007).
\newblock {The heat kernel as the pagerank of a graph}.
\newblock {\em Proceedings of the national academy of sciences}, {\bf 104}(50),
  19735--19740.

\bibitem[\protect\citename{Chung, }2009]{ChungLocalHeatKernal}
Chung, Fan. (2009).
\newblock {A Local Graph Partitioning Algorithm Using Heat Kernel Pagerank}.
\newblock {\em Internet mathematics}, {\bf 6}(3), 315--330.

\bibitem[\protect\citename{Chung, }1997]{Chung1997Spectral}
Chung, Fan R.~K. (1997).
\newblock {\em {Spectral Graph Theory (CBMS Regional Conference Series in
  Mathematics, No. 92)}}.
\newblock American Mathematical Society.

\bibitem[\protect\citename{Delvenne {\em et~al.}\relax,
  }2008]{delvenne_stability_2008}
Delvenne, J.-C., Yaliraki, S.~N., \& Barahona, M. (2008).
\newblock {Stability of graph communities across time scales}.
\newblock {\em {ArXiv} e-prints}, Dec.

\bibitem[\protect\citename{Fortunato, }2010]{Fortunato10}
Fortunato, Santo. (2010).
\newblock {Community detection in graphs}.
\newblock {\em Physics reports}, {\bf 486}(Jan.), 75--174.

\bibitem[\protect\citename{Fronczak \& Fronczak, }2009]{fronczak_biased_2009}
Fronczak, Agata, \& Fronczak, Piotr. (2009).
\newblock {Biased random walks in complex networks: The role of local
  navigation rules}.
\newblock {\em Physical review e}, {\bf 80}(1).

\bibitem[\protect\citename{Ghosh \& Lerman, }2011]{Ghosh11physrev}
Ghosh, Rumi, \& Lerman, Kristina. (2011).
\newblock {Parameterized centrality metric for network analysis}.
\newblock {\em Physical review e}, {\bf 83}(6), 066118.

\bibitem[\protect\citename{Ghosh \& Lerman, }2012]{ghosh_rethinking_2012}
Ghosh, Rumi, \& Lerman, Kristina. (2012).
\newblock {Rethinking Centrality: The Role of Dynamical Processes in Social
  Network Analysis}.
\newblock {\em {CoRR}}, {\bf abs/1209.4616}.

\bibitem[\protect\citename{Girvan \& Newman, }2002]{GirvanBetween}
Girvan, M., \& Newman, M. E.~J. (2002).
\newblock {Community structure in social and biological networks}.
\newblock {\em Proceedings of the national academy of sciences}, {\bf 99}(12),
  7821--7826.

\bibitem[\protect\citename{G{\'o}mez-Garde{\~n}es \& Latora,
  }2008]{gomez-gardenes_entropy_2008}
G{\'o}mez-Garde{\~n}es, J., \& Latora, V. (2008).
\newblock {Entropy rate of diffusion processes on complex networks}.
\newblock {\em Physical review e}, {\bf 78}(6), 065102.

\bibitem[\protect\citename{Kannan {\em et~al.}\relax, }2004]{KananVampelaVetta}
Kannan, Ravi, Vempala, Santosh, \& Vetta, Adrian. (2004).
\newblock {On Clusterings: Good, Bad and Spectral}.
\newblock {\em J. acm}, {\bf 51}(3), 497--515.

\bibitem[\protect\citename{Katz, }1953]{katz1953}
Katz, Leo. (1953).
\newblock {A new status index derived from sociometric analysis}.
\newblock {\em Psychometrika}, {\bf 18}(1), 39--43.

\bibitem[\protect\citename{Kempe {\em et~al.}\relax, }2003]{Kempe03}
Kempe, David, Kleinberg, Jon, \& Tardos, Eva. (2003).
\newblock {Maximizing the spread of influence through a social network}.
\newblock {\em Pages  137--146 of:} {\em {KDD '03}}.
\newblock ACM.

\bibitem[\protect\citename{Koutis {\em et~al.}\relax, }2010]{KoutisMillerPeng}
Koutis, Ioannis, Miller, Gary~L., \& Peng, Richard. (2010).
\newblock {Approaching Optimality for Solving SDD Linear Systems}.
\newblock {\em Pages  235--244 of:} {\em {FOCS}}.
\newblock IEEE.

\bibitem[\protect\citename{Krause, }2008]{krause_compromise_2008}
Krause, Ulrich. (2008).
\newblock {Compromise, consensus, and the iteration of means}.
\newblock {\em Elemente der mathematik}, {\bf 63}, 1--8.

\bibitem[\protect\citename{Lambiotte {\em et~al.}\relax,
  }2011]{lambiotte_flow_2011}
Lambiotte, R., Sinatra, R., Delvenne, J.-C., Evans, T.~S., Barahona, M., \&
  Latora, V. (2011).
\newblock {Flow graphs: Interweaving dynamics and structure}.
\newblock {\em Physical review e}, {\bf 84}(1), 017102.

\bibitem[\protect\citename{Lerman \& Ghosh, }2012]{Lerman12pre}
Lerman, Kristina, \& Ghosh, Rumi. (2012).
\newblock {Network Structure, Topology and Dynamics in Generalized Models of
  Synchronization}.
\newblock {\em Physical review e}, {\bf 86}(026108).

\bibitem[\protect\citename{Leskovec {\em et~al.}\relax, }2008]{Leskovec08www}
Leskovec, Jure, Lang, Kevin~J., Dasgupta, Anirban, \& Mahoney, Michael~W.
  (2008).
\newblock {Statistical Properties of Community Structure in Large Social and
  Information Networks}.

\bibitem[\protect\citename{Ling {\em et~al.}\relax, }2013]{ling_effects_2013}
Ling, Xiang, Hu, Mao-Bin, Ding, Jian-Xun, Shi, Qing, \& Jiang, Rui. (2013).
\newblock {Effects of target routing model on the occurrence of extreme events
  in complex networks}.
\newblock {\em The european physical journal b}, {\bf 86}(4).

\bibitem[\protect\citename{Lov{\'a}sz, }1993]{LovaszS}
Lov{\'a}sz, L{\'a}szl{\'o}. (1993).
\newblock {\em {Random Walks on Graphs: A Survey}}.
\newblock Pages  353--397.

\bibitem[\protect\citename{McAuley \& Leskovec, }2012]{mcauley2012learning}
McAuley, Julian~J, \& Leskovec, Jure. (2012).
\newblock {Learning to Discover Social Circles in Ego Networks.}
\newblock {\em Pages  548--556 of:} {\em {NIPS}},  vol. 272.

\bibitem[\protect\citename{Mihail, }1989]{mihail89}
Mihail, M. 1989 (Oct).
\newblock {Conductance and convergence of Markov chains-a combinatorial
  treatment of expanders}.
\newblock {\em Pages  526--531 of:} {\em {Foundations of Computer Science,
  1989., 30th Annual Symposium on}}.

\bibitem[\protect\citename{{Motter} {\em et~al.}\relax, }2005]{Motter2005Sync}
{Motter}, A.~E., {Zhou}, C., \& {Kurths}, J. (2005).
\newblock {Network synchronization, diffusion, and the paradox of
  heterogeneity}.
\newblock {\em Physical review e}, {\bf 71}(1), 016116.

\bibitem[\protect\citename{Newman, }2003]{newman_mixing_2003}
Newman, M.~E. (2003).
\newblock {Mixing patterns in networks}.
\newblock {\em {\textbackslash}pre}, {\bf 67}(2), 026126.

\bibitem[\protect\citename{Newman, }2006]{Newman2006}
Newman, M. E.~J. (2006).
\newblock {Finding community structer in networks using the eigenvectors of
  matrices}.
\newblock {\em Physical review e}, {\bf 74}(3).

\bibitem[\protect\citename{Olfati-Saber {\em et~al.}\relax,
  }2007]{olfati-saber_consensus_2007}
Olfati-Saber, Reza, Fax, J.~Alex, \& Murray, Richard~M. (2007).
\newblock {Consensus and Cooperation in Networked Multi-Agent Systems}.
\newblock {\em Proceedings of the {IEEE}}, {\bf 95}(1), 215--233.

\bibitem[\protect\citename{Page {\em et~al.}\relax, }1999]{Page99thepagerank}
Page, Lawrence, Brin, Sergey, Motwani, Rajeev, \& Winograd, Terry. (1999).
\newblock {\em The pagerank citation ranking: Bringing order to the web}.

\bibitem[\protect\citename{Poole, }n.d.]{CongressData}
Poole, Keithe.
\newblock {\em {Voteview Website}}.
\newblock \url{http://voteview.com/house98.htm}.

\bibitem[\protect\citename{Porter {\em et~al.}\relax, }2009]{porter}
Porter, M.~A, Onnela, J.~P, \& Mucha, P.~J. (2009).
\newblock {Communities in networks}.
\newblock {\em Notices of the american mathematical society}, {\bf 56}(9),
  1082--1097.

\bibitem[\protect\citename{Rosvall \& Bergstrom, }2008]{Rosvall08}
Rosvall, Martin, \& Bergstrom, Carl~T. (2008).
\newblock {Maps of random walks on complex networks reveal community
  structure}.
\newblock {\em Proceedings of the national academy of sciences}, {\bf 105}(4),
  1118--1123.

\bibitem[\protect\citename{Shi \& Malik, }2000]{ShiMalik00}
Shi, J., \& Malik, J. (2000).
\newblock {Normalized Cuts and Image Segmentation}.
\newblock {\em Ieee transactions on pattern analysis and machine intelligence},
  {\bf 22}(8), 888--905.

\bibitem[\protect\citename{{Smith} {\em et~al.}\relax, }2013]{Smith13spectral}
{Smith}, L.~M., {Lerman}, K., {Garcia-Cardona}, C., {Percus}, A.~G., \&
  {Ghosh}, R. (2013).
\newblock {Spectral clustering with epidemic diffusion}.
\newblock {\em Physical review e}, {\bf 88}(4), 042813.

\bibitem[\protect\citename{Spielman \& Teng,
  }1996]{Spielman96spectralpartitioning}
Spielman, Daniel~A., \& Teng, Shang-Hua. (1996).
\newblock Spectral partitioning works: Planar graphs and finite element meshes.
\newblock {\em Pages  96--105 of:} {\em Ieee focs}.

\bibitem[\protect\citename{Spielman \& Teng, }2004]{SpielmanTengLinear}
Spielman, Daniel~A., \& Teng, Shang-Hua. (2004).
\newblock {Nearly-linear Time Algorithms for Graph Partitioning, Graph
  Sparsification, and Solving Linear Systems}.
\newblock {\em Pages  81--90 of:} {\em {STOC}}.
\newblock ACM.

\bibitem[\protect\citename{Spielman \& Teng, }2007]{SpielmanTengClustering}
Spielman, Daniel~A., \& Teng, Shang-Hua. (2007).
\newblock {Spectral partitioning works: Planar graphs and finite element
  meshes}.
\newblock {\em Linear algebra and its applications}, {\bf 421}(2-3), 284--305.

\bibitem[\protect\citename{Watts \& Strogatz, }1998]{watts1998collective}
Watts, Duncan~J, \& Strogatz, Steven~H. (1998).
\newblock {Collective dynamics of {\lq}small-world{\rq}networks}.
\newblock {\em nature}, {\bf 393}(6684), 440--442.

\bibitem[\protect\citename{Zachary, }1977]{ZacharyKarateClub}
Zachary, Wayne~W. (1977).
\newblock {An Information Flow Model for Conflict and Fission in Small Groups}.
\newblock {\em Journal of anthropological research}, {\bf 33}(4), 452--473.

\bibitem[\protect\citename{Zhu {\em et~al.}\relax, }2014]{Zhu2014}
Zhu, Yaojia, Yan, Xiaoran, \& Moore, Cristopher. (2014).
\newblock {Oriented and degree-generated block models: generating and inferring
  communities with inhomogeneous degree distributions}.
\newblock {\em Journal of complex networks}, {\bf 2}(1), 1--18.

\end{thebibliography}

\renewcommand{\thesection}{\Alph{section}}
\appendix
\section{Appendices}
In the appendix, we will mathematically analyze the convergence of the dynamics defined by the generalized Laplacian framework (Eqn. (\eqref{eq:spreading-operator})). The existence of a communities lead to bottlenecks which prevent dynamics from converging rapidly from the initial distribution. 
For mathematical convenience and better intuitions, we shall use different basis throughout the Appendix. As we have shown previously, the same theoretical result can be applied to any value of $\rho$ with a simple change of basis. For clarity we again abuse the notation $d_i$ and use it as ${d_{\WW}}_i$.

\subsection{Convergence of the dynamic process}
\label{sec:appendix}
We first examine the evolution of the  dynamics defined by a network $G = (V,E,\AA)$ and a dynamic operator $\LL$. Particularly, we are interested in estimating the rate that the dynamic process converges to its its own notion of stationary distribution from some initial distribution, and its relation to the generalized conductance $h_{\sbvec{\calL}}$ that we defined (Eqn. (\ref{eq:hS})). We will use the random walk formulation with $\LL=(\DD_{w}-\WW) (\DD_{w}\TT)^{-1}$ in this subsection.

For an starting vector $\bvec{\mu}$, let $\ttheta_{\bvec{\mu}}(t)[1]$ denote the value of the state variable of vertex $i\in V$ at time $t$ when the initial vector is $\bvec{\mu}$. In other words,  $\ttheta_{\bvec{\mu}}(t) = (\theta_{\bvec{\mu}}(t)[1],...,\theta_{\bvec{\mu}}(t)[n])$ is the
solution of
\begin{equation}\label{eqn:dynamics2}
\ttheta_{\bvec{\mu}}(t) =e^{- \LL t}\cdot \bvec{\mu}
= \sum_{k=0}^{\infty}\frac{(- t)^{k}}{k!}{\LL^{k}\bvec{\mu}},
\end{equation}

If the dynamic process converges when starting from $\bvec{\mu}$ and $\sum_i \mu[i] = 1$, regardless of $\bvec{\mu}$, the random walk always converges to
\begin{eqnarray*}
\lim_{t \rightarrow \infty}\ttheta_{\bvec{\mu}}(t) = \ppi = \left(\pi_1,...,\pi_n\right) = \frac{1}{{\sum_j {d}_j\tau_j}}\left({d}_1\tau_1,...,{d}_n\tau_n\right).
\end{eqnarray*}

We recall that the {\em volume} of a subset $S\subseteq V$ is $\vol_{\LL}(S)=\sum_{j \in S}{d}_j\tau_j$.
Let $\bvec{\mu}_i(S)=\frac{d_i\tau_i}{\vol_{\LL}(S)} $ for $i\in S$ and is $0$ otherwise.

Like in the traditional Laplacian-conductance framework, we will establish that the existence of a community $S\subset V$ with small  $h_{\LL}(S)$ underscores why the dynamics may not converge rapidly from the initial distribution $\bvec{\mu}$.

To this end, let $\boldsymbol{\Theta}_{\bvec{\mu}}(t,S)=\sum_{i\in S}\theta_{\bvec{\mu}}(t)[i]$ be the total probability density in the set $S$.
Let $\bvec{\chi}(S)$ be the indicator vector of $S$, such that ${\bvec{\chi}}_i(S) = 1$ for $i\in S$ and ${\bvec{\chi}}_i(S) = 0$ for $i\not\in S$.
In addition, we have $\bvec{\mu}(S) = (\DD_{w}\TT) \bvec{\chi}(S)/\vol_{\LL}(S)$.

In the analysis of this subsection, we will denote $\KK_{t}={e^{- \LL t}}$, and use the property that
$\KK_{t}=\KK_{t/2}\KK_{t/2}= \KK_{t/2}  (\DD_{w}\TT) \KK^T_{t/2} (\DD_{w}\TT)^{-1} $, where $\KK^T_t$ is the similar consensus formulation of the random walk $\KK_t$ at the same time $t$. By definition, we also have $\KK_t\LL = \LL\KK_t$.

We first establish a lemma showing that while the total probability $\boldsymbol{\Theta}_{\bvec{\mu}}(t,S)$ is propagating out of subset $S$ during every step of the dynamic process, the rate at which it happens (the derivative) is bounded by the generalized conductance $h_{\LL}(S)$ of $S$ given by Eqn.~(\ref{eq:hS}).

\begin{lemma}
\label{th:3}
For $S\subset V$  with $\vol_{\LL}(S)\leq \vol_{\LL}(V)/2$ we have
$\frac{\partial \boldsymbol{\Theta}_{\bvec{\mu}}(t,S)}{\partial t}  \leq 0$.
Moreover,
$$\left|\frac{\partial \boldsymbol{\Theta}_{\bvec{\mu}}(t,S)}{\partial t} \right|\leq h_{\LL}(S)\;.$$
\end{lemma}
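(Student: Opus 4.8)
The plan is to write the derivative $\partial_t\boldsymbol{\Theta}_{\bvec{\mu}}(t,S)$ as a ``flux across the boundary of $S$'' and then bound the two halves of the lemma with \emph{different} algebraic representations of $\boldsymbol{\Theta}$. Throughout, $\bvec{\mu}$ is the centrality-seeded vector $\bvec{\mu}(S)=(\DD_{\WW}\TT)\bvec{\chi}(S)/\vol_{\LL}(S)$ fixed above, $\KK_t=e^{-\LL t}$ with $\LL=(\DD_{\WW}-\WW)(\DD_{\WW}\TT)^{-1}$, and $\KK_t^{T}=e^{-\LL^{CON}t}$ with $\LL^{CON}=(\DD_{\WW}\TT)^{-1}(\DD_{\WW}-\WW)$. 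The preliminary step is to set $\bvec{y}(t)=(\DD_{\WW}\TT)^{-1}\ttheta_{\bvec{\mu}}(t)$, so that $\LL\,\ttheta_{\bvec{\mu}}=(\DD_{\WW}-\WW)\bvec{y}$; the identity $((\DD_{\WW}-\WW)\bvec{y})_i=\sum_j w_{i,j}(y_i-y_j)$ together with the symmetry of $\WW$ (which cancels the pairs with $i,j\in S$) gives
\[
\frac{\partial \boldsymbol{\Theta}_{\bvec{\mu}}(t,S)}{\partial t}=\bvec{\chi}(S)^{T}\frac{d\ttheta_{\bvec{\mu}}}{dt}=-\bvec{\chi}(S)^{T}\LL\,\ttheta_{\bvec{\mu}}=-\sum_{i\in S,\,j\in\bar S}w_{i,j}\bigl(y_i(t)-y_j(t)\bigr).
\]

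For the magnitude bound I would keep this flux form. Using the random-walk/consensus similarity $\KK_t=(\DD_{\WW}\TT)\KK_t^{T}(\DD_{\WW}\TT)^{-1}$, the chosen initial vector gives $\bvec{y}(t)=(\DD_{\WW}\TT)^{-1}\KK_t(\DD_{\WW}\TT)\bvec{\chi}(S)/\vol_{\LL}(S)=\KK_t^{T}\bvec{\chi}(S)/\vol_{\LL}(S)$. Since $\LL^{CON}$ has zero row sums and nonpositive off-diagonal entries, $-\LL^{CON}$ is a Markov generator and $\KK_t^{T}$ is row-stochastic; hence every entry of $\KK_t^{T}\bvec{\chi}(S)$ lies in $[0,1]$, so $0\le y_i(t)\le 1/\vol_{\LL}(S)$ for all $i$ and $|y_i-y_j|\le 1/\vol_{\LL}(S)$. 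Plugging this into the flux expression,
\[
\left|\frac{\partial \boldsymbol{\Theta}_{\bvec{\mu}}(t,S)}{\partial t}\right|\le\frac{1}{\vol_{\LL}(S)}\sum_{i\in S,\,j\in\bar S}w_{i,j}=\frac{\cut_{\WW}(S,\bar S)}{\vol_{\LL}(S)}=h_{\LL}(S),
\]
where the last step uses $\vol_{\LL}(S)\le\vol_{\LL}(V)/2$, so that $\min(\vol_{\LL}(S),\vol_{\LL}(\bar S))=\vol_{\LL}(S)$.

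For the sign claim $\partial_t\boldsymbol{\Theta}\le 0$ the flux form is unhelpful (no individual summand is manifestly signed), so I would instead rewrite $\boldsymbol{\Theta}$ as a nonnegative quadratic form. Using the semigroup factorization $\KK_t=\KK_{t/2}(\DD_{\WW}\TT)\KK_{t/2}^{T}(\DD_{\WW}\TT)^{-1}$ and the symmetry of $\DD_{\WW}\TT$, one checks that with $\bvec{z}(t)=\KK_{t/2}^{T}\bvec{\chi}(S)$,
\[
\boldsymbol{\Theta}_{\bvec{\mu}}(t,S)=\bvec{\chi}(S)^{T}\KK_t\,\bvec{\mu}(S)=\frac{1}{\vol_{\LL}(S)}\,\bvec{z}(t)^{T}(\DD_{\WW}\TT)\,\bvec{z}(t).
\]
Differentiating, with $\dot{\bvec{z}}=-\tfrac12\LL^{CON}\bvec{z}$ and $(\DD_{\WW}\TT)\LL^{CON}=\DD_{\WW}-\WW$, yields
\[
\frac{\partial \boldsymbol{\Theta}_{\bvec{\mu}}(t,S)}{\partial t}=-\frac{1}{\vol_{\LL}(S)}\,\bvec{z}^{T}(\DD_{\WW}-\WW)\bvec{z}=-\frac{1}{2\,\vol_{\LL}(S)}\sum_{i,j}w_{i,j}(z_i-z_j)^{2}\le 0,
\]
since $\DD_{\WW}-\WW$ is the positive semidefinite Laplacian of the interaction graph.

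The main obstacle is the interplay of these two representations: the quadratic-form identity makes the sign transparent but is far too lossy for the rate (bounding $(z_i-z_j)^2\le 1$ only yields $\tfrac12\sum_{i,j}w_{i,j}$, which is unrelated to the cut), whereas the flux expression localizes the bound to the boundary edges of $S$ but carries no built-in sign. Recognizing that the magnitude bound needs the flux form \emph{and} the estimate $0\le\bvec{y}\le 1/\vol_{\LL}(S)$ (which itself relies on $\bvec{\mu}$ being supported in $S$, i.e.\ $\bvec{\mu}=\bvec{\mu}(S)$), while monotonicity needs the quadratic form, is the crux; the two pictures are linked precisely by the similarity $\KK_t=(\DD_{\WW}\TT)\KK_t^{T}(\DD_{\WW}\TT)^{-1}$. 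It is also worth flagging that neither half uses the ``properly scaled'' hypothesis $\tau_i\ge 1$: here it is the localization of the initial distribution, not the normalization of $\TT$, that drives the estimate.
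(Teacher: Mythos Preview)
Your proof is correct. The monotonicity half is essentially the paper's argument: both rewrite $\partial_t\boldsymbol{\Theta}$ as $-\frac{1}{\vol_{\LL}(S)}\bvec{z}^T(\DD_{\WW}-\WW)\bvec{z}$ with $\bvec{z}=\KK_{t/2}^T\bvec{\chi}(S)$ and invoke positive semidefiniteness of the interaction Laplacian.

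For the magnitude bound you take a genuinely different route. The paper differentiates once more, observes $\partial_t^2\boldsymbol{\Theta}_{\bvec{\mu}}(t,S)\ge 0$ by the same quadratic-form trick, and uses this convexity to conclude $|\partial_t\boldsymbol{\Theta}(t)|\le|\partial_t\boldsymbol{\Theta}(0)|$; the value at $t=0$ is then computed directly as $\bvec{\chi}(S)^T(\DD_{\WW}-\WW)\bvec{\chi}(S)/\vol_{\LL}(S)=\cut_{\WW}(S,\bar S)/\vol_{\LL}(S)=h_{\LL}(S)$. Your flux argument instead bounds the derivative \emph{uniformly in $t$} via a maximum principle: the row-stochasticity of $\KK_t^T$ pins $y_i(t)\in[0,1/\vol_{\LL}(S)]$, so every boundary edge contributes at most $w_{ij}/\vol_{\LL}(S)$. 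The paper's route is slicker once one has the second-derivative sign (everything reduces to a single evaluation at $t=0$), while yours is more self-contained---it avoids the second derivative altogether and makes the dependence on the seeding $\bvec{\mu}=\bvec{\mu}(S)$ explicit, as you correctly flag. Your remark that $\tau_i\ge 1$ is unused is accurate for both proofs.
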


\begin{proof}
By definition,
$\boldsymbol{\Theta}_{\bvec{\mu}}(t,S) =  \bvec{\chi}(S)^T \KK_t \bvec{\mu}(S)$. As
$d \KK_t/dt = -\LL \KK_t$, we have

\begin{eqnarray*}
\frac{\partial \boldsymbol{\Theta}_{\bvec{\mu}}(t,S)}{\partial t}  & =& - \bvec{\chi}(S)^T \LL \KK_t \bvec{\mu}(S) \\
& = & -\bvec{\chi}(S)^T  \KK_{t/2} (\DD_{w}-\WW) (\DD_{w}\TT)^{-1} \KK_{t/2} \bvec{\mu}(S) \\
& = & -\bvec{\chi}(S)^T  \KK_{t/2} (\DD_{w}-\WW) \KK^T_{t/2} \bvec{\chi}(S)/\vol_{\LL}(S) \\
& = & -\frac{1}{\vol_{\LL}(S)} \bvec{\chi}(S)^T \KK_{t/2} (\DD_{w}-\WW)\KK^T_{t/2} \bvec{\chi}(S) \\
&\le& 0\;,
\end{eqnarray*}
where the last inequality follows from $\DD_{w}-\WW$ is
positive-semi-definite. We can similarly show that
$\displaystyle \frac{\partial^2\boldsymbol{\Theta}_{\bvec{\mu}}(t,S)}{\partial t^2}  \geq 0$.
Thus,
\begin{eqnarray*}
\left|\frac{\partial \boldsymbol{\Theta}_{\bvec{\mu}}(t,S)}{\partial t}\right| &\leq&
-\left.\frac{\partial\boldsymbol{\Theta}_{\bvec{\mu}}(t,S)}{\partial t} \right| t=0  \\
& = & \frac{\bvec{\chi}(S)^T  (\DD_{w}-\WW) (\DD_{w}\TT)^{-1} (\DD_{w}\TT)\bvec{\chi}(S)}{\vol_{\LL}(S)}\\
& = & \frac{\cut(S,\bar{S})}{\min(\vol(S),\vol(\bar{S}))}\\
& = & h_{\LL}(S)\;.
\end{eqnarray*}
\end{proof}

While the dynamic process always converges to $\ppi$, regardless of $\bvec{\mu}$, Lemma \ref{th:3} effectively puts a limit to how fast this convergence could happen if the starting probability $\bvec{\mu}$ is specified in certian parts of the graph. If there exists a subset $S$ with small generalized conductance that contains the starting seeds, the dynamic process will converge slowly.

\subsection{Generalized Cheeger inequality using vertex state variables}
\label{sec:CheegerTheta}
Following the intuition of the proof of the extended Cheeger inequality (Theorem \ref{th:1}) and the traditional  Laplacian analysis  \cite{LovaszS,SpielmanTengClustering,Chung1997Spectral,ChungLocalHeatKernal}, we consider a sweeping process based on the vertex state variables instead of the eigenvectors. In this subsection, we will use the symmetric formulation and assume that $\LL= (\DD_{w}\TT)^{-1/2} (\DD_{w}-\WW) (\DD_{w}\TT)^{-1/2}$.

Suppose $\ttheta_{\bvec{\mu}}(t)$ is now the same vertex state vector with starting vector $\bvec{\mu}$ under the symmetric basis. We order the vertices of $G$ so that
\begin{equation}\label{eq:sweepOrder}
\frac{\theta_{\bvec{\mu}}(t)[w_1]}{\sqrt{d_{w_1}\tau_{w_1}}} \ge \frac{\theta_{\bvec{\mu}}(t)[w_2]}{\sqrt{d_{w_2}\tau_{w_2}}}  \ge \cdots \ge \frac{\theta_{\bvec{\mu}}(t)[w_n]}{\sqrt{d_{w_n}\tau_{w_n}}}.
\end{equation}
Let $S_{\bvec{\mu}}(t)[i] = \{w_1,\cdots, w_i\}$.
Let $h^*_{\bvec{\mu}}(t)
=\min_i \{h_{\LL}(S_{\bvec{\mu}}(t)[i]): i\in[1:n]\}$.
As the dynamic process evolves, both $\ttheta_{\bvec{\mu}}(t)$ and $h^*_{\bvec{\mu}}(t)$ change over time.

Here we will focus on a simpler initial vector in the form of $\bvec{\mu}^0=\frac{1}{\sqrt{d_u\tau_u}}\chi_u$, i.e. the dynamic process starts from a single vertex $u$. Notice that the total probability in random walk formulation has been scaled by $1/\sqrt{d_u\tau_u}$ under the symmetric basis. We will also simplify subscripts and use $\ttheta_{u}(t) = \ttheta_{\chi_u}(t)$ and $h^*_{u}(t) = h^*_{\chi_v}(t)$.

If the dynamic process converges when starting from $\chi_u$, we have under the symmetric basis,
\begin{eqnarray*}
\lim_{t \rightarrow \infty}\ttheta_{\bvec{\mu}^0}(t) = z_u\ppi = \frac{1}{\vol_{\LL}(V)}\left(\sqrt{{d}_1\tau_1},...,\sqrt{{d}_n\tau_n}\right)\;,
\end{eqnarray*}
where the stationary state is independent of the initial condition, and we can rewrite the constant $z_u = z$.

\begin{theorem}[Generalized Cheeger inequality using vertex state variables]
\label{le:4}
For a graph $G=(V,E,\WW)$, for $t\ge0$, for all $u,v\in V$,
 \begin{equation}
\left|\theta_{u}(t)[v]-\pi_{u}[v]\right|\le  \frac{1}{\sqrt{d_u\tau_u d_v\tau_v}} e^{- t\frac{\left({h^*_{u}}^2(t)+{h^*_{v}}^2(t)\right)}{4}}
\end{equation}
\end{theorem}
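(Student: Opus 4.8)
The plan is to mimic Chung's heat-kernel argument, now phrased with the properly–scaled symmetric operator $\LL=(\DD_{\WW}\TT)^{-1/2}(\DD_{\WW}-\WW)(\DD_{\WW}\TT)^{-1/2}$ and the generalized conductance $h_{\LL}$. First I would record two facts. Writing $\vv_1=(\DD_{\WW}\TT)^{1/2}\overrightarrow{1}/\sqrt{\vol_{\LL}(V)}$ for the unit kernel eigenvector of $\LL$, the conservation law $\uu_1^T\ttheta(t)=\uu_1^T\ttheta(0)$ of Section~\ref{sec:centrality} gives $\langle\vv_1,\ttheta_w(s)-\ppi\rangle=0$ for every $w$ and $s$, and shows that the limit $\pi_u$ is actually independent of $u$, equal to $\ppi=\vv_1/\sqrt{\vol_{\LL}(V)}$, whose $v$-th entry is $\sqrt{d_v\tau_v}/\vol_{\LL}(V)$. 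Also, $\KK_t:=e^{-\LL t}$ is symmetric, $\KK_t=\KK_{t/2}\KK_{t/2}$, and $\KK_s\ppi=\ppi$.

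The core reduction is to split the diffusion in half. Since $\ttheta_u(t)=\KK_{t/2}\KK_{t/2}\tfrac{1}{\sqrt{d_u\tau_u}}\chi_u$ and $\KK_{t/2}\chi_v=\sqrt{d_v\tau_v}\,\ttheta_v(t/2)$, decomposing $\chi_v=\sqrt{d_v\tau_v}(\bvec{z}_v+\ppi)$ with $\bvec{z}_v:=\tfrac{1}{\sqrt{d_v\tau_v}}\chi_v-\ppi$, and using that $\bvec{z}_v\perp\vv_1$ and $\ttheta_u(t/2)-\ppi\perp\vv_1$, one obtains the identity $\theta_u(t)[v]-\pi_u[v]=\sqrt{d_v\tau_v}\,\langle\ttheta_u(t/2)-\ppi,\;\ttheta_v(t/2)-\ppi\rangle$. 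Cauchy--Schwarz then reduces the whole statement to the single-vertex decay estimate $\|\ttheta_w(t/2)-\ppi\|\le\tfrac{1}{\sqrt{d_w\tau_w}}e^{-\frac{t}{4}(h^*_w(t))^2}$ for $w\in\{u,v\}$; combining the two instances and using $\|\bvec{z}_w\|^2=\tfrac{1}{d_w\tau_w}-\tfrac{1}{\vol_{\LL}(V)}$ yields the right-hand side of the theorem, the prefactor being routine bookkeeping.

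It remains to establish the single-vertex estimate. Put $\bvec{h}_s:=\KK_s\bvec{z}_w=\ttheta_w(s)-\ppi$ and $y(s):=\|\bvec{h}_s\|^2$, so $y(0)=\|\bvec{z}_w\|^2\le\tfrac{1}{d_w\tau_w}$ and, by symmetry of $\LL$, $y'(s)=-2\langle\bvec{h}_s,\LL\bvec{h}_s\rangle$. The vector $(\DD_{\WW}\TT)^{-1/2}\bvec{h}_s$ orders the vertices in exactly the sweep order of Eq.~\eqref{eq:sweepOrder} (the two orderings differ only by the additive constant $1/\vol_{\LL}(V)$), and $\bvec{h}_s\perp\vv_1$, so Theorem~\ref{th:1-approx} applies and gives $\langle\bvec{h}_s,\LL\bvec{h}_s\rangle\ge\tfrac12(h^*_w(s))^2\|\bvec{h}_s\|^2$, i.e. $y'(s)\le-(h^*_w(s))^2y(s)$. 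Gr\"onwall's inequality then yields $y(t/2)\le y(0)\exp\!\big(-\int_0^{t/2}(h^*_w(s))^2\,ds\big)$, and since $s\mapsto h^*_w(s)$ is non-increasing along the flow — a consequence of the monotonicity of the Rayleigh quotient $\langle\bvec{h}_s,\LL\bvec{h}_s\rangle/\|\bvec{h}_s\|^2$ under the heat semigroup — we get $\int_0^{t/2}(h^*_w(s))^2\,ds\ge\tfrac{t}{2}(h^*_w(t))^2$, which finishes the estimate; the upper Cheeger bound is not needed anywhere.

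The step I expect to be the main obstacle is the monotonicity (or an adequate substitute) of $s\mapsto h^*_w(s)$: the ODE argument cleanly produces an \emph{integral} of $(h^*_w)^2$ over $[0,t/2]$, and converting that into the value $(h^*_w(t))^2$ at the \emph{final} time $t$ is exactly where one must understand how the best sweep cut of the evolving state vector changes over time. If strict monotonicity of $h^*_w$ is not available in the needed generality, I would instead track the genuinely non-increasing Rayleigh quotient $\langle\bvec{h}_s,\LL\bvec{h}_s\rangle/\|\bvec{h}_s\|^2$ and accept a constant factor, or state the bound with $h^*_w(t/2)$ in place of $h^*_w(t)$; either way the semigroup split, the orthogonality reductions, and the ODE setup are the routine part, and this time-index control is the crux.
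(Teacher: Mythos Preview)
Your route is essentially the paper's: the same semigroup split $\KK_t=(\KK_{t/2})^2$, the same Cauchy--Schwarz reduction to the diagonal case, and then a pointwise-in-time Cheeger bound on the decay of $\|\ttheta_w(\cdot)-\ppi\|^2$. The paper tracks $\theta_u(t)[u]-\pi_u[u]$ (which equals your $y(t/2)$ up to a scalar factor) and redoes the sweep argument of Theorem~\ref{th:1} inline rather than citing Theorem~\ref{th:1-approx}; apart from this packaging the two arguments coincide.

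The step you single out as the crux is precisely where the paper's argument is not rigorous either. After obtaining $\tfrac{\partial}{\partial t}\log(\theta_u(t)[u]-\pi_u[u])\le-\tfrac12{h^*_u}^2(t)$, the paper simply writes $\theta_u(t)[u]-\pi_u[u]\le C\,e^{-t\,{h^*_u}^2(t)/2}$, i.e.\ it integrates as though $h^*_u$ were frozen at its terminal value, with no justification. Your proposed justification does not work: monotonicity of the Rayleigh quotient $R(s)=\langle\bvec h_s,\LL\bvec h_s\rangle/\|\bvec h_s\|^2$ yields a non-increasing \emph{upper} bound $\tfrac12(h^*_w(s))^2\le R(s)$ via Theorem~\ref{th:1-approx}, but that inequality goes the wrong way to force $h^*_w(s)$ itself to be non-increasing, and simple two-bottleneck examples show the best sweep cut of the evolving profile need not move monotonically in $s$. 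Your fallbacks---replacing the exponent by $\int_0^{t/2}(h^*_w(s))^2\,ds$, or by $(h^*_w(t/2))^2$, or by the Rayleigh quotient directly---are the honest versions; the pointwise-at-$t$ formulation in the theorem is not established by either your argument or the paper's.
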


\begin{proof}
If we again let $\KK_{t}={e^{- \LL t}}$ under the symmetric basis,
we have $\KK_{t}-\KK_{\infty}=(\KK_{t/2}-\KK_{\infty})^2 $ and
\begin{eqnarray*}
& &\hspace*{-0.4in}\left|\theta_{u}(t)[v]-\pi_u[v]\right|^2
= \left|\bvec{\chi}_v^{T} ({\KK_t}-{\KK_\infty} ) \bvec{\chi}_u\right|^2  \\
&= &\left|\bvec{\chi}_v^{T} \left(\sum_w({ \KK_{t/2}}-{ \KK_{\infty}})\bvec{\chi}_w \bvec{\chi}^T_w({ \KK_{t/2}}-{ \KK_{\infty}})\right)\bvec{\chi}_u\right|^2 \\
&\le& \left|\left(\sum_w\left(\bvec{\chi}_u^T\left({\KK_{t/2}}-{\KK_{\infty}}\right)\bvec{\chi}_w\right)^2\right)\cdot
	    \left(\sum_w\left(\bvec{\chi}_v^T\left({\KK_{t/2}}-{\KK_{\infty}}\right)\bvec{\chi}_w\right)^2\right)\right| \\
&&\mbox{by Cauchy-Schwartz}  \\
&\le&\left|\left(\bvec{\chi}_u^T\left({ \KK_{t/2}}-{ \KK_{\infty}}\right)\left({ \KK_{t/2}}-{\KK_{\infty}}\right)\bvec{\chi}_u\right)
     \cdot \left(\bvec{\chi}_v^T\left({ \KK_{t/2}}-{ \KK_{\infty}}\right)\left({ \KK_{t/2}}-{\KK_{\infty}}\right)\bvec{\chi}_v\right)\right| \\
& =  &\left|{\theta_{u}(t)[u]}-\pi[u]\right|\cdot \left|{\theta_{v}(t)[v]}-\pi[v]\right|
\end{eqnarray*}
Now it is enough to show that
$\left|\theta_{u}(t)[u]-\pi_u[u]\right|\le  \frac{1}{\sqrt{d_u\tau_u}} e^{- t\cdot\frac{h^*_{u}(t)}{2}}$.
\begin{eqnarray*}
\frac{\partial \theta_{u}(t)[u]}{\partial t}
&=& - \bvec{\chi}_u^T{\KK_{t/2}(\DD_{w}\TT)^{-1/2} (\DD_{w}-\WW) (\DD_{w}\TT)^{-1/2}\KK_{t/2}\bvec{\chi}_{u}}\nonumber \\
&=& - \sum_{i\leq j}w_{ij}\left(\frac{\theta_{u}(t/2)[i]}{\sqrt{d_i\tau_i}} -\frac{\theta_{u}(t/2)[j]}{\sqrt{d_j\tau_j}} \right)^2\;,
\end{eqnarray*}
where we followed the similar derivations in Lemma~\ref{th:3}, and used the fact that for any vector $f$,
$$
f^T(\DD_{w}-\WW)f = \sum_{i\leq j} w_{ij}(f[i]-f[j])^2\;.
$$

Also, because for all $w \in V$, $\pi_u[w]  = z\sqrt{{d}_w\tau_w} $, we have
\begin{eqnarray*}
\theta_{u}(t)[u]-\pi_u[u] = {\sum_w (\theta_{u}(t/2)[w]-\pi_u[w])^2} =
{\sum_w {d}_w\tau_w \left(\frac{\theta_{u}(t/2)(w)}{\sqrt{d_w\tau_w}}-z\right)^2}
\end{eqnarray*}
Therefore,
\begin{eqnarray*}
\frac{\frac{\partial\theta_{u}(t)[u]}{\partial t}}{\theta_{u}(t)[u]-\pi_u[u]}
&=&-\frac{\sum_{i\leq j}w_{ij}\left(\frac{\theta_{u}(t/2)[i]}{\sqrt{d_i\tau_i}} -\frac{\theta_{u}(t/2)[j]}{\sqrt{d_j\tau_j}} \right)^2}
      {\sum_w {d}_w\tau_w \left(\frac{\theta_{u}(t/2)(w)}{\sqrt{d_w\tau_w}}-z\right)^2} \nonumber \\
&=&-\frac{\sum_{i\leq j}w_{ij}\left(\left(\frac{\theta_{u}(t/2)[i]}{\sqrt{d_i\tau_i}}-z\right) - \left(\frac{\theta_{u}(t/2)[j]}{\sqrt{d_j\tau_j}}-z\right) \right)^2}
      {\sum_w {d}_w\tau_w \left(\frac{\theta_{u}(t/2)(w)}{\sqrt{d_w\tau_w}}-z\right)^2} \nonumber \\
& = & -\frac{\sum_{i\leq j}w_{ij}\left(g[i]-g[j]\right)^2} {\sum_w {d}_w\tau_w (g[w])^2} \nonumber \\
\end{eqnarray*}
if we make the substitution $g[v] = \frac{\theta_{u}(t/2)[v]-z\sqrt{d_v\tau_v}}{\sqrt{d_v\tau_v}}$.

Recall as we analyzed in Section~\ref{sec:centrality}, for all $t$, the projection of the conservative process $\ttheta_{u}(t)$ in the direction
of $\bvec{\pi}$ is always $z\bvec{\pi}$, we have $(\theta_{u}(t/2)-z\ppi) \perp \ppi$. In other words,
$\forall t, \sum_v(\theta_{u}(t/2)[v]-z\sqrt{d_v\tau_v}) \cdot \sqrt{d_v\tau_v}=0$, and
$$\sum_v g[v]d_v\tau_v = \sum_v (\theta_{u}(t/2)[v]-z\sqrt{d_v\tau_v})\sqrt{d_v\tau_v} = 0.$$
Therefore, the vector $g$ satisfies all the condition as it does in the proof of Theorem \ref{th:1}.
With the same argument, by sweeping the vertices according to the order $\frac{\theta_{\bvec{\mu}}(t)[w_1]}{\sqrt{d_1\tau_1}} \ge \frac{\theta_{\bvec{\mu}}(t)[w_2]}{\sqrt{d_2\tau_2}}  \ge \cdots \ge \frac{\theta_{\bvec{\mu}}(t)[w_n]}{\sqrt{d_n\tau_n}}$, we have the \emph{Generalized Cheeger inequality using vertex state variables}:
\begin{eqnarray*}
\frac{\frac{\partial \theta_{u}(t)[u]}{\partial t}}{\theta_{u}(t)[u]-\pi_u[u]} \leq -\frac{{h^*_{u}}^2(t)}{2}\;.
\end{eqnarray*}

This means
$$
\frac{\partial}{\partial t} \log(\theta_{u}(t)[u]-\pi_u[u]) \leq -\frac{{h^*_{u}}^2(t)}{2}\;,
$$
which leads to
$$\theta_{u}(t)[u] - \pi_u[u] \leq C_1+C_2  e^{-t\cdot\frac{{{h^*_u}^2(t)}}{2}}.$$
By considering the boundary conditions $\theta_{u}(0)[u]=1/\sqrt{d_u\tau_u}$ and $\lim_{t\rightarrow \infty} (\theta_{u}(t)[u]-\pi_u[u]) =0$, we set $C_1=0$ and $C_2=1/\sqrt{d_u\tau_u}$. Therefore,
$$\left|\theta_{u}(t)[v]-\pi_u[v]\right| \le \left|{\theta_{u}(t)[u]}-\pi[u]\right|^{1/2}\cdot \left|{\theta_{v}(t)[v]}-\pi[v]\right|^{1/2}
\le \frac{1}{\sqrt{d_u\tau_u d_v\tau_v}} e^{- t\frac{\left({h^*_{u}}^2(t)+{h^*_{v}}^2(t)\right)}{4}}$$
\end{proof}

Define ${h_S^*}^2(t) = \min_{u\in S} {h_u^*}^2(t)$, and let $\Pi_{u}(S)=\sum_{i\in S}\pi_{u}[i] \sqrt{d_i\tau_i}$ and $\boldsymbol{\Theta}_{u}(t,S)=\sum_{i\in  S}\theta_{u}(t)[i] \sqrt{d_i\tau_i}$ under the symmetric basis. We can then bound the rate of the convergence  of $\boldsymbol{\Theta}_{u}(t,S) - \Pi_{u}(S)$ using the following corollary.
\begin{corollary}
\label{co:2}
For any subset $S\subset V $ in a network $G = (V,E,\WW)$
  with $\vol_{\LL}(S) \leq
\vol_{\LL}(V)/2$, we have
\begin{eqnarray*}
\boldsymbol{\Theta}_{u}(t,S) - \Pi_{u}(S)
 \leq \frac{|S|}{\sqrt{d_u\tau_u}} e^{- t\frac{{h_S^*}^2(t)}{2}}\;,
\end{eqnarray*}
where $|S|$ represents the number of vertices in the set $S$.
\end{corollary}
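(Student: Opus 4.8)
The plan is to read off the corollary directly from Theorem~\ref{le:4}, which already bounds each coordinate $|\theta_u(t)[v]-\pi_u[v]|$; everything else is the triangle inequality together with a monotonicity remark. Unfolding the definitions $\boldsymbol{\Theta}_{u}(t,S)=\sum_{i\in S}\theta_{u}(t)[i]\sqrt{d_i\tau_i}$ and $\Pi_{u}(S)=\sum_{i\in S}\pi_{u}[i]\sqrt{d_i\tau_i}$,
$$\boldsymbol{\Theta}_{u}(t,S)-\Pi_{u}(S)=\sum_{i\in S}\bigl(\theta_{u}(t)[i]-\pi_{u}[i]\bigr)\sqrt{d_i\tau_i}\le\sum_{i\in S}\bigl|\theta_{u}(t)[i]-\pi_{u}[i]\bigr|\sqrt{d_i\tau_i}.$$
Next I would apply Theorem~\ref{le:4} with $v=i$ to each term. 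The bound supplied by the theorem carries a factor $1/\sqrt{d_u\tau_u\,d_i\tau_i}$, which exactly cancels the weight $\sqrt{d_i\tau_i}$ being summed, so
$$\boldsymbol{\Theta}_{u}(t,S)-\Pi_{u}(S)\le\sum_{i\in S}\frac{1}{\sqrt{d_u\tau_u}}\,e^{-t\left({h^*_{u}}^2(t)+{h^*_{i}}^2(t)\right)/4}.$$

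It remains to make the exponent uniform in $i$. By definition ${h_S^*}^2(t)=\min_{w\in S}{h^*_{w}}^2(t)$, so ${h^*_{i}}^2(t)\ge{h_S^*}^2(t)$ for every $i\in S$, and---reading the statement in its intended setting, with the seed vertex $u\in S$---also ${h^*_{u}}^2(t)\ge{h_S^*}^2(t)$. Hence ${h^*_{u}}^2(t)+{h^*_{i}}^2(t)\ge 2{h_S^*}^2(t)$, and since $x\mapsto e^{-x}$ is decreasing we get $e^{-t\left({h^*_{u}}^2(t)+{h^*_{i}}^2(t)\right)/4}\le e^{-t{h_S^*}^2(t)/2}$ for every $i\in S$. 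Summing the $|S|$ identical bounds yields $\boldsymbol{\Theta}_{u}(t,S)-\Pi_{u}(S)\le\dfrac{|S|}{\sqrt{d_u\tau_u}}\,e^{-t{h_S^*}^2(t)/2}$, which is the claim.

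Because Theorem~\ref{le:4} is already in hand, there is no genuinely hard step here; the corollary is essentially a one-line consequence once the notation is unpacked. The two points that need attention are bookkeeping ones: confirming that the degree weights cancel cleanly against the $1/\sqrt{d_u\tau_u\,d_i\tau_i}$ prefactor, and noticing that getting the exponent $-t{h_S^*}^2(t)/2$ (as opposed to the weaker $-t{h_S^*}^2(t)/4$ obtained from $e^{-t{h^*_{u}}^2(t)/4}\le1$ alone) relies on the seed vertex $u$ belonging to $S$, so that ${h^*_{u}}^2(t)\ge{h_S^*}^2(t)$ as well. The hypothesis $\vol_{\LL}(S)\le\vol_{\LL}(V)/2$ is inherited from the surrounding development---it guarantees $\vol_{\LL}(S)\le\vol_{\LL}(\bar S)$, so that $h_{\LL}(S)$ and the sweep quantities defining $h^*$ are evaluated on the smaller side---but it plays no further role in this particular estimate.
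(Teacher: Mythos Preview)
Your proof is correct and follows essentially the same route as the paper: expand the definitions, apply the triangle inequality, invoke Theorem~\ref{le:4} termwise so that the $\sqrt{d_i\tau_i}$ weights cancel, and then replace both ${h^*_u}^2(t)$ and ${h^*_i}^2(t)$ by the minimum ${h_S^*}^2(t)$ before summing $|S|$ identical bounds. Your explicit remark that the sharper exponent requires $u\in S$ is a useful clarification that the paper leaves implicit.
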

\begin{proof}
\begin{eqnarray*}
\boldsymbol{\Theta}_{u}(t,S) - \Pi_{u}(S)
& = & \sum_{v\in S} \theta_{u}(t)[v] \sqrt{d_v\tau_v} - \sum_{v\in S}\pi_{u}[v] \sqrt{d_v\tau_v} \\
&\leq&\sum_{v\in S} \left|\theta_{u}(t)[v]-\pi_u[v]\right|\sqrt{d_v\tau_v}\\
&&\mbox{by Theorem 5}\\
&\le& \frac{1}{\sqrt{d_u\tau_u}}e^{- t \frac{{h_u^*}^2(t) +{h_v^*}^2(t) }{4}} \left(\sum_{v\in S} 1 \right)\\
&\le& \frac{|S|}{\sqrt{d_u\tau_u}}e^{- t \frac{{h_S^*}^2(t)}{2}}
\end{eqnarray*}
\end{proof}

\end{document}